\newtheorem{thm}{Theorem}
\newtheorem{lemma}[thm]{Lemma}
\newtheorem{cor}[thm]{Corollary}
\newtheorem{prop}[thm]{Proposition}
\newtheorem{defn}[thm]{Definition}
\newtheorem{claim}[thm]{Claim}
\newcommand{\vx}{\mathbf{x}}
\newcommand{\R}{\mathbb{R}}
\newcommand{\ED}{\mathrm{ED}}
\newcommand{\cK}{K}
\newcommand{\cT}{\mathcal{T}}
\newcommand{\ST}{\mathrm{ST}}
\newcommand{\SD}{\mathrm{DISJ}}
\newcommand{\AND}{\mathrm{AND}}
\newcommand{\OR}{\mathrm{OR}}
\newcommand{\eps}{\epsilon}
\newcommand{\MCF}{\mathrm{MCF}}
\newcommand{\ml}[1]{\todo[inline, color=cyan]{#1 \hfill --Mike}}
\newcommand{\eqdef}{\stackrel{\text{def}}{=}}
\newcommand{\timeG}[1]{G^{(#1)}}
\newcommand{\CC}{R^{(2)}}
\newcommand{\CCo}{R^{(\to)}}
\newcommand{\tO}{\tilde{O}}
\newcommand{\tOm}{\tilde{\Omega}}
\newcommand{\tT}{\tilde{\Theta}}
\newcommand{\tOmega}{\tilde{\Omega}}
\newcommand{\cP}{\mathcal{P}}
\newcommand{\cQ}{\mathcal{Q}}
\newcommand{\const}{4}
\newcommand{\cA}{A}
\newcommand{\cB}{B}
\newcommand{\vy}{\mathbf{y}}
\newcommand{\set}[1]{\left\{#1\right\}}
\newcommand{\alphaCMG}{\alpha_{\mathrm{CMG}}}
\renewcommand{\tilde}{\widetilde}
\newcommand{\tauroute}{\tau_{\mathrm{route}}}
\newcommand{\ceil}[1]{{\left\lceil#1\right\rceil}}
\newcommand{\EQ}{\mathrm{EQ}}
\newcommand{\ORSD}{\mathrm{OR\operatorname{-}DISJ}}
\newcommand{\ANDSD}{\mathrm{AND\operatorname{-}DISJ}}
\newcommand{\TRB}{\mathrm{TRIBES}}
\newcommand{\SYM}{\mathrm{SYMM}}
\newcommand{\ORSDTWO}{\mathrm{OR\operatorname{-}DISJ\operatorname{-}2PARTY}}
\newcommand{\ANDSDTWO}{\mathrm{AND\operatorname{-}DISJ\operatorname{-}2PARTY}}
\title{Tight Network Topology Dependent Bounds on Rounds of Communication\footnote{AC's research is partially supported by a
Ramanujan fellowship of the DST. ML's research is supported in part by NSF-CCF1526771. SL's research is partly supported by NSF grant CCF-1566356. AR's research is supported in part by NSF grant CCF-1319402.}}
\author{\textsc{Arkadev Chattopadhyay}\footnotemark[4] \and \textsc{Michael Langberg}\footnotemark[2] \and \textsc{Shi Li}\footnotemark[3]  \and \textsc{Atri Rudra}\footnotemark[3]}
\date{\footnotemark[4]~~School of Technology and Computer Science,\\
Tata Institute of Fundamental Research\\
\texttt{arkadev.c@tifr.res.in}\\
\vspace*{2mm}
\footnotemark[2]~~Department of Electrical Engineering,\\
University at Buffalo, SUNY\\
\texttt{mikel@buffalo.edu}\\
\vspace*{2mm}
\footnotemark[3]~~Department of Computer Science and Engineering,\\
University at Buffalo, SUNY\\
\texttt{\{shil,atri\}@buffalo.edu}
}
\begin{document}
\maketitle

\setcounter{page}{0}
\thispagestyle{empty}

\begin{abstract}
We prove tight network topology dependent bounds on the round complexity of computing well studied $k$-party functions such as set disjointness and element distinctness. Unlike the usual case in the CONGEST model in distributed computing, we fix the function and then vary the underlying network topology. This complements the recent such results on total communication that have received some attention. We also present some applications to distributed graph computation problems.

Our main contribution is a proof technique that allows us to reduce the problem on a general graph topology to a relevant two-party communication complexity problem. However, unlike many previous works that also used the same high level strategy, we do {\em not} reason about a two-party communication problem that is induced by a cut in the graph. To `stitch' back the various lower bounds from the two party communication problems, we use the notion of timed graph that has seen prior use in network coding. Our reductions use some tools from Steiner tree packing and multi-commodity flow problems that have a delay constraint.
\end{abstract}

\newpage

\section{Introduction}

In this paper, we prove bounds on the number of rounds needed to compute a given function in a distributed manner.  In our paper a problem is a tuple $(f,G,\cK)$, where $G=(V,E)$ is the underlying communication graph (which is assumed to be {\em undirected}), $\cK\subseteq V$ is a set of $k \eqdef |\cK|$ terminals (or players), and we are interested in computing the function $f:\left(\{0,1\}^n\right)^{\cK}\to\{0,1\}$: i.e. all terminals in $\cK$ need to know the final answer after the protocol is done.\footnote{It turns out if one terminal knows the answer then it can send the answer to all others via a simple Steiner tree based protocol whose cost is dominated by all our bounds.}
Unless stated otherwise, the $k$ inputs are assigned in worst-case manner to the terminals in $\cK$.

All communication in a protocol is point-to-point (as opposed to the broadcast mode of communication) and a bit transmitted over an edge $e=(u,v)$ is private to $u$ and $v$. Further, we assume a synchronous model and in each {\em round}, each node $u\in V$ sends a (potentially different) bit\footnote{By creating parallel edges, our results extend to the case where in each round each edge $e\in E$ can send $c_e$ bits. However, for notational simplicity we will only consider the case of $c_e=1$ in this paper; that is, $G$ is a simple graph.} to each of its neighbors. We assume the two directions of an edge $(u, v)$ can be used simultaneously. We will further assume that the protocols have full knowledge of $G$ and all nodes (for randomized protocols) use public randomness.\footnote{Since all parties know the entire topology $G$, then one can generalize Newman's argument~\cite{newman} to our setting. In our case, in the private randomness protocol corresponding to the original public randomness protocol, one party will needs to send $O(\log{nk})$ bits to all other $k-1$ parties. This can be accomplished by a simple Steiner tree protocol, whose cost can be absorbed in all of our bounds. We will stick with public randomness since it makes the description of our protocols easier to follow.}
In this paper, we are interested in the round complexity: i.e. the total number of rounds needed by a protocol to compute the output. Note that this notion corresponds to the time taken by the distributed protocol to compute the answer.
Given a problem $\mathcal{P}$ we will use $R_{\eps}(\mathcal{P})$ to denote the minimum number of rounds needed for the worst-case input of any randomized protocol that errs on all input with probability at most $\eps$. WLOG for randomized protocols one can assume that $\eps=1/3$ and we will in most cases refer to $R_{1/3}(\mathcal{P})$ by just $R(\mathcal{P})$. Note that $R_0(\mathcal{P})$ denotes the {\em deterministic} round complexity.\footnote{We note that in communication complexity literature, $R_0$ is used to denote the zero-error randomized communication complexity but we use this convention since it makes our theorem statements cleaner.} To simplify our presentation we will ignore in our bounds poly-logarithmic factors in both the size of $G$ and $n$. In particular, we will use the notation $\tO(\cdot), \tOm(\cdot)$ and $\tT(\cdot)$ to denote the usual asymptotic notation that ignore poly-log factors (in size of $G$ and $n$).

Our model above is very similar to the well studied CONGEST model in distributed computing~\cite{peleg-book} with the following differences. First, for proving upper bounds on the CONGEST model, it is assumed that a node in $V$ only knows about its neighbors while in our setup we assume that the protocol knows the structure of $G$. This makes our lower bounds potentially stronger (though this makes our upper bounds weaker than the distributed protocol bounds in the CONGEST model). 
Second, typically in the distributed computing literature the function $f$ itself depends on the underlying network $G$ (e.g. check if a given subgraph of $G$ is a spanning tree~\cite{das-sarma}) while in our setup the function $f$ is independent of the network topology $G$. This assumption makes sense in the current state of affairs where many such functions are computed in a distributed manner over the same network. Recent works (including those of Drucker at al.~\cite{DKO13} and Klauck et al.~\cite{klauck}) have proved  bounds for functions for the special case where $G$ is the complete graph. Finally, in most of the existing work it is assumed that $\cK=V$, while we consider the more general case when $\cK\subseteq V$. This more general case makes sense e.g. in a data warehouse where any given function that needs to be computed could only depend on inputs that are stored at some subset of the servers.

Recently there has been work that deals with the graph communication model as above but instead of minimizing the round complexity, these results are for the case of minimizing the {\em total communication} of the protocols.  (We note that the total communication corresponds to the {\em message complexity} of distributed protocols.) Most of the work in this area has been for specific classes of $G$. For example, the early work of Tiwari~\cite{T87} considered deterministic total communication complexity on cases of $G$ being a path, grid or ring graph. There has been a recent surge of interest for proving lower bounds on total communication for the case when $G$ is a star~\cite{PVZ12,WZ12,WZ13,BEOPV13,WZ14,CM15}. 
This work was generalized to arbitrary topology by Chattopadhyay et al.~\cite{CRR14} who proved tight bounds for certain functions for {\em all} network topologies. A followup work extended the results to some more functions~\cite{CR15}.

Both of these strands of work (on round complexity and total communication) coincide for the special case when $G$ is just an edge. Note that in this case we have two players and the model coincides with the very well studied model of two-party communication complexity introduced by Yao~\cite{Yao79}, which has proved to be an extremely worthwhile model to study with applications in diverse areas of theoretical computer science.


Given the importance of round complexity in distributed computing, it is natural to ask 
\vspace*{-5pt}
\begin{quote}
\textit{Can we prove tight topology sensitive bounds 
	for round complexity?}
\end{quote}
\vspace*{-5pt}

We would like to point out that optimal protocols for total communication need not be optimal for round complexity and vice-versa.  To see this, consider the case where $G$ contains two terminals $\set{a, b}$ and many parallel edge-disjoint paths between $a$ and $b$: there is one path of length $1$, and $\sqrt{n}$ paths of length $\sqrt{n}$. $a$ receives $n$ bits and wants to send those $n$ bits to $b$. The optimal protocol for total communication would be to send the $n$ bits on the length-$1$ path, which has $O(n)$ total communication but takes $\Omega(n)$ rounds. On the other hand, an (almost) optimal protocol in terms of number of rounds would be splitting the $n$ bits into $\sqrt{n}$ blocks of $\sqrt{n}$ bits and send each block using one of the $\sqrt{n}$ paths of length $\sqrt{n}$. This protocol has round complexity $O(\sqrt{n})$ but total communication $\Omega(n \sqrt{n})$.

In this work, we prove tight bounds on round complexity for several families of functions.
We believe that the proof techniques presented, and not just the concrete results, are of independent interest.
Pretty much all of the previous work in the total communication regime proved their lower bounds via two steps. The first step was to `divide' up the problem into a bunch of two party communication complexity problems. The second step is to `stitch' together the lower bounds for these two party communication problems. Our proofs also have the same two step structure but our implementations of both these steps are very different. The first step in previous works is implemented by constructing a family of cuts and then considering the two-party problem induced on each cut. In our proofs, we consider a more general set of edges $E'$ (which might not form a cut) and then simulate our original protocol on $G$ projected down to $E'$ via a two-party communication protocol. The second step in previous work used a common hard distribution across all chosen cuts and then used linearity of expectation to `add up' the lower bounds. In contrast, we use the notion of a {\em timed graph} (that is independent of the hard distributions) so that we can use different hard distributions for the different two party communication problems to deduce something about the {\em same} timed graph. The stitching then occurs by proving various `gluing' results on Steiner tree packing and multi-commodity flow problems on graphs. This difference allows us to prove lower bounds for both randomized and deterministic protocols with the same proof while e.g. the results of~\cite{CRR14} could not prove tight deterministic bounds for functions whose zero-error randomized complexity is much smaller than its deterministic complexity.

\subsection{Overview of our results}
We prove our bounds for two classes of functions, as in \cite{CR15}.
Roughly speaking in the total communication setting, one class has an optimal protocol that combines inputs upwards on a Steiner tree and in the second class of problems the optimal protocol involves all players sending their inputs to a designated node. Next, we define two functions that are representatives of these two classes. (See Theorems~\ref{thm:sd--gen-lb-main} and~\ref{thm:ed-lb-main} for the exact definitions of these two classes.)

We start our overview with the well studied $k$-party set disjointness problem, which is defined as follows. Each player $u\in\cK$ gets a string $\vx_u\in\{0,1\}^n$ (which can be thought of as a subset of $[n]\eqdef \{1,2,\dots,n\}$) and the output is
\[\SD_{\cK, n}\big(\{\vx_u\}_{u\in\cK}\big)=\bigvee_{i\in [n]} \bigwedge_{u\in \cK} \vx_u[i],\]
i.e. the output is $1$ if and only if all the $k$ sets have an element in common.

For $\SD_{\cK, n}$ as shown in~\cite{CR15}, the optimal protocol (up to poly-logarithmic factors) for total communication is to first compute the minimum Steiner tree on $G$ with $\cK$ as the set of terminals and then to compute the intersection of the $k$ sets in a bottom-up fashion. For the round complexity, it seems natural to try this scheme in `parallel': i.e. try to {\em pack} as many edge disjoint Steiner trees of small diameter as possible and to compute the set intersection on appropriate parts of the universe $[n]$ up the trees in parallel. It turns out that this is indeed the optimal protocol for round complexity. 
We prove the following result (where $\ST(G,\cK,\Delta)$ denotes the optimal value of Steiner tree packing with terminals $\cK$ and diameter $\Delta$ in $G$; formal definition appears in Section~\ref{sec:prelims}):
\begin{thm}
\label{thm:k-SD}
For any graph $G$ and subset of players $\cK$, we have for every $\eps\ge 0$
\[R_{\eps}(\SD_{\cK, n},G,\cK)  = \tT\left( \min_{\Delta\in [|V|]} \left(\frac{n}{\ST(G,\cK,\Delta)}+\Delta\right)\right).\]
\end{thm}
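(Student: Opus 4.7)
\emph{Upper bound.} Let $\Delta^\ast$ minimize the right-hand side of the theorem and set $T^\ast := \ST(G,\cK,\Delta^\ast)$, so that by definition of the packing we can fix $T^\ast$ pairwise edge-disjoint Steiner trees $\cT_1,\ldots,\cT_{T^\ast}$ of $\cK$ in $G$, each of diameter at most $\Delta^\ast$. The protocol partitions $[n]$ into $T^\ast$ nearly equal-sized blocks $B_1,\ldots,B_{T^\ast}$, picks a root $r_i\in\cK$ of each $\cT_i$, and on tree $\cT_i$ computes $b_i := \bigvee_{j\in B_i}\bigwedge_{u\in \cK}\vx_u[j]$ by pipelining the bitwise AND of the $n/T^\ast$ relevant bits from the terminals toward $r_i$. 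Standard pipelining on a diameter-$\Delta^\ast$ tree finishes in $\tO(\Delta^\ast + n/T^\ast)$ rounds, and the $T^\ast$ trees run in parallel because they are edge-disjoint. Each $r_i$ then broadcasts the single bit $b_i$ back down $\cT_i$ in an additional $\tO(\Delta^\ast)$ parallel rounds; since every $\cT_i$ spans $\cK$, every terminal learns every $b_i$ and outputs $\bigvee_i b_i$. The total is $\tO(\Delta^\ast + n/T^\ast)$, matching the claimed bound.

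\emph{Lower bound --- the reduction.} Let $R := R_\eps(\SD_{\cK,n},G,\cK)$. My plan is to prove the single inequality
\[
R \cdot \ST(G,\cK,R) \;\ge\; \tOmega(n),
\]
which is enough: plugging $\Delta = R$ into the theorem's minimum gives $\min_\Delta\bigl(n/\ST(G,\cK,\Delta)+\Delta\bigr) \le n/\ST(G,\cK,R)+R = \tO(R)$, so $R = \tOmega(\min_\Delta(\cdots))$. For any partition $\cK = \cA\sqcup\cB$, the $k$-party disjointness output equals the 2-party disjointness of $\bigwedge_{u\in\cA}\vx_u$ against $\bigwedge_{u\in\cB}\vx_u$, so by the classical randomized $\Omega(n)$ lower bound on 2-party $\SD$, any simulation of the $R$-round distributed protocol must move $\tOmega(n)$ bits between Alice and Bob for every partition. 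It therefore suffices to exhibit, for \emph{some} partition, a two-party simulation of an arbitrary $R$-round protocol that uses only $\tO\bigl(R\cdot \ST(G,\cK,R)\bigr)$ bits.

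\emph{Main obstacle.} The naive simulation fixes a vertex partition $V = V_\cA\sqcup V_\cB$ separating $\cA$ from $\cB$ and charges $R\cdot \abs{\cutE{V_\cA}}$ bits, which only yields the weaker Steiner min-cut bound --- exactly the gap between min-cut and packing that the theorem needs to close. To improve this, I would lift the protocol into the \emph{timed graph} $\timeG{R}$ on vertex set $V\times\{0,\ldots,R\}$ (with edges $(u,t)(v,t+1)$ for $(u,v)\in E$ together with self-loops $(v,t)(v,t+1)$) and allow Alice and Bob to own a \emph{time-varying} subset of its vertices, initialised so that $(u,0)$ is owned by whoever holds $\vx_u$. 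The 2-party simulation cost then equals the number of bichromatic timed edges; I would express the minimum of this quantity over valid colourings as a linear program whose dual I aim to identify with a fractional packing of $\cK$-Steiner trees of diameter at most $R$ in $G$. Formalising this LP-dual ``stitching'' on the timed graph, using the delay-constrained Steiner-tree packing and multicommodity-flow tools advertised in the introduction, is the step I expect to be the real obstacle; a standard randomized rounding should then convert the fractional bound to the desired $\tO\bigl(R\cdot \ST(G,\cK,R)\bigr)$ with at most polylogarithmic loss, completing the lower bound.
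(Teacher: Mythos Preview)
Your upper bound has a small but real gap: $\ST(G,\cK,\Delta)$ is defined as the value of the \emph{fractional} packing LP, so you cannot simply ``fix $T^\ast$ pairwise edge-disjoint Steiner trees.'' The paper closes this by applying Raghavan--Thompson randomized rounding to obtain $\tOmega(\ST(G,\cK,\Delta^\ast))$ integral edge-disjoint trees, which costs only a $\tO(1)$ factor. With that fix your upper-bound argument matches the paper's.

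Your lower-bound strategy, however, diverges from the paper's and the key step is not just an obstacle but appears to be the wrong duality. For a \emph{fixed} bipartition $(\cA,\cB)$ of $\cK$, minimising bichromatic timed edges over time-varying colourings of $\timeG{R}$ is a min-cut problem; its LP dual is a max-flow between the $\cA$-side and the $\cB$-side, not a Steiner-tree packing spanning all of $\cK$. Minimising further over partitions gives you a diameter-constrained analogue of the Steiner \emph{min-cut}, and relating that to the diameter-constrained Steiner-tree \emph{packing} is precisely the step the paper warns does not follow from known cut/packing theorems once diameter constraints are present (Lau's theorem breaks down there). So the LP you are hoping for does not have the dual you want.

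The paper's route is genuinely different. Instead of working with one partition of $\cK$, it applies the two-party simulation result (Theorem~\ref{thm:2pSD}) to \emph{every pair} $a,b\in\cK$: fixing the remaining inputs to all-ones reduces $\SD_{\cK,n}$ to two-party $\SD$, so $\tauroute(G,\{a,b\},\Omega(n))\le 4R$, yielding $\Omega(n/R)$ fractional edge-disjoint $a$--$b$ paths of length $O(R)$ in $G$. The real work is then a combinatorial ``stitching'' lemma (Theorem~\ref{thm:matching}, based on the bicriteria technique of Marathe et al.): from these pairwise short-path packings one constructs a fractional Steiner-tree packing of value $\tOmega(n/R)$ and diameter $\tO(R)$, by repeatedly extracting large short-path matchings via a Steiner-tree-packing argument on weighted subgraphs and contracting. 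This gives $\ST(G,\cK,\tO(R))\ge\tOmega(n/R)$, which is exactly your target inequality. So your goal $R\cdot\ST(G,\cK,\tO(R))\ge\tOmega(n)$ is right; what you are missing is that the bridge from communication lower bounds to Steiner-tree packing goes through \emph{pairwise} routing plus a tree-assembly argument, not through a single-partition LP dual.
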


The other function is the element distinctness problem (shortened to $\ED$), which is defined as follows. Each player $u\in\cK$ gets a string $\vx_u\in\{0,1\}^n$ (which can be thought as a number in $[0,2^n-1]$) and the output is
\[\ED_{\cK, n}\big(\{\vx_u\}_{u\in\cK}\big)=\bigwedge_{u\neq v\in \cK} \vx_u\neq \vx_v.\]

For $\ED_{\cK, n}$ as shown in~\cite{CRR14}, the optimal randomized protocol for total communication is for the $k$ players to send the hash of their inputs to the median node w.r.t. $\cK$ in $G$. 
A natural protocol would be to run a multi-commodity flow problem where the demands correspond to each of the $k$ players sending their bits to the median node. However, it turns out that this is {\em not} optimal for round complexity. Intuitively the main reason this fails is because the median node has too much incoming flow. The next natural idea would be to somehow have a different multi-commodity flow problem where each node has a `balanced load'. Indeed we are able to show this to be possible by using a small circuit for $\ED_{\cK, n}$ as our guide.  Let $\tau_{\MCF}(G,\cK,n')$ denote the smallest number of rounds $\tau$ needed to simultaneously route $n'/k$ units flow from $u$ to $v$ for every $u,v\in\cK$. Then we show that
\begin{thm}
\label{thm:ED}
For any $G$ and $\cK$, we have for any constant $\eps>0$
\[R_{\eps}(\ED_{\cK, n},G,\cK) =\tT\left(\tau_{\MCF}(G,\cK,1)\right)\]
and
\[R_0(\ED_{\cK, n},G,\cK) =\tT\left(\tau_{\MCF}(G,\cK,n)\right).\]
\end{thm}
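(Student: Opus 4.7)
My plan is to prove matching upper and lower bounds in both settings; the deterministic and randomized cases differ only in the two-party complexity of equality testing that is plugged into a common framework.

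For the upper bound, both protocols follow a ``balanced load'' design guided by a small circuit for $\ED$. In the randomized case, each player $u$ computes an $\tilde O(1)$-bit pairwise-independent hash $h(\vx_u)$ using public randomness; the terminals aggregate these $k$ hashes using the multicommodity flow primitive underlying $\tau_{\MCF}(G,\cK,1)$ (possibly combined with a Steiner-tree aggregation) so that at least one terminal can detect any collision, which is then broadcast. In the deterministic case, I split each input $\vx_u$ into $k$ chunks of $n/k$ bits; in phase~1, player $u$ sends its $i$-th chunk to player $i$, which is an all-pairs routing of $n/k$ flow per pair, realized in $\tau_{\MCF}(G,\cK,n)$ rounds. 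Player $i$ then computes the bit $e_i^{uv}$ indicating whether the $i$-th chunks of $\vx_u$ and $\vx_v$ agree. Since $\ED=\bigwedge_{u\ne v}\bigvee_i \neg e_i^{uv}$, phase~2 reduces to an instance of $\SD_{\cK,O(k^2)}$ on the universe of pairs and is solved via Theorem~\ref{thm:k-SD}, with the additional round cost absorbed into $\tilde O(\tau_{\MCF}(G,\cK,n))$ by comparing Steiner-tree packing with multicommodity flow.

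For the lower bound, I follow the paper's decompose-and-stitch strategy. For each pair $(u,v)\in\cK^2$, I fix a hard distribution that sets all other players' inputs to pairwise distinct dummy values, so that $\ED$ restricts to $\neg\EQ(\vx_u,\vx_v)$; the two-party complexity of $\EQ$ gives per-pair lower bounds of $\Omega(1)$ randomized and $\Omega(n)$ deterministic. Each per-pair bound is lifted to an information-flow demand on the common timed graph $G^{(\tau)}$. The key ``gluing'' step, in the spirit of the Steiner-tree-packing gluing used elsewhere in the paper, combines these $\binom{k}{2}$ per-pair demands into a single feasible concurrent multicommodity flow on $G^{(\tau)}$ with per-pair rate $\tilde\Omega(1/k)$ or $\tilde\Omega(n/k)$. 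These match the normalizations in the definitions of $\tau_{\MCF}(G,\cK,1)$ and $\tau_{\MCF}(G,\cK,n)$ and yield the claimed $\tau\geq\tilde\Omega(\tau_{\MCF}(G,\cK,1))$ and $\tau\geq\tilde\Omega(\tau_{\MCF}(G,\cK,n))$.

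The hardest step is this stitching. Per-pair two-party bounds alone give per-pair max-flow (equivalently min-cut) constraints on $G^{(\tau)}$, which are in general much weaker than a feasible concurrent multicommodity flow, and each constraint is obtained under its own hard distribution so the constraints are not immediately compatible. The timed-graph framework is precisely what bypasses this: we carefully account for how much communication each edge of $G^{(\tau)}$ can simultaneously contribute to different pairs, and the $1/k$ (respectively $n/k$) normalization emerges because each bit transmitted by the protocol can serve information flow for up to $\Theta(k)$ distinct pairs at once.
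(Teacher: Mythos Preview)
Your lower-bound stitching has a genuine gap, and it is the crux of the theorem. Per-pair reductions to $\neg\EQ$ give, for each $(u,v)$, that $\tauroute(G,\{u,v\},\Theta(1))\le O(\tau)$ (resp.\ $\tauroute(G,\{u,v\},\Theta(n))\le O(\tau)$), but these $\binom{k}{2}$ individual routings need not be simultaneously feasible, and your ``each bit serves $\Theta(k)$ pairs'' is an assertion, not an argument. Concretely, let $G$ be two $(k/2)$-cliques on $\cA$ and $\cB$ joined by a single bridge edge, with $\cK=\cA\cup\cB$: every per-pair $\tauroute(G,\{u,v\},1)=O(1)$, so your per-pair bounds only force $\tau\ge\Omega(1)$, yet $\tau_{\MCF}(G,\cK,1)=\Theta(k)$ and indeed $R_\eps(\ED_{\cK,n},G,\cK)=\tilde\Theta(k)$. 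The paper's mechanism is structurally different: it applies Theorem~\ref{thm:2pSD} not to single pairs but to every \emph{balanced partition} $(\cA,\cB)$ of $\cK$; the induced two-party function is a disjointness instance with $\CC_\eps\ge\Omega(\min(|\cA|,|\cB|))$ (resp.\ $\CC_0\ge\Omega(n\min(|\cA|,|\cB|))$), which via Lemma~\ref{lemma:edps-between-AB} yields that many edge-disjoint $\cA$-to-$\cB$ paths in $G^{(O(\tau))}$, balanced so that each terminal is an endpoint of $n'$ of them. These per-partition matchings are precisely the matching-player moves in the Khandekar--Rao--Vazirani cut-matching game (Theorem~\ref{thm:CMG}), which after $O(\log^2 k)$ rounds yields an $\Omega(\log k)$-expander on $\cK$ embedded in $G^{(O(\tau))}$; a short lazy random walk on the expander then routes the concurrent flow. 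The partition-level information is essential --- pair-level information is provably too weak on the example above.

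Your deterministic upper bound also breaks for small $n$. Phase~2 is (the negation of) $\SD_{\cK,\Theta(k^2)}$, costing $\tilde\Theta\big(\min_\Delta(k^2/\ST(G,\cK,\Delta)+\Delta)\big)$ by Theorem~\ref{thm:k-SD}; on the two-clique graph above every Steiner tree must use the bridge, so $\ST(G,\cK,\Delta)\le 1$ and Phase~2 alone costs $\Omega(k^2)$, whereas $\tau_{\MCF}(G,\cK,n)=\Theta(nk)$ --- for $n=\Theta(\log k)$ you overshoot the target by a $\tilde\Theta(k)$ factor. The paper instead feeds an AKS sorting-network circuit for $\ED$ (Lemma~\ref{lem:ed-ckt-bound}) into the circuits-to-protocols Lemma~\ref{lem:ckt-protocol}; the decisive property is that every level of that circuit has only $O(k)$ gates, so each of the $\tilde O(n)$ phases is an $\tilde O(1)$-bounded demand. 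Your randomized sketch has the same issue in milder form: ``aggregating $k$ hashes'' at a single node is a $k$-bounded, not $1$-bounded, demand, and neither a bare MCF primitive nor a Steiner-tree aggregation fixes this (collision detection is not in $g\circ\SYM$); the sorting circuit applied to the $O(\log k)$-bit hashes is again what provides the balance.
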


In particular, we generalize the construction in Drucker et al.~\cite{DKO13} to show how to convert any bounded fan-in and fan-out circuit for any function $f$ into a protocol for $f$. Drucker et al. proved such a result for the special case of $G$ being the complete graph.\footnote{However,~\cite{DKO13} do not lose any $\tO(1)$ factors like we do.} More specifically, we show that
\begin{lemma}
\label{lem:ckt-protocol}
Let $f:\left(\{0,1\}^n\right)^k\to \{0,1\}$ have a  circuit with constant fan-in and constant fan-out gates and depth $d$. Further, each level $i\in [d]$ has $s_i$ gates in it (and let $s=\sum_{i=1}^d s_i$). Then
\[R_0(f,G,\cK) \le \sum_{i=1}^d \tO\left(\tau_{\MCF}\left(G,\cK,\frac{s_i}{k}\right)\right).\]
Finally, we can upper bound the above by $\tO\left(d\cdot \tau_{\MCF}\left(G,\cK,\frac{s}{k}\right)\right)$ as well as $\tO\left(\frac{s}{k}\cdot \tau_{\MCF}\left(G,\cK,1\right)\right)$.
\end{lemma}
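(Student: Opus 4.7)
The plan is to convert the given circuit $C$ for $f$ into a distributed protocol by simulating it level by level, using public randomness to balance the inter-level communication, and then invoking $\tau_{\MCF}$ to do the actual bit transport between levels.

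\emph{Assignment and simulation.} Using the shared public randomness, independently assign each gate of $C$ to a uniformly random owner in $\cK$; since every terminal sees both the circuit and the randomness, all of them agree on the assignment. We then proceed level by level. At level $i$, every terminal locally evaluates each of its owned gates (using the bits delivered in the previous phase, or the original inputs when $i=1$), and then forwards each such output bit to the owners of its successor gates at level $i+1$. After level $d$, the terminal owning the output gate broadcasts the answer along a Steiner tree, whose cost is dominated by the rest of the protocol.

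\emph{Analysis of the per-level routing.} At the transition from level $i$ to level $i+1$ the constant-fan-out assumption bounds the total number of transmitted bits by $O(s_i)$. Fix any ordered pair $(u,v)\in\cK\times\cK$: the number of bits $u$ must send to $v$ equals the number of (predecessor, successor) pairs of $C$ such that the predecessor is owned by $u$ and the successor by $v$; since the successor's owner is independent and uniform over $\cK$, each such pair contributes a $\{0,1\}$ indicator of mean $1/k$. A Chernoff bound together with a union bound over the $k^2$ ordered pairs and the $d$ levels shows that with probability $1-o(1)$ every pair's load at every transition is $\tO(s_i/k^2)$. The definition of $\tau_{\MCF}(G,\cK,s_i/k)$ then furnishes a fractional multi-commodity flow that simultaneously routes that much flow between every pair in that many rounds, and a standard packet-routing conversion (e.g.\ Leighton--Maggs--Rao) turns it into an integral bit-schedule at the cost of a polylogarithmic blow-up absorbed in $\tO$. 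Summing the per-level bounds over $i$ yields the first claimed bound.

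\emph{The two weaker bounds.} Using $s_i\le s$ and the monotonicity of $\tau_{\MCF}$ in its third argument, $\sum_{i=1}^d \tau_{\MCF}(G,\cK,s_i/k)\le d\cdot\tau_{\MCF}(G,\cK,s/k)$. For the other bound, running the unit schedule $n'$ times gives $\tau_{\MCF}(G,\cK,n')\le n'\cdot\tau_{\MCF}(G,\cK,1)$, so combined with $\sum_i s_i/k=s/k$ we get $\sum_{i=1}^d \tau_{\MCF}(G,\cK,s_i/k)\le (s/k)\cdot\tau_{\MCF}(G,\cK,1)$.

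\emph{Main obstacle.} The most delicate point is the level-$1$ transition: the sources of the bits transmitted into level $1$ are the input owners, which are \emph{not} themselves chosen at random, so the clean ``indicator of mean $1/k$'' picture above does not apply verbatim. One has to exploit the constant fan-out of inputs together with the constant fan-in of level-$1$ gates in order to push the same balancedness argument through, and this is where most of the book-keeping lives. The other non-trivial ingredient is the scheduling step that turns a fractional MCF solution into an integral packet schedule without losing more than polylog factors in the round count; we fold that loss into the $\tO(\cdot)$ as well.
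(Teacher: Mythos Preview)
Your high-level approach is the paper's: assign gates independently and uniformly to terminals, simulate the circuit level by level, and route each inter-level batch of wire values as a multi-commodity flow instance bounded via $\tau_{\MCF}$; the two coarser bounds are also derived the same way.

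There is, however, a real gap in the load analysis. Your Chernoff claim that ``every pair's load at every transition is $\tilde O(s_i/k^2)$'' breaks when $s_i/k^2 \ll \log(kd)$: in that regime Chernoff gives only an additive $O(\log(kd))$ tail, so some ordered pair can carry $\Theta(\log k)$ bits even though the mean is $o(1)$. Routing such a demand via the uniform schedule implicit in the definition of $\tau_{\MCF}(G,\cK,s_i/k)$---which supplies exactly $s_i/k^2$ capacity per ordered pair---then costs $\tilde O(\tau_{\MCF}(G,\cK,k))$ rather than $\tilde O(\tau_{\MCF}(G,\cK,s_i/k))$. The paper avoids this by bounding only the \emph{per-player} load (each terminal sends and receives $\tilde O(s_i/k)$ bits, by a direct balls-and-bins argument on $\Theta(s_i)$ gates) and then invoking a separate lemma (Lemma~\ref{lem:bounded-dem} in the paper): any $n'$-bounded demand, uniform or not, routes in $2\,\tau_{\MCF}(G,\cK,n')$ rounds via a two-phase spread-then-collect scheme. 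That lemma is what absorbs the non-uniformity of the random demand, and your write-up has no analogue of it. The fix is easy---switch to per-player bounds and add that lemma---but as written the per-level bound does not close for small $s_i$.

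On the level-$1$ issue you flag: the paper sidesteps it differently, by treating the $nk$ input bits as level-$0$ ``constant gates'' that are \emph{also} randomly reassigned under $\pi$, paying a level-$0$ redistribution cost of $\tilde O(\tau_{\MCF}(G,\cK,n))$; this folds into the level-$1$ term because the constant fan-in constraint forces $s_1 \ge nk/2$ whenever every input bit is used.
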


Like the results of Drucker et al., this connection implies a barrier to proving quantitatively better lower bounds. In particular, if we could exhibit an explicit function $f$ that we could prove requires asymptotically larger number of rounds than $\tO\left(d\cdot \tau_{\MCF}\left(G,\cK,n\right)\right)$, then we would have shown a super-linear size lower bound for circuits computing $f$ with depth $d$. This extends the results in~\cite{DKO13}, which showed that for the clique topology proving any super-constant lower bound on rounds for an explicit function $f$ will imply a corresponding (new) circuit lower bounds.

We also apply our general lower bounds to prove lower bounds for the following distributed graph problems. In these problems each player $u\in\cK$ gets a graph $H_u$ as input and the goal is to check if the overall graph $H=\bigcup_{u\in \cK}H_u$ has certain properties. In particular, we consider the following four problems that check if $H$  (i) is connected, (ii) contains a triangle, (iii) is acyclic, (iv) is connected. 
We show a lower bound of $\tOm\left(\tau_{\MCF}\left(G,\cK,\frac{|V(H)|+|E(H)|}{k}\right)\right)$. 
Our lower bounds extend some of the lower bounds in~\cite{klauck} to general topologies. In particular, we generalize the lower bounds for connectivity to general topologies while~\cite{klauck} does not provide any lower bounds for the triangle detection problem. (However, we note that~\cite{klauck}'s lower bound for connectivity holds for random distribution of $H$ while our lower bounds assume worst-case distribution. So our lower bounds are proven in a weaker setting.)
We also show, by simple adaptation of upper bounds in~\cite{DKO13,klauck}, that as long as $H$ is large, these bounds are also tight. 

Finally, we highlight a technical result that we believe is of independent interest.  We will use $\CC_{\eps}(f)$ to denote the randomized round complexity for the two party case (we assume Alice and Bob can send a bit to each other simultaneously in each round), where we allow Alice and Bob to have inputs of different sizes. We will also need to consider $\CCo_{\eps}(f)$ for the {\em one-way} round complexity where (say) Alice (or Bob) sends a single message to Bob (Alice resp.) and Bob (Alice resp.) computes the answer based solely on the single message he (she resp.) received from Alice (Bob resp.) as well as his (her resp.) input.
Let $\tauroute(G,\{u,v\},n')$ denote the minimum number of rounds in which $u$ can route $n'$ bits to $v$ in $G$; since $G$ is undirected, this is the same as the minimum number of rounds in which $v$ can route $n'$ bits to $u$ in $G$; thus the notation is well-defined.

\begin{thm}
	\label{thm:2pSD}
	For any function $f:\{0,1\}^n\times \{0,1\}^n\to \{0,1\}$, and any graph $G$ we have that
	\[\tauroute(G,\{a,b\},\CC_{\eps}(f)) \le 4 {R_{\eps}(f,G,\{a,b\}}).\]
\end{thm}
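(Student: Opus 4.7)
The plan is a max-flow/min-cut argument on the time-expanded (``timed'') graph $\tilde G^{(T)}$: the vertex set is $V\times\{0,\ldots,T\}$, with unit-capacity directed arcs $u_t\to v_{t+1}$ and $v_t\to u_{t+1}$ for each $\{u,v\}\in E$ and $t<T$, plus infinite-capacity holding arcs $v_t\to v_{t+1}$. A standard identification gives $\tauroute(G,\{a,b\},n')=\min\{T:\text{max $(a_0,b_T)$-flow in }\tilde G^{(T)}\ge n'\}$, and because the holding arcs have infinite capacity every minimum $(a_0,b_T)$-cut is necessarily a monotone sequence $S_0\subseteq\cdots\subseteq S_T$ with $a\in S_0$ and $b\notin S_T$.

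Let $R=R_\eps(f,G,\{a,b\})$ and let $\Pi$ be an $R$-round protocol on $G$ computing $f$ with error $\eps$. The theorem reduces to showing that every monotone cut $(S_t)_{t=0}^{4R}$ of $\tilde G^{(4R)}$ has value $V\ge\CC_\eps(f)$. I plan to prove this by extracting, from $\Pi$ and the cut, a two-party protocol for $f$ of round complexity at most $V$; by the definition of $\CC_\eps(f)$ as the minimum such quantity this then forces $V\ge\CC_\eps(f)$, and the theorem follows.

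In the extracted simulation Alice plays every node in $S_t$ at cut time $t$ and Bob plays every node in $V\setminus S_t$, and each of $\Pi$'s $R$ rounds corresponds to four consecutive steps of the $4R$-step cut timeline; this dilation is the origin of the constant $4$. The 2-party bits exchanged are exactly those carried along cut arcs in $E(S_t,V\setminus S_{t+1})$ and its time-reverse, with each cut arc occupying exactly one 2-party round, so the total 2-party round count equals $V$. The four sub-steps per $\Pi$-round apportion these exchanges between (i) Alice-to-Bob forward bits, (ii) Bob-to-Alice forward bits, and (iii)--(iv) scheduling slack that accommodates ``handovers'' whenever a node joins $S_{t+1}$ from $V\setminus S_t$.

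The main technical obstacle is these handovers: when a node $v$ transitions to Alice's side she must somehow recover $v$'s state, which is a priori unbounded. The key observation is that $v$'s state is fully determined by the sequence of bits $v$ has received over time, and the monotone-cut bookkeeping implies that each such bit is either already held by Alice (its source was in some earlier $S_s$ that she already simulates) or else travels along an arc that Bob transmits as part of the cut value $V$. Making this accounting rigorous -- certifying that every bit is charged exactly once, and that the fourfold dilation leaves enough room to schedule all handovers without collision -- is the technical core that justifies the constant $4$ in the theorem.
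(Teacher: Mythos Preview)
Your proposal starts in the right place---the timed graph, max-flow/min-cut, and the observation that any min $(a_0,b_T)$-cut is a monotone family $S_0\subseteq\cdots\subseteq S_T$---and this matches the paper's opening move (its Lemma~\ref{lemma:hop}). But your simulation strategy diverges from the paper's, and the divergence creates the very gap you flag as ``the main technical obstacle.''

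You have Alice simulate the \emph{growing} set $S_t$ and Bob the shrinking complement, which forces a handover whenever a vertex $v$ joins Alice's side: she must then reconstruct $v$'s full history. Your assertion that every bit of that history ``is either already held by Alice\ldots\ or else travels along an arc that Bob transmits as part of the cut value $V$'' is not justified and in general is false. All arcs in the min $(a_0,b_T)$-cut point from the $A$-side to the $B$-side; bits that reached $v$ from a neighbor that was on Bob's side at the time (for a concrete case, take the path $a\text{--}v\text{--}b$ and consider the bits $x^s_{b,v}$ received before $v$'s handover) are not carried by any cut arc, so there is nothing to charge them to. A four-fold time dilation provides scheduling slack but does not manufacture a budget for these bits.

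The paper avoids handovers altogether. From the monotone family in the depth-$2\tau$ timed graph (not $4\tau$) it extracts a \emph{static} labeling $\ell_v=\min\{t:v\in S_t\}$, and then both parties simulate \emph{shrinking} sets: at protocol round $t$, party $a'$ maintains the state of every $v$ with $\ell_v\le 2\tau-t$, and $b'$ maintains every $v$ with $\ell_v\ge t+1$. Because both sets only shrink, no one ever has to acquire a new vertex's state; a vertex is simply dropped once the owning party can no longer keep up. The bits that must cross are exactly the $x^t_{u,v}$ with $\ell_u<t<\ell_v$ (sent $a'\to b'$) or the symmetric condition $\ell_v<2\tau+1-t<\ell_u$ (sent $b'\to a'$), and each undirected edge contributes at most $2\max\{|\ell_u-\ell_v|-1,0\}$ such bits. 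The constant $4$ then factors as $2\times 2$: one $2$ from taking the timed graph at depth $2\tau$, and one $2$ because each edge is charged once in each direction of the two-party exchange.
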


Notice the above inequality implies that $\frac{\tauroute(G,\{a,b\},\CCo_{\eps}(f))}{{R_{\eps}(f,G,\{a,b\}}) } \le 4  \ceil{\frac{\CCo_{\eps}(f)}{\CC_{\eps}(f)}}$, by Claim~\ref{claim:tau-sub-additive} (stated in Section~\ref{sec:prelims}).
The technical result implies that when using the obvious one-way  communication algorithm to solve $f$ for the case of $k=2$, the penalty we incur on {\em any} graph is no worse than a constant factor when used on the case when $G$ is just the edge $(u,w)$ (i.e. the traditional two party communication complexity setting).

\subsection{Overview of our proof techniques.} 

We now present an overview of our proof techniques specialized to the case of $\SD_{\cK, n}$ and $\ED_{\cK, n}$. To begin with we will assume that $n$ is much larger than the size of $G$. In this case the common way to prove a lower bound is via the so called {\em communication bottleneck} argument: if $m$ bits have to transmitted over a cut $C$ with $\delta(C)$ crossing edges, then the obvious lower bound on round complexity is $\frac{m}{|\delta(C)|}$. 
	
	We begin with $\SD_{\cK, n}$.  Using the communication bottleneck argument and the linear lower bound on the two-party communication complexity of $\SD$~\cite{Razborov92}, we get a lower bound of $\Omega\left(\frac{n}{\lambda_{\cK}(G)}\right)$ for $\SD_{\cK, n}$ (where $\lambda_{\cK}(G)$ is the size of the min cut separating $\cK$). For the upper bound, we invoke a result of Lau\cite{Lau07} to argue that we can pack $T=\Omega\left(\lambda_{\cK}(G)\right)$ many {\em edge disjoint} Steiner trees in $G$ with terminals $\cK$. If $n$ is large enough then the number of rounds needed is $O\left(\frac nT\right) = O\left(\frac{n}{\lambda_{\cK}(G)}\right)$, giving a tight bound. 
	
	We now consider the case of $\ED_{\cK, n}$ with large enough $n$.  The communication bottleneck argument along with known relations between multi-commodity flow and sparsity of a graph~\cite{LR99,LLR95} gives us a randomized lower bound of $\tOm(\tau_{\MCF}(G,\cK,1))$.
	The trivial protocol of all players sending (the hash of their) inputs to one player $r$ gives us an upper bound of $\tO(\tau_{\MCF}(G,\cK,k))$. The mis-match is because in this case $r$ has an incoming flow of $\Omega(k)$. To avoid this we adapt the argument in~\cite{DKO13} to have $\tO(1)$ phases, where each phase is a more `balanced' multi-commodity flow problem that can be solved in $\tO(\tau_{\MCF}(G,\cK,1))$ number of rounds. The flow problems in these phases are guided by a small circuit that computes the $\ED$ function, as in  Lemma~\ref{lem:ckt-protocol}.

	It turns out that for the above results for $\ED_{\cK, n}$ to hold (for randomized complexity), $n$ has to be {\em exponentially} larger than the size of $G$, which is not ideal (and something we would like to avoid assuming). It turns out that the reason we need $n$ to be large enough for the above arguments is that the results for Steiner tree packing~\cite{Lau07} and those of multicommodity flow~\cite{LR99,LLR95} are only proved without any constraints on the diameter of the Steiner trees and the dilation (i.e. the length of the longest flow) of the multicommodity flow. In our arguments, we take both of these factors into account. In particular, for the upper bounds we simply `pick' the best Steiner tree packing and multicommodity flows based on delay constraints.  
	
	However, the Steiner-tree packing result of \cite{Lau07} and the flow-cut-gap results of \cite{LR99,LLR95} break down if we impose the diameter constraints on the Steiner-trees, or the dilation constraint on the multi-commodity flow.  We need to use other techniques to handle these constraints. For the Steiner-tree packing problem with diameter constraints, we apply the techniques for bi-criteria network design in\cite{MRS98}. In particular,  \cite{MRS98} gave an $\big(O(\log |V|), O(\log |V|)\big)$-approximation algorithm for the bounded diameter minimum Steiner tree problem, where the first $O(\log |V|)$ factor is for the violation of the diameter constraint and the second $O(\log |V|)$ factor is for the cost of the Steiner tree. Using the duality between maximizing Steiner-tree packing and minimizing cost of a Steiner tree, we are able to give a good Steiner-tree packing that approximately satisfies the diameter constraint.   
	
	For the multi-commodity flow problem with the dilation constraint, we could not give a good bi-criteria approximation for all demand functions.  However, for certain demand functions (including the demand function corresponding to $\ED$), we can apply  the cut-matching-game technique in \cite{KRV09} for constructing expanders. For these demand functions, for every equal partition $(\cA, \cB)$ of $\cK$, we can route many matchings between $\cA$ and $\cB$ using short paths. The cut-matching game technique allows us to find a small-congestion ``embedding'' of an expander in $G$ using these paths, which along with the properties of the expander allow us to route the demand appropriately. 
	
	A crucial ingredient in our lower bound proofs is the notion of a {\rm timed graph}, which was introduced in the context of network coding in the study of cyclic networks \cite{ahlswede2000network}. Timed graphs have found several applications in the network coding literature including in the study of time constrained network communication, memory constrained network communication, and {\em gossip} protocols, e.g., \cite{haeupler2011optimality,wang2014sending,chekuri2015delay}.
	Informally, the $\tau$-timed version graph of $G$, which we denote by $G^{(\tau)}$ is a graph with $\tau+1$ layers (with $\tau+1$ copies of $V$) with the edge set of $E$ repeated between the $i$ and $(i+1)$th layer (for $0\le i<\tau$). The crucial property of the timed graph is that there exists protocol $\Pi$ over $G$ with round complexity $\tau$ if and only if there exists a protocol on $G^{(\tau)}$ where each edge is used at most once.
	
Finally, we present an overview of how we argue for the presence of good Steiner tree packing and good multi commodity flow in $G^{(\tau)}$, where $\tau$ is the number of rounds taken by the optimal protocol. The obvious thing to try here would be to again appeal to two party communication complexity lower bounds on cuts on $G^{(\tau)}$ itself and then appeal to known results relating Steiner tree packing and multicommodity flow on directed graphs to the corresponding functions on cuts. There are two issues. First, for Steiner tree packing and multi commodity flow the integrality gap for general directed graphs are either unknown or unbounded (which is not helpful). We get around this issue by explicitly using the fact that $G^{(\tau)}$ is a special graph: i.e. a timed graph of an {\em undirected} graph. The second issue is that directly applying the two party communication complexity lower bounds across a cut in $G^{(\tau)}$ is typically not enough since these only imply a lower bound on number of crossing edges in {\em both} directions across a cut in $G^{(\tau)}$, while our argument require a lower bound on the number of edges going from `left' to `right' in a cut. We address this issue by invoking the two party communication complexity not across a cut in $G^{(\tau)}$ but invoking it on a carefully chosen subgraph of $G^{(\tau)}$.

\subsection*{Organization of the paper}

We start off with some preliminaries in Section~\ref{sec:prelims}. We prove Theorem~\ref{thm:2pSD} in Section~\ref{sec:k=2}. We prove Theorem~\ref{thm:k-SD} (and its generalization Theorem~\ref{thm:sd--gen-lb-main}) in Section~\ref{sec:sd}. We prove Theorem~\ref{thm:ED} (and its generalization Theorem~\ref{thm:ed-lb-main}) in Section~\ref{sec:ed}. Finally, we present our bounds for distributed graph problems in Section~\ref{app:apps}.

\section{Preliminaries}
\label{sec:prelims}

\subsection{Notations} 
Let $f: \big(\{0, 1\}^n\big)^{\cK} \to \{0, 1\}$ be a function, and $\cA, \cB \subseteq \cK$ be two disjoint non-empty sets and $\tilde \vx \in \big(\{0, 1\}^n\big)^{\cK\setminus (\cA \cup \cB)}$. Define $f_{\cA, \cB, \tilde \vx} : \big(\{0, 1\}^n\big)^{\cA} \times \big(\{0, 1\}^n\big)^{\cB} \to \{0, 1\}$ to be the function such that for every $\vx_{\cA} \in \big(\{0, 1\}^n\big)^{\cA}$  and $\vx_{\cB} \in \big(\{0, 1\}^n\big)^{\cB}$, we have $f_{\cA, \cB, \tilde \vx}(\vx_{\cA}, \vx_{\cB}) = f(\tilde \vx \circ \vx_{\cA} \circ \vx_{\cB} )$, where $\tilde \vx \circ \vx_{\cA} \circ \vx_{\cB}$ denotes the vector $\vy \in \left(\{0, 1\}^n\right)^\cK$ such that $\vy[v] = \vx_\cA[v]$ if $v \in \cA$, $\vy[v] = \vx_\cB[v]$ if $v \in \cB$ and $\vy[v] = \tilde \vx[v]$ if $v \in \cK \setminus (\cA \cup \cB)$. For every pair $A, B \subseteq K$ of disjoint non-empty sets, we use $G_{\cA, \cB}$ to denote the graph $G$ with vertices in $\cA$ identified, and vertices in $\cB$ identified.  We shall use $v_\cA$ and $v_\cB$ to denote the two new vertices in $G_{\cA, \cB}$.

\subsection{The timed graph}

We now define a graph related to $G$ that will be crucial in our arguments. Given an integer $\tau\ge 1$, we define a directed and layered graph $\timeG{\tau}=(V_{\tau},E_{\tau})$, where
\[V_{\tau}=V\times [0,\tau],\]
and $E_{\tau}$ is defined as follows. For every $(u,v)\in E$, we have the following edges in $E_{\tau}$:
\[\{ ((u,i),(v,i+1))| 0\le i<\tau\} \cup \{ ((v,i),(u,i+1))| 0\le i<\tau\}.\]
Finally we add to $E_{\tau}$ infinitely many parallel edges $( (u,i), (u,i+1))$ for every $u\in V$ and $0\le i<\tau$ that we call {\em memory edges}.\footnote{We only need large enough memory edges. However, we choose to say there are infinite of them to avoid having to specify the exact number of such edges.}

A useful property of $G^{(\tau)}$ is that given a protocol with congestion $c$ on $G^{(\tau)}$ one can easily construct a protocol with no congestion on $G^{(c\cdot \tau)}$, i.e. we can get a valid protocol on $G$ with delay $c\tau$. This makes our arguments simpler since in the Steiner tree packing and multi-commodity flow solutions we can tolerate $\tO(1)$ congestion.

\subsection{Graph background}

We recall two graph problems that have been studied extensively and will be crucial in our analysis.

\paragraph{Steiner Tree Packing.} We begin with the problem of Steiner tree packing. Given the graph $G$ and set of terminals $\cK$, we call a tree $T$ a Steiner tree if it connects all vertices in $\cK$ only using edges in $G$. We consider the (fractional) Steiner tree packing problem, which we will represent by the following well-studied LP. In particular, we would be interested in Steiner trees with diameter (between any two terminals) of $\Delta$-- let $\cT_{\Delta,\cK}$ denote the set of all such Steiner trees.
\[\max \sum_{T\in\cT_{\Delta,\cK}} z_T \qquad \text{s.t.} \qquad \sum_{T\ni e} z_T \le 1 \text{ for every }e\in E, \qquad z_T\ge 0, \forall T \in \cT_{\Delta, \cK}.\]

Let $\ST(G,\cK,\Delta)$ denote the optimal value of the above LP.

%

\paragraph{Multi-commodity flow.} We will also use the well-studied multi-commodity flow problem. A demand function $D$ is some vector in $\R_{\geq 0}^{\cK \times \cK}$. In this demand, we need to send $D_{u, v}$ units of flow from $u$ to $v$ for every $u \in \cK, v \in \cK$.  Since we are interested in the minimum number of rounds to route the demand function $D$, it is convenient to view the demand as directed and not necessarily symmetric: for every $u, v \in \cK$, we need to send $D_{u, v}$ units of flow from $u$ to $v$ and $D_{v, u}$ units of flow from $v$ to $u$, where $D_{u,v}$ and $D_{v, u}$ may be different.   In the problem, we assume that in each round for every edge $(u, v) \in E$, we can send at most 1 unit flow from $u$ to $v$ and at most 1 unit flow from $v$ to $u$.

\begin{defn}
	For any real number $n' > 0$, we say a demand function $D  \in \R_{\geq 0}^{\cK \times \cK}$ is $n'$-bounded, if for every $u \in \cK$, we have $\sum_{v \in \cK} D_{u,v}\le n'$ and $\sum_{v \in \cK} D_{v, u}\le n'$.
\end{defn}

\begin{defn}
	For every $n' > 0$, let $\tau_{\MCF}(G, \cK, n')$ be the minimum number of rounds $\tau$ such that we can simultaneously send $n'/k$ units flow from $u$ to $v$ in $G$, for every $u, v \in \cK$.  For every $a, b \in V$, let $\tauroute(G, \{a, b\}, n')$ denote the minimum number of rounds $\tau$ such that $a$ can send $n'$ units flow to $b$.
\end{defn}
In other words, $\tau_{\MCF}(G, \cK, n')$ is minimum number of rounds to route $D$, for the function $D$ with $D_{u, v} = n'/k$ for every $u, v \in \cK$. Note that
\begin{prop}
\label{prop:mcf-clique}
When $G$ is a clique on $k$ vertices, we have
\[\tau_{\MCF}(G, \cK, n') =\left \lceil \frac{n'}{k}\right\rceil.\]
\end{prop}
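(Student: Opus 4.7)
My plan is to prove the proposition by establishing matching upper and lower bounds.

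For the lower bound, I would use the capacity constraint at a single terminal. Fix any $u \in \cK$. In the multi-commodity flow problem, $u$ must send a total of $(k-1) \cdot (n'/k)$ units of flow to the other $k-1$ terminals (summed over all destinations). Since $u$ has exactly $k-1$ incident edges in the clique (one to each other vertex), and each edge can carry at most one unit of flow out of $u$ per round, the total out-capacity from $u$ over $\tau$ rounds is at most $\tau(k-1)$. Therefore we need $\tau(k-1) \ge (k-1) \cdot n'/k$, i.e., $\tau \ge n'/k$. Since $\tau$ is an integer, $\tau \ge \lceil n'/k \rceil$.

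For the upper bound, I would exhibit an explicit routing in $T := \lceil n'/k \rceil$ rounds. Because $G$ is a clique, for every ordered pair $(u, v)$ with $u \neq v$ the edge $(u, v)$ exists in $G$. Route the entire $n'/k$ units of demand from $u$ to $v$ directly along this edge, at uniform rate $\frac{n'/k}{T} \le 1$ per round. Since we have $\frac{n'/k}{T} \le 1$, this respects the per-round capacity of the edge in the $u \to v$ direction (and by symmetry also in the $v \to u$ direction, which carries the symmetric demand). After $T$ rounds, every pair has exactly $n'/k$ units delivered, so the routing succeeds.

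The two bounds match, giving $\tau_{\MCF}(G, \cK, n') = \lceil n'/k \rceil$. I do not expect any substantive obstacle here: the clique is the easiest possible topology because every terminal is adjacent to every other terminal, so the naive direct-edge routing is optimal, and the trivial cut bound at a single node already matches it.
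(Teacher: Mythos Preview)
Your proof is correct. The paper states this proposition without proof (prefacing it with ``Note that''), treating it as an immediate observation; your argument via the out-degree capacity bound for the lower bound and direct single-edge routing for the upper bound is exactly the natural one.
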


 We first note a simple property of $\tau_{\MCF}$ and $\tauroute$.
\begin{claim}
	\label{claim:tau-sub-additive}
	For every $n'>0$ and $n''>0$ and $a, b \in V$, we have $\tau_{\MCF}(G, \cK, n'') \leq \ceil{\frac{n''}{n'}}\tau_{\MCF}(G, \cK, n')$, and $\tauroute(G, \{a, b\}, n'') \leq \ceil{\frac{n''}{n'}}\tauroute(G, \{a, b\}, n')$.
\end{claim}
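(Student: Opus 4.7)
The plan is straightforward: both inequalities follow by running an optimal routing several times in sequence and concatenating the resulting round schedules.

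For the multi-commodity inequality, let $\tau = \tau_{\MCF}(G, \cK, n')$. By definition there is a schedule of $\tau$ rounds on $G$ that simultaneously delivers $n'/k$ units of flow from $u$ to $v$ for every ordered pair $u,v \in \cK$, respecting the unit capacity on each edge in each round. Let $t = \lceil n''/n' \rceil$. The plan is to run this schedule $t$ times back-to-back, yielding a schedule of length $t \tau$. In each copy, every ordered pair $(u,v)$ receives $n'/k$ units, so after $t$ repetitions the total flow from $u$ to $v$ is $t n'/k \geq n''/k$. Because each repetition is a valid single-round-by-single-round capacity-respecting schedule, the concatenation is also capacity-respecting (capacities are reset each round). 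Hence $\tau_{\MCF}(G, \cK, n'') \leq t\tau = \lceil n''/n' \rceil \tau_{\MCF}(G, \cK, n')$.

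For the point-to-point inequality, the argument is identical but simpler: let $\tau' = \tauroute(G, \{a,b\}, n')$ and let $t = \lceil n''/n' \rceil$. Run the optimal $n'$-bit routing from $a$ to $b$ a total of $t$ times in sequence; after $t\tau'$ rounds $a$ has delivered $tn' \geq n''$ units of flow to $b$, giving $\tauroute(G, \{a, b\}, n'') \leq \lceil n''/n' \rceil \tauroute(G, \{a, b\}, n')$.

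There is no real obstacle here; the only subtlety is to make sure that concatenating round-by-round schedules is valid, which it is because the per-round unit-capacity constraint is independent across rounds. (Formally one may verify that if $F_1, \ldots, F_t$ are per-round schedules each lasting $\tau$ rounds with every round satisfying capacity, then the concatenated schedule of length $t\tau$ also satisfies capacity in every round, and the total flow delivered between any two terminals equals the sum of the flows delivered in the individual schedules.) No ceiling-related rounding issue arises since we are upper bounding $n''$ by $t n'$.
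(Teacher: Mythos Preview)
Your proof is correct and is exactly the intended argument; the paper states this claim as a ``simple property'' without supplying a proof, and the repetition-and-concatenation approach you give is the natural way to fill it in.
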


Next, we note that the definition of $\tau_{\MCF}$ is enough to capture all $n'$-bounded demands. 
\begin{lemma}
	\label{lem:bounded-dem}
	For every $n'$-bounded demand $D$ over $\cK$, we can route $D$ with $2\tau_{\MCF}(G, \cK, n')$ rounds.
\end{lemma}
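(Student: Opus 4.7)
The plan is to use a Valiant-style two-hop routing through the terminal set $\cK$ viewed as an intermediate clique. Given the $n'$-bounded demand $D$, instead of sending the $D_{u,v}$ units directly from $u$ to $v$, we will split each such demand evenly among all $k$ terminals acting as intermediate ``relays'', route each piece first from $u$ to the relay $w$, and then from $w$ to $v$. The key observation is that each of the two phases is itself an ``all-pairs, per-pair-at-most-$n'/k$'' demand, exactly the object captured by $\tau_{\MCF}(G, \cK, n')$.

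More concretely, first I would set up the decomposition: for every triple $(u,w,v)\in\cK^3$, route $D_{u,v}/k$ units from $u$ to $v$ along a path that goes first to $w$ and then to $v$. Summing over $w$ recovers $D_{u,v}$ units delivered to $v$, so this is a valid routing of $D$. In the first phase, for each ordered pair $(u,w)$, the amount of flow to be sent from $u$ to $w$ is
\[\sum_{v\in\cK}\frac{D_{u,v}}{k}\;\le\;\frac{n'}{k},\]
using the $n'$-boundedness of the row sums of $D$. In the second phase, for each ordered pair $(w,v)$, the flow is $\sum_{u\in\cK} D_{u,v}/k \le n'/k$ by $n'$-boundedness of the column sums.

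Next I would invoke the definition of $\tau_{\MCF}(G,\cK,n')$: there is a simultaneous routing in $G$ delivering exactly $n'/k$ units between every ordered pair of terminals using $\tau_{\MCF}(G,\cK,n')$ rounds. Since each phase above is pointwise dominated (per ordered pair) by this uniform $n'/k$ demand, we can realize each phase by scaling down / restricting the uniform routing appropriately — any flow solution of value $n'/k$ per pair can carry a demand of at most $n'/k$ per pair in the same number of rounds, because the capacity constraints are still satisfied edge-by-edge and round-by-round. Running the two phases sequentially then uses at most $2\tau_{\MCF}(G,\cK,n')$ rounds.

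The only mild subtlety — and the step to be careful about — is the transition from an idealized fractional flow solution to an integral per-round schedule on $G$ (since the CONGEST-style model sends $1$ bit per edge per round); but $\tau_{\MCF}(G,\cK,n')$ is defined in exactly the same routing model, so the uniform routing witness already comes packaged as a valid per-round schedule, and restricting it to carry smaller per-pair amounts preserves schedulability. Thus I do not anticipate a substantial obstacle; the whole argument is essentially Valiant's load-balancing trick adapted to the $\tau_{\MCF}$ abstraction, and the factor $2$ in the lemma statement comes precisely from the two phases.
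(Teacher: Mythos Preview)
Your proposal is correct and is essentially the same as the paper's proof: both use Valiant-style two-hop routing through all of $\cK$ as intermediate relays, with each of the two phases being a uniform $n'/k$-per-pair demand routable in $\tau_{\MCF}(G,\cK,n')$ rounds. The only cosmetic difference is that the paper first pads $D$ so that all row and column sums equal $n'$ exactly (making each phase send exactly $n'/k$ per pair), whereas you keep the inequality and argue domination; both handle the same issue equivalently.
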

\begin{proof}
        Without loss of generality, we assume for every $u \in \cK$ we have $\sum_{v \in \cK}D_{u, v} = n'$ and $\sum_{v \in \cK}D_{v, u} = n'$. We route the demand $D$ in 2 stages, each with delay $\tau_{\MCF}(G, \cK, n')$.   We color the commodities by their destinations.  So at the beginning, there are $D_{u, v'}$ units of commodity of color $v'$ at $u$, for every $u, v' \in \cK$.  In the first stage, we send $n'/k$ units of commodity from every $u \in \cK$ to every $v \in \cK$, such that the commodity of each color is split evenly: $v$ is getting $D_{u, v'}/k$ units of commodity of color $v'$ from $u$, for every color $v' \in \cK$. Thus, at the end of the first stage, every vertex $u$ has $n'/k$ units commodity of each color $v'$.  Then, in the second stage, we send the commodity of each color $v'$ to $v'$.  Notice that in each of the two stages, we are sending $n'/k$ units of flow from every $u$ to every $v$ and thus the delay is $\tau_{\MCF}(G, \cK, n')$; so overall the delay is $2\tau_{\MCF}(G, \cK, n')$.
\end{proof}

\paragraph{Facts about expanders.}

Given a graph $H = (V_H, E_H)$, the expansion of $H$ is defined as
                \begin{equation*}
                        \Phi(H):= \min_{S \subseteq V_H: |S| \leq |V_H|/2}\frac{\big|E_H(S, V_H \setminus S)\big|}{|S|},
                \end{equation*}
                where $E_H(S, V_H \setminus S)$ is the set of edges in $E_H$ with one endpoint in $S$ and the other endpoint in $V_H \setminus S$.  We say a graph is an $\alpha$-expander if its expansion is at least $\alpha$.

                Let $H$ be a $d$-regular graph and $A$ be the adjacency matrix of $H$: for every $u, v \in \cK$, $A_{u, v}$ is the number of edges between $u$ and $v$ in $X$.  Since $A$ is symmetric, it has $n$ real eigenvalues. The largest eigenvalue of $A$ is $\lambda_1 = d$. Let $\lambda_2 \leq d$ be the second largest eigenvalue of $A$.  Cheeger's inequality relates $\lambda_2$ and the expansion $\Phi(H)$ of $H$.
                \begin{thm}[Cheeger's Inequality]
                        $\frac{d-\lambda_2}{2}\leq \Phi(H) \leq \sqrt{2d(d-\lambda_2)}$.
                \end{thm}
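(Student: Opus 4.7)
The plan is to prove the two sides of Cheeger's inequality by very different arguments. The lower bound $\Phi(H) \ge (d-\lambda_2)/2$ is the ``easy'' direction and follows from the Courant--Fischer variational characterization of eigenvalues, while the upper bound $\Phi(H) \le \sqrt{2d(d-\lambda_2)}$ is the ``hard'' direction and requires a threshold-rounding argument on the second eigenvector of $H$ together with a Cauchy--Schwarz calculation.

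For the easy direction, I would pass to the Laplacian $L := dI - A$, whose eigenvalues are $0 = \mu_1 \le \mu_2 \le \cdots$ with $\mu_2 = d - \lambda_2$ and with $\mathbf{1}$ the eigenvector for $\mu_1$. For any $S \subseteq V_H$ with $|S| \le |V_H|/2$, take the test vector $x := \mathbf{1}_S - \frac{|S|}{|V_H|}\mathbf{1}$, which is orthogonal to $\mathbf{1}$. By Courant--Fischer, $\mu_2 \le x^T L x / x^T x$. A direct calculation gives $x^T L x = |E_H(S, V_H \setminus S)|$ and $x^T x = |S|\big(1 - |S|/|V_H|\big) \ge |S|/2$, so $d - \lambda_2 \le 2\,|E_H(S, V_H\setminus S)|/|S|$. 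Minimizing over admissible $S$ yields the desired inequality.

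For the hard direction, let $v$ be a unit eigenvector for $\lambda_2$, so $v \perp \mathbf{1}$ and $v^T L v = d - \lambda_2$. I would first replace $v$ by $w := v - c\mathbf{1}$, where $c$ is a median of the coordinates of $v$; shifting by any multiple of $\mathbf{1}$ leaves $w^T L w$ unchanged and only increases $\|w\|^2$, so the Rayleigh quotient $w^T L w / \|w\|^2$ is still at most $d - \lambda_2$, and both $|\{u : w_u > 0\}|$ and $|\{u : w_u < 0\}|$ are at most $|V_H|/2$. Now set $g := w_+$ or $g := -w_-$ (the positive or negative parts of $w$); using the edgewise inequality $(w_u - w_v)^2 \ge (w_{+,u} - w_{+,v})^2 + (w_{-,u} - w_{-,v})^2$ and the mediant inequality, at least one choice satisfies $g^T L g / (g^T g) \le d - \lambda_2$, and $|\mathrm{supp}(g)| \le |V_H|/2$ in either case.

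Finally, I would run a sweep-cut argument on this non-negative $g$. Let $M := \max_u g_u^2$ and pick a random threshold $t$ with $t^2$ uniform in $[0, M]$; set $S_t := \{u : g_u^2 > t^2\}$. Then $\mathbb{E}[|S_t|] = \|g\|^2/M$, and any edge $(u,v)$ with $g_u \ge g_v$ is cut with probability $(g_u^2 - g_v^2)/M$. Writing $|g_u^2 - g_v^2| = |g_u - g_v|(g_u + g_v)$ and applying Cauchy--Schwarz with the bounds $\sum_{(u,v) \in E_H}(g_u - g_v)^2 = g^T L g \le (d-\lambda_2)\|g\|^2$ and $\sum_{(u,v) \in E_H}(g_u + g_v)^2 \le 2d\|g\|^2$ (from $d$-regularity) gives $\mathbb{E}[|E_H(S_t, V_H \setminus S_t)|] \le \|g\|^2 \sqrt{2d(d-\lambda_2)}/M$. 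Since $\mathbb{E}\big[|E_H(S_t, V_H \setminus S_t)| - \sqrt{2d(d-\lambda_2)}\,|S_t|\big] \le 0$, some specific $t$ must witness $|E_H(S_t, V_H \setminus S_t)|/|S_t| \le \sqrt{2d(d-\lambda_2)}$; and since $S_t \subseteq \mathrm{supp}(g)$ forces $|S_t| \le |V_H|/2$, this $S_t$ is admissible in the definition of $\Phi(H)$. The main obstacle is this hard direction: specifically, the ``folding'' step reducing to a non-negative function supported on at most half the vertices (where one has to argue the Rayleigh quotient does not increase) and the Cauchy--Schwarz calculation, where the constants must be tracked carefully to land on $\sqrt{2d(d-\lambda_2)}$ rather than a weaker bound. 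The easy direction is essentially a one-line computation once one works with the Laplacian.
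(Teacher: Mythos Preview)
The paper does not prove Cheeger's inequality at all; it is stated in the preliminaries section (under ``Facts about expanders'') as a known black-box result and is simply quoted without proof, to be applied later when analyzing the expander produced by the cut-matching game. So there is no ``paper's own proof'' to compare against.

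That said, what you have written is the standard textbook proof of Cheeger's inequality and is correct. The easy direction via the centered indicator $\mathbf{1}_S - \tfrac{|S|}{|V_H|}\mathbf{1}$ and Courant--Fischer is fine; your computation $x^TLx = |E_H(S,\bar S)|$ and $x^Tx = |S|(1-|S|/|V_H|)\ge |S|/2$ gives exactly the factor $2$ in $(d-\lambda_2)/2 \le \Phi(H)$. For the hard direction, the three steps---median shift (which preserves $w^TLw$ and only increases $\|w\|^2$ since $v\perp\mathbf{1}$), the folding to $w_+$ or $-w_-$ via $(w_u-w_v)^2 \ge (w_{+,u}-w_{+,v})^2 + (w_{-,u}-w_{-,v})^2$ combined with the mediant inequality, and the sweep-cut with $t^2$ uniform plus Cauchy--Schwarz against $\sum_{(u,v)}(g_u+g_v)^2 \le 2d\|g\|^2$---are all correct and yield precisely the constant $\sqrt{2d(d-\lambda_2)}$. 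One small point worth making explicit: when invoking ``some $t$ achieves the ratio bound,'' you should note that values of $t$ with $S_t=\emptyset$ contribute zero to both sides of the expectation inequality, so a nonempty $S_t$ witnessing the bound must exist.
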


                We are interested in the following lazy random walk on a $d$-regular graph $H$. We start from an initial vertex $v \in V_H$, chosen randomly according to some initial distribution $q$.  In each step, with probability $1/2$, we stay at the current vertex; with the remaining $1/2$ probability, we move to a randomly selected neighbor of the current vertex. Then,  $(I+A/d)/2$ is the transition matrix of the lazy random walk, where $I$ is the identity matrix.
                The following theorem says that the mixing time of the lazy random walk on an expander is small.
                \begin{thm}[Lazy random walk on expanders] \label{thm:random-walk}
                        Let $H = (V_H, E_H)$ be a $d$-regular graph with $|V_H| = N_H$, $A$ be its adjacency matrix and $\lambda_2$ be the second largest eigenvalue of $A$.  Let $\mu=\big(\mu_v = \frac{1}{N_H}\big)_{v \in V_H}$ be the uniform distribution over vertices in $V_H$. For any initial distribution $q \in [0, 1]^{V_H}$ over $V_H$ and integer $T \geq 0$, we have
                        \begin{equation*}
                                \left\|\left(\frac{I+A/d}{2}\right)^Tq-\mu\right\|_1\le\sqrt{N_H}\left(\frac{1+\lambda_2/d}{2}\right)^T.
                        \end{equation*}
                \end{thm}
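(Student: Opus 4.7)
The plan is to apply the standard spectral analysis for reversible Markov chains, making crucial use of the laziness of the walk so that the entire spectrum of the transition matrix lies in $[0,1]$, allowing us to control the mixing via $\lambda_2$ alone (rather than $\max\{|\lambda_2|, |\lambda_{N_H}|\}$). Let $M = (I + A/d)/2$ denote the transition matrix. Since $H$ is $d$-regular, the all-ones vector $\mathbf{1}$ is an eigenvector of $A$ with eigenvalue $d$, hence an eigenvector of $M$ with eigenvalue $1$; equivalently $\mu$ is the stationary distribution and $M\mu = \mu$. Therefore $M^T q - \mu = M^T(q-\mu)$, and since $q$ and $\mu$ are both probability distributions the vector $q-\mu$ has zero coordinate sum, i.e. lies in the subspace $\mathbf{1}^\perp$.

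Next I would diagonalize $A$ in an orthonormal eigenbasis $v_1 = \mathbf{1}/\sqrt{N_H}, v_2, \dots, v_{N_H}$ with eigenvalues $d = \lambda_1 \geq \lambda_2 \geq \dots \geq \lambda_{N_H} \geq -d$. The corresponding eigenvalues of $M$ are $(1+\lambda_i/d)/2$. The key point is that for $i \geq 2$, we have
\[
0 \;\le\; \frac{1+\lambda_i/d}{2} \;\le\; \frac{1+\lambda_2/d}{2},
\]
where the lower bound uses $\lambda_i \geq -d$ (this is exactly where laziness matters). Writing $q - \mu = \sum_{i \geq 2} c_i v_i$ in this basis and applying $M^T$ coordinatewise yields
\[
\bigl\|M^T(q-\mu)\bigr\|_2^2 \;=\; \sum_{i \geq 2} c_i^2 \left(\frac{1+\lambda_i/d}{2}\right)^{2T} \;\le\; \left(\frac{1+\lambda_2/d}{2}\right)^{2T}\!\|q-\mu\|_2^2.
\]

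To finish, I would bound $\|q-\mu\|_2$ and convert to $\ell_1$. Expanding $\|q-\mu\|_2^2 = \|q\|_2^2 - 2\langle q,\mu\rangle + \|\mu\|_2^2 = \|q\|_2^2 - 1/N_H \leq \|q\|_1^2 - 1/N_H \leq 1$, using $\langle q,\mu\rangle = 1/N_H$ (since $q$ sums to $1$) and $\|q\|_2 \le \|q\|_1 = 1$. Then by Cauchy–Schwarz applied to the $N_H$-dimensional vector $M^Tq - \mu$,
\[
\|M^Tq - \mu\|_1 \;\le\; \sqrt{N_H}\,\|M^Tq - \mu\|_2 \;\le\; \sqrt{N_H}\left(\frac{1+\lambda_2/d}{2}\right)^T\!\|q-\mu\|_2 \;\le\; \sqrt{N_H}\left(\frac{1+\lambda_2/d}{2}\right)^T,
\]
which is the desired inequality.

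The main conceptual point (rather than a genuine obstacle) is recognizing the role of the lazification: without the $(I+\cdot)/2$, the contractive factor would have to accommodate possibly negative eigenvalues of $A/d$ near $-1$, which would force us to assume bipartite-free-ness or use $\max\{|\lambda_2|,|\lambda_{N_H}|\}$. The other small care point is that the bound $\|q-\mu\|_2 \le 1$ (rather than $\sqrt 2$ from the triangle inequality) requires exploiting the orthogonality $\langle q-\mu,\mathbf{1}\rangle = 0$ via the identity above; this is a routine but easy-to-miss computation.
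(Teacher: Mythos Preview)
Your proof is correct and is the standard spectral argument; the paper itself states this theorem as a known background fact in the preliminaries and offers no proof, so there is nothing to compare against.
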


\subsection{Circuits} 

We will consider circuits that compute a function $f$. In particular, we will consider circuits with gates of fan-out and fan-in at most two: (i) AND, (ii) OR, (iii) NOT and (iv) duplication gate\footnote{This gate takes one bit as input and outputs two copies of the input bit.}. We will call such a circuit $(s,d)$-bounded if it has at most $s$ wires and has depth $d$. In this paper we almost exclusively deal with the case of $d=\tO(1)$. Also for uniformity, we will think of each input bit as a `constant gate.'

\section{The case of $k=2$}
\label{sec:k=2}

	In this section we consider the special case of $\cK=\{a,b\}$ (for Alice and Bob) but still over an arbitrary graph $G$. Our main result is Theorem~\ref{thm:2pSD}, which we prove in this section and is re-stated below:

\newtheorem*{thm:k2}{Theorem~\ref{thm:2pSD}}
\begin{thm:k2}[Restated]
        For any function $f:\{0,1\}^n\times \{0,1\}^n\to \{0,1\}$, and any graph $G$ we have that
        \[\tauroute(G,\{a,b\},\CC_{\eps}(f)) \le 4 {R_{\eps}(f,G,\{a,b\}}).\]
\end{thm:k2}
We would again like to stress that our proof does not proceed by invoking two party communication complexity lower bounds on two party functions induced by cuts on $G$. However, we do prove the result via a two-party communication simulation (where the two parties are denoted by $a'$ and $b'$). We will argue that if the result is not true, then we can come up with a two-party protocol for $f$ with cost strictly less than $\CC_{\eps}(f)$, which would lead to a contradiction.  Roughly speaking, if we can not route $\CC_\eps(f)/2$ units flow from $a$ to $b$ in time $2\tau := 2R_{\eps}(f, G, a, b)$, then we can remove a few special edges in $G$ to make the distance between $a$ and $b$ to be at least $2\tau$ (see Lemma~\ref{lemma:hop}).  Then we can divide the vertices in $G$ into $2\tau + 1$ levels indexed from $0$ to $2\tau$, such that $a$ is at level $0$, $b$ is at level $2\tau$ and all non-special edges are between two adjacent levels or between vertices in the same level. Thus, the flow of information via non-special edges is slow; in particular, without the special edges, the information from $a$ and $b$ will not mix in $\tau$ rounds.  Informally speaking, the important bits in the protocol are those sent via special edges. Thus, in the simulation using a two-party protocol between $a'$ and $b'$, we only send these important bits. By a careful analysis, we can bound the number of bits sent between $a'$ and $b'$ by less than $\CC_\eps(f)$, leading to a contradiction.

%

\begin{proof}[Proof of Theorem~\ref{thm:2pSD}]
	It would be convenient to consider the set $\vec{E}$ of directed edges, obtained from $E$ by replacing each edge $(u, v) \in E$ with two directed edges $(u,v)$ and $(v, u)$.  Consider the protocol in graph $G$ that computes the function $f$ in $\tau:=R_{\eps}(f,G,\{a,b\})$ steps.  For every $(u, v) \in \vec{E}$, let $x^t_{u, v}$ be the bit sent from $u$ to $v$ at time $t$ (recall that we allow both directions of an edge in $E$ to be used simultaneously).  The bit $x^t_{u,v}$ is a function of the bits received by $u$ by time $t-1$ (and the public random string); here we assume that $a$ received the input string $\vx_a$ and $b$ received the input string $\vx_b$ at time $0$. We assume towards the contradiction that $\tauroute\left(G,\{a,b\},\CC_{\eps}(f)\right) > 4\tau$.  By Claim~\ref{claim:tau-sub-additive}, we have $\tauroute\left(G,\{a,b\},N\right) > 2\tau$, where $N=\left \lceil \CC_{\eps}(f)/2 \right \rceil$.  This says that one cannot route $N$ bits in $2\tau$ rounds from $a$ to $b$. 
	
	\begin{lemma}
		\label{lemma:hop}
		Given a graph $G=(V, E)$ and $a, b \in V$,  assume there is no protocol that sends $N$ bits from $a$ to $b$ in $T$ rounds. Then there exists a vector $\ell \in  \{0, 1, 2, \cdots, T+1\}^V$ such that $\ell_a  = 0, \ell_b = T+1$ and
		\begin{align*}
		\sum_{(u, v) \in E} \max\{|\ell_u - \ell_v|-1, 0\}  < N.
		\end{align*}
	\end{lemma}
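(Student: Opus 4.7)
The plan is to prove this via max-flow/min-cut duality on the timed graph $G^{(T)}$. Recall that a protocol sending $N$ bits from $a$ to $b$ in $T$ rounds corresponds exactly to routing $N$ units of integer flow from $(a,0)$ to $(b,T)$ in $G^{(T)}$, where every non-memory edge has capacity $1$ and each memory edge $((u,i),(u,i+1))$ has infinite capacity. By integrality of max-flow on integer-capacitated networks together with the hypothesis of the lemma, there is no integer flow of value $N$, hence the max-flow min-cut theorem yields an $(S,\bar S)$-cut with $(a,0)\in S$, $(b,T)\in \bar S$, and total capacity strictly less than $N$.

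Next I would extract the vector $\ell$ from this cut. Since memory edges have infinite capacity, none of them can go from $S$ to $\bar S$; thus for each vertex $u\in V$, the set $\{i\in[0,T] : (u,i)\in S\}$ is upward closed, so it equals $\{i : i\ge \ell_u\}$ for a unique $\ell_u\in\{0,1,\ldots,T+1\}$ (where $\ell_u=T+1$ means no $(u,i)$ lies in $S$). Because $(a,0)\in S$ we have $\ell_a=0$, and because $(b,T)\in \bar S$ upward-closure forces $(b,i)\in \bar S$ for every $i\le T$, giving $\ell_b=T+1$.

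Now I would count the capacity of the cut edge by edge. For a fixed $(u,v)\in E$, the timed graph contains edges $((u,i),(v,i+1))$ and $((v,i),(u,i+1))$ for $0\le i<T$. The first kind contributes to the cut precisely when $i\ge \ell_u$ and $i+1<\ell_v$, i.e.\ for $i\in\{\ell_u,\ldots,\ell_v-2\}$ (and this range lies automatically in $[0,T-1]$ because $\ell_u\ge 0$ and $\ell_v\le T+1$); so the number of such edges crossing is $\max\{0,\ell_v-\ell_u-1\}$. By symmetry the reverse edges contribute $\max\{0,\ell_u-\ell_v-1\}$. Adding these (only one summand is nonzero) gives exactly $\max\{|\ell_u-\ell_v|-1,0\}$ per undirected edge, so the total cut capacity equals $\sum_{(u,v)\in E}\max\{|\ell_u-\ell_v|-1,0\}$, which is less than $N$ by construction.

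The main (mild) obstacle is ensuring the boundary cases are handled cleanly: verifying that $\ell_u\in\{0,\ldots,T+1\}$ really captures the upward-closed structure, and that the summation range $i\in\{\ell_u,\ldots,\ell_v-2\}$ is automatically contained in $\{0,\ldots,T-1\}$ so that no truncation is needed. Beyond this, the argument is essentially a direct application of max-flow min-cut on the timed graph, with the infinite-capacity memory edges serving precisely to enforce the ``layered level function'' structure of the cut.
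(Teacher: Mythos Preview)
Your proposal is correct and follows essentially the same approach as the paper: both arguments apply max-flow/min-cut on the timed graph $G^{(T)}$, use the infinite-capacity memory edges to deduce that the source side of the cut is monotone in the time coordinate, extract the level vector $\ell$ from this monotonicity, and then count the cut edges contributed by each undirected edge of $G$. The paper phrases the extraction via the nested sets $A_0\subseteq A_1\subseteq\cdots\subseteq A_T$ and the partition $V_0,\dots,V_{T+1}$, while you phrase it directly as upward closure of $\{i:(u,i)\in S\}$; these are the same structure, and the edge-by-edge count matches exactly.
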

\begin{proof}
	We consider the time graph $\timeG{T}$. Since there is no protocol that sends $N$ bits from $a$ to $b$ in time $T$, we can not send $N$ units of flow from $(a, 0)$ to $(b, T)$ in $\timeG{T}$ (with congestion 1). By the max-flow-min-cut theorem, there is a cut of size strictly smaller than $N$ in $\timeG{T}$ that separates $(a, 0)$ from $(b, T)$. Let $(A, B)$ be the cut in $\timeG{T}$. For every $t \in \{0, 1, 2, \cdots, T\}$, let $A_t = \{v \in V: (v, t) \in A \}$.  Since there are infinitely many memory edges $((v, t),(v, t+1))$, no such edge can be cut, and we have that $a \in A_0 \subseteq A_1 \subseteq A_2 \subseteq \cdots \subseteq A_T \not\ni b$. 
	Now, the $(A, B)$ cut value is exactly 
	\begin{align*}
	\sum_{t = 0}^{T-1}\sum_{u \in A_t}\sum_{v \not\in A_{t+1}}\mathbf 1_{(u, v) \in E} < N.
	\end{align*}
	
	For each $t \in {1, 2, 3, \cdots, T}$, we define $V_t = A_t \setminus A_{t-1}$. Define $V_0 = A_0$ and $V_{T+1} = V \setminus A_{T}$.  Thus, $a \in V_0$ and $b \in V_{T+1}$ and $(V_0, V_1, V_2, \cdots, V_{T+1})$ forms a partition of $V$.   For each $v \in V$, let $\ell_v$ be the index such that $v \in V_{\ell_v}$.  We claim that in the above sum, each $(u,v) \in E$ is counted exactly $\max\{0, |\ell_u-\ell_v|-1\}$ times.  Without loss of generality assume $\ell_u \leq \ell_v$. Then, $u \in A_t$ and $v \not\in A_{t+1}$ for $\ell_u \leq t< \ell_v - 1$. Thus, $(u, v)$ is counted exactly $\max\{|\ell_u - \ell_v| - 1, 0\}$ times. Thus, we have 
	\begin{align*}
	\sum_{(u, v) \in E} \max\{|\ell_u - \ell_v|-1, 0\}  < N,
	\end{align*}
	which concludes our assertion.
\end{proof}
		
	Applying Lemma~\ref{lemma:hop} with $T=2\tau$, we obtain vector $\ell \in \{0, 1, 2, \cdots, 2\tau+1\}^V$ satisfying the properties stated in the lemma. We shall use a two-party protocol to simulate the protocol on $G$; we use $a'$ and $b'$ to denote the two parties participating in the two-party protocol. We assume $a'$ knows the input string $\vx_a$ and $b'$ knows the input string $\vx_b$. The two-party protocol has $\tau$ rounds that correspond to the $\tau$ time steps of the protocol on $G$ and is defined as follows.
	In each round $t$ for $t=1$ to $\tau$, for each bit $x^t_{u,v}$ sent from $u$ to $v$ in the original protocol on $G$,  $x^t_{u,v}$ is either sent from $a'$ to $b'$, or from $b'$ to $a'$, or not sent at all according to the following conditions:
	\begin{itemize}[itemsep=0pt,topsep=3pt]
		\item If $\ell_u < t < \ell_v$ then bit $x^t_{u,v}$ is sent from $a'$ to $b'$ in the two-party protocol.
		\item If $\ell_v < 2\tau + 1-t <  \ell_u$ then bit $x^t_{u,v}$ is sent from $b'$ to $a'$ in the two-party protocol.
		\item Otherwise bit $x^t_{u,v}$ is not sent in the two-party protocol.
	\end{itemize}
	We now claim that (i) the number of bits sent in the two-party protocol is at most $2N-2$ (where recall  $N=\left \lceil \CC_{\eps}(f)/2 \right \rceil$), and (ii)  the protocol is valid in the sense that the two parties can compute the bits transmitted during its execution, and once completed, both parties $a'$ and $b'$ know the output of the graph protocol.
	The fact that the round complexity is bounded by $2N-2$ follows directly by our definitions. Namely, in the two-party protocol, for any edge $e = (u, v) \in E$,  there are at most $2(|\ell_u-\ell_v|-1)$ different $t$'s for which the bit $x^t_{u,v}$ or $x^t_{v, u}$ is sent between $a'$ and $b'$. 
	As by Lemma~\ref{lemma:hop}, $\sum_{(u, v) \in E} \max\{|\ell_u - \ell_v|-1, 0\}  \le N-1$ we conclude (i) above.
	We now prove the validity of the protocol.
	
	\begin{lemma}
		\label{lemma:simulation}
		Let $t \in [0, \tau]$. (1) At the end of round $t$: if $\ell_v \leq 2\tau - t$, then $a'$ knows all the bits received by $v$; if $\ell_v \geq t+1$, then $b'$ knows all the bits received by $v$.  (2) If $t < \tau$ then in round $t+1$, $a'$ knows all the bits she needs to send to $b'$, and $b'$ knows all the bits he needs to send to $a'$.
	\end{lemma}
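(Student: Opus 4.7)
The plan is to prove both claims of Lemma~\ref{lemma:simulation} by simultaneous induction on $t$ from $0$ to $\tau$. The key observation that makes the argument go through is that the three cases defining the two-party simulation are calibrated so that any bit $x^{t+1}_{u,v}$ whose value $a'$ cannot derive from her own private state is exactly a bit whose $(\ell_u,\ell_v)$ configuration triggers the rule $\ell_v<2\tau-t<\ell_u$ that forces $b'$ to transmit it, and symmetrically for $b'$; thus (1) and (2) feed each other across the induction.

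The base case $t=0$ is immediate: the only vertices holding bits are $a$ with input $\vx_a$ and $b$ with input $\vx_b$, and since $\ell_a=0\le 2\tau$ and $\ell_b=2\tau+1\ge 1$ the relevant halves of (1) are trivially true, while all other vertices have received nothing, making the rest vacuous. For the inductive step I first derive (2) at round $t+1$ from (1) at round $t$: any bit $x^{t+1}_{u,v}$ that $a'$ must send satisfies $\ell_u<t+1<\ell_v$, hence $\ell_u\le t\le 2\tau-t$ using $t<\tau$, so the inductive hypothesis gives $a'$ every bit received by $u$ by time $t$ and she locally applies the protocol's rule at $u$; the symmetric argument, using $\ell_u\ge 2\tau-t+1\ge t+1$, handles $b'$'s outgoing bits.

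Next I turn to (1) at the end of round $t+1$. The bits received by $v$ at time $t+1$ are those received by time $t$ together with the fresh $x^{t+1}_{u,v}$ over neighbors $u$. For the $a'$ half, assume $\ell_v\le 2\tau-t-1$; the old bits are already known by the inductive hypothesis (since $\ell_v\le 2\tau-t$), and for each neighbor $u$ either $\ell_u\le 2\tau-t$, so $a'$ can compute $x^{t+1}_{u,v}$ locally using (1) at round $t$, or $\ell_u\ge 2\tau-t+1$, so the rule $\ell_v<2\tau-t<\ell_u$ fires and $b'$ forwards the bit in round $t+1$ (knowing it by the (2) step just proved). The $b'$ half is the mirror argument, routed through the rule $\ell_u<t+1<\ell_v$. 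The only delicate point is verifying that the two transmission rules partition precisely the cases where local computation fails, with no gap and no overlap; once this case split is checked, the induction closes, and instantiating (1) at $t=\tau$ simultaneously gives $a'$ all bits received by $a$ and $b'$ all bits received by $b$, so both parties recover the protocol's output, completing the validity claim and delivering the promised contradiction with $\tauroute(G,\{a,b\},\CC_{\eps}(f))>4\tau$.
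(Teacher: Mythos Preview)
Your proof is correct and follows essentially the same approach as the paper's: both argue by induction on $t$, derive (2) at step $t+1$ from (1) at step $t$ via the inequality $\ell_u\le t\le 2\tau-t$ (resp.\ its mirror), and then establish (1) at step $t+1$ by a two-case split on whether the neighbor $u$ has level within the range covered by the inductive hypothesis or falls into the transmission rule. The only cosmetic differences are that the paper spells out the $b'$ half of (1) while you spell out the $a'$ half, and the paper factors the ``(1)$\Rightarrow$(2)'' implication out before the induction rather than interleaving it.
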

	
	\begin{proof}
		We first show that for each $t \in [0, \tau-1]$, (1) implies (2). If $a'$ needs to send $x^{t+1}_{u,v}$ to $b'$ in round $t+1$, then we must have $\ell_u < t+1$.  $x^{t+1}_{u,v}$ depends on all the bits received by $u$ by the end of round $t$. Since $\ell_u < t+1 < 2\tau - t$, (1) implies that $a'$ knows all these bits and thus can compute $x^{t+1}_{u,v}$. Thus, $a'$ knows all the bits she needs to send to $b'$ in round $t+1$; similarly, $b'$ knows all the bits he needs to send to $a'$. 
		
		We now prove the lemma by induction on $t$; for each $t$ we only need to prove (1). The base case is $t=0$; at the end of round 0, $a'$ knows all the bits received by $v$ if $v \neq b$ and $b'$ knows all the bits received by $v$ if $v \neq a$. So, (1) holds since $\ell_a = 0$ and $\ell_b = 2\tau + 1$. 
		
		 Consider some $t \geq 1$ and assume (1) holds for $t - 1$.  We prove (1) for $t$; we only need to prove the statement for $b'$, since the statement for $a'$ can be proved symmetrically.  Let $\ell_v \geq t+1$ and we need to prove that $b'$ knows all the bits received by $v$ before the end of round $t$.  Since $\ell_v \geq (t-1) + 1$, by the induction hypothesis, $b'$ knows all the bits received by $v$ before the end of round $t-1$. We only need to show that $b'$ knows all the bits received by $v$ at round $t$. 
		
		Focus on a vertex $u$ such that $(u, v) \in \vec{E}$. In the graph protocol, the bit $x^t_{u,v}$ is sent from $u$ to $v$ at time $t$. We consider two cases.  First consider the case that $\ell_u \geq t$. Thus $\ell_u \geq (t-1)+1$; by the induction hypothesis, $b'$ knows all the bits received by $u$ before the end of round $t-1$;  thus $b'$ can compute $x^t_{u, v}$.  For the other case, we have $\ell_u < t < \ell_v$. 
		In the two-party protocol, $a'$ sends $x^t_{u,v}$ to $b'$, implying that $b'$ knows $x^t_{u,v}$ by the end of round $t$ (notice that by induction hypothesis, (2) holds for $t-1$; thus $a'$ knows $x^t_{u,v}$). This finishes the proof of the lemma.
	\end{proof}	
	
	Lemma~\ref{lemma:simulation} implies point (ii) above.
	Indeed, at the end of round $\tau$, $a$ knows the output; $a'$ knows all bits received by $a$ as $\ell_a = 0 \leq 2\tau - \tau$. So $a'$ knows the output. Similarly $b'$ knows the output.  Notice that $2N-2 < \CC_{\eps}(f)$. The error of the two-party protocol on every input $(\vx_a, \vx_b)$ is exactly the same as the error of the graph protocol on this input.  Thus, we obtain a two-party protocol with total communication less than $\CC_{\eps}(f)$ and error $\epsilon$; this contradicts the definition of $\CC_{\eps}(f)$. So the theorem holds.
\end{proof}


\section{Steiner Tree Packing Bounds}
\label{sec:sd}


In this section we consider general sets $\cK$.
We first present a lower bound on $R_\eps(f, G, \cK)$ based on the notion of Steiner tree packing.
We then explore the potential optimality of conceptually simple protocols that perform computation of $f$ over a (collection of) Steiner trees that span $\cK$. 



We prove the following general lower bound result:
       \begin{thm}
                \label{thm:sd--gen-lb-main}
                Let $G, \cK, f:\big(\{0,1\}^n\big)^\cK \to \{0, 1\}$, $\eps\ge 0$ be defined as usual.
                Assume for some $n' > 0$ 
                that the following is true: for every pair of distinct players $a, b  \in \cK$, there exists some $\tilde \vx \in \big(\{0, 1\}^n\big)^{\cK \setminus \{a,b\}}$,  such that 
                $\CC_\eps(f_{\{a\}, \{b\},\tilde \vx}) \geq n'$. Then,
        \[                \min_{\Delta\in [|V|]} \left(\frac{n'}{\ST(G,\cK,\Delta)}+\Delta\right) \leq \tilde O\Big(R_\eps(f, G, \cK)\Big).\]
        \end{thm}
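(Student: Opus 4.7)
Let $\tau := R_\eps(f, G, \cK)$. My goal is to exhibit some $\Delta \le \tO(\tau)$ with $\ST(G, \cK, \Delta) \ge \tOm(n'/\tau)$; this gives $\frac{n'}{\ST(G,\cK,\Delta)} + \Delta \le \tO(\tau)$, proving the theorem. The plan has three main steps.

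\textbf{Step 1: Specialize to pairwise routing bounds.} For each pair $a, b \in \cK$, use the hypothesis to pick a restriction $\tilde\vx \in (\{0,1\}^n)^{\cK \setminus \{a,b\}}$ with $\CC_\eps(f_{\{a\}, \{b\}, \tilde\vx}) \ge n'$. Hardwiring $\tilde\vx$ into the optimal $\tau$-round protocol $\Pi$ for $(f, G, \cK)$ yields a $\tau$-round protocol on $G$ for $(f_{\{a\}, \{b\}, \tilde\vx}, G, \{a, b\})$: the non-terminal nodes in $\cK \setminus \{a, b\}$ simply play their fixed inputs while $\Pi$ is otherwise executed unchanged. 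Combining with Theorem~\ref{thm:2pSD} and monotonicity of $\tauroute$ in its third argument, we obtain
\[
\tauroute(G, \{a,b\}, n') \le \tauroute(G, \{a,b\}, \CC_\eps(f_{\{a\},\{b\},\tilde\vx})) \le 4\tau \qquad \text{for every } a, b \in \cK.
\]
Equivalently, in the timed graph $G^{(4\tau)}$ the max flow from $(a, 0)$ to $(b, 4\tau)$ is at least $n'$ for every pair.

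\textbf{Step 2: Aggregate via LP duality on the timed graph.} Consider the LP for diameter-$\Delta$ Steiner tree packing with $\Delta = \tO(\tau)$ chosen suitably; its dual is the fractional Steiner-cover LP $\min \sum_e w_e$ subject to $\sum_{e \in T} w_e \ge 1$ for every $T \in \cT_{\Delta, \cK}$, $w \ge 0$. Assume for contradiction that a feasible $w$ has $\sum_e w_e < n'/\bigl(C\tau \log^{O(1)} |V|\bigr)$ for a large constant $C$. For each pair $a, b$, decompose the $n'$-flow in $G^{(4\tau)}$ from $(a, 0)$ to $(b, 4\tau)$ into paths; each projects to an $a$-to-$b$ walk in $G$ of length $\le 4\tau$. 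Stitching these pairwise walks (weighting by $w$ and averaging over the flow decompositions) produces a fractional Steiner tree-like object of diameter $\le \tO(\tau)$ whose expected $w$-weight is $\tO(\tau \sum_e w_e / n') < 1$, contradicting feasibility of $w$. Because $G^{(4\tau)}$ is directed and Steiner trees are undirected, the stitching must be carried out on a carefully chosen subgraph of $G^{(4\tau)}$, as indicated in the overview, so that symmetric two-party guarantees can be applied while respecting the layered/directed structure.

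\textbf{Step 3: Clean up the diameter using MRS98.} The fractional object from Step~2 may violate the diameter bound by a polylogarithmic factor. Applying the $(O(\log|V|), O(\log|V|))$-bicriteria approximation of \cite{MRS98} for bounded-diameter minimum Steiner tree, together with LP duality between primal packing and dual cover, converts the bi-criteria object into an integral Steiner-tree packing in $G$ with true diameter $\tO(\tau)$ and value $\tOm(n'/\tau)$. This furnishes the required $\Delta = \tO(\tau)$ with $\ST(G, \cK, \Delta) \ge \tOm(n'/\tau)$, completing the proof.

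\textbf{Main obstacle.} The crux is Step~2. The two-party bound from Theorem~\ref{thm:2pSD} is symmetric, but cuts in the directed timed graph $G^{(4\tau)}$ only control one direction of crossing edges, so the naive route of applying the two-party bound across a cut in $G^{(4\tau)}$ and invoking directed Steiner-packing/flow integrality gaps fails (some of those gaps are unbounded). The fix, signalled by the overview, is to aggregate the $\binom{k}{2}$ pairwise bounds not across a cut but on a carefully chosen subgraph of $G^{(4\tau)}$, so that we can use the undirected Steiner-packing LP duality in the underlying $G$ without losing to directional integrality gaps, and so that MRS98 can then be applied to control the diameter loss.
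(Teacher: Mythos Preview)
Your Step~1 is correct and matches the paper exactly: hard-wire $\tilde\vx$, invoke Theorem~\ref{thm:2pSD}, and conclude $\tauroute(G,\{a,b\},n')\le 4\tau$ for every pair.

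Your Step~2, however, is a genuine gap. The sentence ``stitching these pairwise walks \ldots\ produces a fractional Steiner tree-like object of diameter $\le \tO(\tau)$ whose expected $w$-weight is $\tO(\tau\sum_e w_e/n')$'' is precisely the hard part of the theorem, and you have not supplied an argument for it. Given a dual weight $w$, you have for each pair $(a,b)$ a collection of short $a$--$b$ walks; but there is no mechanism described for combining walks across \emph{different} pairs into a single tree spanning all of $\cK$, let alone one whose $w$-weight you can control. The LP-duality framing does not produce this combination automatically. Your Step~3 then tries to repair a diameter violation via \cite{MRS98}, but \cite{MRS98} is an approximation for minimum-\emph{cost} bounded-diameter Steiner tree, not a tool that takes an ill-defined ``tree-like object'' and fixes its diameter; its role in the paper is inspirational rather than a black box.

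The paper's route is more direct and avoids the directedness issue you worry about. After Step~1, it immediately projects the $n'$ edge-disjoint paths in $G^{(4\tau)}$ down to $G$: since each edge of $G$ appears in at most $4\tau$ layers, this yields a fractional packing in the undirected $G$ of $a$--$b$ paths, each of length $\le 4\tau$, with total value $n'/(4\tau)$. Everything is now undirected and the ``main obstacle'' you flag has vanished. The actual stitching is then done by a standalone combinatorial lemma (the paper's Theorem~\ref{thm:matching}): given, for each terminal, a fractional packing of short paths to a fixed root of total value $p$, one can build a fractional Steiner-tree packing of value $\tOm(p)$ and diameter $\tO(D)$. The proof of that lemma uses the unconstrained Steiner-tree-packing result of \cite{li2004network} on a carefully weighted subgraph to find a large matching of terminal pairs supported by short edge-disjoint paths (an averaging/Markov argument controls the path lengths), and then recurses $O(\log k)$ times to collapse all terminals into one. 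This recursive matching construction is what replaces your hand-waved ``stitching'' step.
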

We provide an overview of the proof of the above result for the case of $\SD_{\cK,n}$. We first use Theorem~\ref{thm:2pSD} to get many edge disjoint paths between every pair of terminals in $\cK$ with length at most $\tO(\tau)$, where the optimal protocol takes $\tau$ rounds. (This follows from the fact that $\tauroute(G,\{a,b\},\CC_\eps(f'))\le 4\tau$ via Theorem~\ref{thm:2pSD}.) Then using tools developed in earlier work on packing Steiner trees with bounded diameter by Marathe et al.~\cite{MRS98}, we show that we can stitch these sets of edge disjoint path to obtain a large enough set of edge disjoint Steiner tree packings with diameter $\tO(\tau)$. This is enough to prove our lower bound for $\SD_{\cK,n}$. Next, we prove the result for general $f$.

\begin{proof}[Proof of Theorem~\ref{thm:sd--gen-lb-main}]
Assume that $R_\eps(f, G, \cK)=\tau$.
In particular, for any $a$ and $b$ in $\cK$ it holds that
$R_\eps(f_{\{a\},\{b\},\tilde \vx},G,\{a,b\}) \leq \tau$.
By Theorem~\ref{thm:2pSD}, for $f'=f_{\{a\},\{b\},\tilde \vx}$, and the fact that $n' \leq \CC_\eps(f')$,
\[\tauroute(G,\{a,b\},n') \leq \tauroute(G,\{a,b\},\CC_\eps(f')) \le 4 R_{\eps}(f',G,\{a,b\}) \leq 4\tau.\] 
Thus, there exists 
$n'$ edge-disjoint paths connecting $(a,0)$ and $(b,\const\tau)$ in $G^{(\const\tau)}$. This in turn implies $n'$ fractional edge disjoint paths in $G$ of length at most $\const\tau$ in which each path has fractional value $\frac{1}{\const\tau}$, yielding a total value of $\frac{n'}{\const\tau}$ fractional edge disjoint paths (of length at most $\const\tau$).
The analysis above holds for all pairs $a$ and $b$ in $\cK$.
In what follows (in Theorem~\ref{thm:matching} given below), we show that the latter implies a fractional Steiner Tree packing in $G$ of value $\tOmega\left(\frac{n'}{\tau}\right)$ with tree diameter at most $\tO(\tau)$. 
Implying that:
$$
\left( \min_{\Delta\in [|V|]} \left(\frac{n'}{\ST(G,\cK,\Delta)}+\Delta\right)\right) \leq \tO(\tau) = \tO(R_\eps(f, G, \cK)).
$$
\end{proof}

We now address the missing assertion in the proof of Theorem~\ref{thm:sd--gen-lb-main}.
We start with some notation. Given a (partial) matching $M$ over $\cK$ and a set $\cP$ of $|M|$ edge-disjoint paths in $G$, we say $\cP$ supports $M$ if for every $(a, b) \in M$, there is a path in $\cP$ connecting $a$ and $b$. We prove that 
\begin{thm}
\label{thm:matching}
Let $\cK=\{u_0,u_1,\dots,u_{k-1}\}$.
Assume that for every $u_i \in \cK \setminus \{u_0\}$ there is a collection $\cQ_i$ of fractional edge-disjoint paths of length at most $D$ from $u_i$ to $u_0$ in $G$ with total value $p$.  Then, there is a Steiner tree packing of value $\tOmega(p)$ in $G$ with tree diameter at most $\tO(D)$.
\end{thm}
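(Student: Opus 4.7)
My plan is to proceed via LP duality for the Steiner-tree packing LP of Section~\ref{sec:prelims}, imposing the diameter bound $D' := \tilde O(D)$ whose precise polylog factor will be dictated by the bi-criteria rounding step below. The dual of this LP is a covering LP: choose $y \in \R^{E}_{\ge 0}$ minimising $\sum_e y_e$ subject to $\sum_{e \in T} y_e \ge 1$ for every $\cK$-Steiner tree $T$ of diameter at most $D'$. By LP duality the maximum packing value equals this minimum, so the theorem reduces to proving that every feasible dual $y$ satisfies $W := \sum_e y_e \ge \tilde\Omega(p)$.

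To lower-bound $W$, I would feed the collections $\{\cQ_i\}$ into the natural flow LP for the minimum-$y$-cost diameter-$D$ Steiner tree problem on $G$. In that LP we have variables $x_e \in [0,1]$ and, for every $u_i \in \cK \setminus \{u_0\}$, a unit $u_0$-to-$u_i$ flow $f^i$ supported on paths of length at most $D$ with $f^i_e \le x_e$. Scaling $\cQ_i$ by $1/p$ gives exactly such an $f^i$ with per-edge value $f^i_e \le 1/p$ (by the fractional edge-disjointness of $\cQ_i$); setting $x_e := \max_i f^i_e \le 1/p$ then yields a feasible fractional solution of $y$-cost at most $(1/p)\sum_e y_e = W/p$.

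I would then invoke the $(O(\log|V|), O(\log|V|))$-bicriteria approximation of Marathe--Ravi--Sundaram~\cite{MRS98} for the bounded-diameter minimum Steiner tree problem, viewed as an LP-rounding procedure: applied to our fractional solution it would produce an integer $\cK$-Steiner tree $T^*$ with diameter at most $O(D \log|V|) = D'$ and $y$-weight $\sum_{e \in T^*} y_e \le \tilde O(W/p)$. Since $T^*$ is a legal diameter-$D'$ tree, dual feasibility of $y$ forces $1 \le \sum_{e \in T^*} y_e \le \tilde O(W/p)$, and hence $W \ge \tilde\Omega(p)$; by LP duality the maximum packing value is then at least $\tilde\Omega(p)$.

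The main obstacle I anticipate is that~\cite{MRS98}'s guarantee is usually phrased against the integer optimum $\mathrm{OPT}_D$ rather than against the LP value, whereas a naive bound $\mathrm{OPT}_D \le (k-1)W/p$ (using the star of best paths out of each $\cQ_i$) is a factor of $k$ too weak to close the argument. The clean way around this is to open up their LAST-based rounding and verify that it actually charges the output cost to the LP optimum, which is what LP-based analyses of LAST-style rounding typically yield. Failing that, I would mimic their construction directly on $(G, y)$: use the averaging observation that some $P_i \in \cQ_i$ has $y$-weight at most $W/p$, and consolidate these paths around $u_0$ via a LAST/tree-growing subroutine that reuses a common short-path backbone and charges extra weight to the LP value $W/p$ rather than to each of the $k-1$ terminals independently, thereby avoiding the $(k-1)$-fold blow-up.
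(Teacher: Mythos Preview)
Your LP-duality framing is natural—and the paper's introduction and the overview in Section~\ref{sec:sd} do allude to~\cite{MRS98} and duality—but the actual proof of Theorem~\ref{thm:matching} takes a different, constructive route that sidesteps exactly the obstacle you flag, and it does not invoke~\cite{MRS98} at all. The core is Lemma~\ref{lemma:matching}: for any even $\cK'\subseteq\cK$ one can randomly output a matching on at least $|\cK'|/4$ pairs together with edge-disjoint paths of length $\le 16D$ supporting it, hitting each edge of $G$ with probability $\le 4/p$. To prove it, one forms the capacitated subgraph $G'$ on the edges of $\bigcup_{u_i\in\cK'}\cQ_i$, applies Li--Li~\cite{li2004network} to obtain a fractional Steiner-tree packing of value $p/2$ in $G'$ with \emph{no} diameter constraint, extracts from each packed tree a perfect matching supported by edge-disjoint paths inside that tree, and then uses the total-weight bound $w(E')\le |\cK'|pD$ to argue that at least $p/4$ of the packing mass sits on ``good'' trees in which $\ge |\cK'|/4$ of those paths have length $\le 16D$. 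Sampling a good tree proportionally to its weight and discarding the long pairs gives the lemma. Iterating the lemma $O(\log k)$ times contracts $\cK$ to a single vertex and produces a random Steiner tree of diameter $O(D\log k)$ that uses each edge with probability $O((\log k)/p)$; rescaling the resulting distribution yields a fractional packing of value $\Omega(p/\log k)$.

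Your plan has a genuine gap precisely where you suspect. The~\cite{MRS98} bicriteria guarantee is against the \emph{integral} optimum, whereas your duality argument needs it against the value of your length-constrained flow relaxation—equivalently, a polylogarithmic integrality gap for that relaxation of bounded-diameter Steiner tree. That is not what~\cite{MRS98} establishes (their LAST/clustering operates on an integral metric, not on a fractional $x$), and no such LP-relative bound is proved in the paper or the cited literature. Your two fallbacks—``open up the rounding and verify it charges to the LP'' or ``mimic the construction directly on $(G,y)$''—are the missing theorem, not a verification; without it you are stuck at the star bound $(k-1)W/p$, which only yields packing value $\tOm(p/k)$. The paper's construction avoids the whole issue because it never needs a single cheap short tree against an adversarial cost vector $y$: it only needs a \emph{distribution} over short trees with low per-edge marginals, and the unconstrained Steiner packing from~\cite{li2004network} together with an averaging cutoff on path lengths delivers exactly that.
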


\begin{proof}
We use the following lemma:

\begin{lemma}
\label{lemma:matching}
There is a randomized algorithm that given $\cK' \subseteq \cK$ of even cardinality outputs a matching $M$ over $\cK'$ and a set $ \cP$ of $|M|$ edge disjoint paths supporting $M$ such that (i) $|M| \geq |\cK'|/4$, (ii) all paths in $ \cP$ have length at most $16D$, and (iii) for every $e \in E$, $\Pr[e \text{ is used by paths in } \cP] \leq 4/p$.
\end{lemma}
	

\begin{proof} 
Let $E'$ be the set of all edges used by paths in $\cup_{u_i \in \cK'} \cQ_i$, let $w_e \leq 1$ be the total weight of paths in $\cQ_i$ that use $e$, and let $w(E')$ be the sum of edge weights of edges in $E'$.
So $w(E') \leq |\cK'|pD$.   
Let $G' = (V, E')$ with edge capacities $w_e$.
By \cite{li2004network} we can find a fractional Steiner tree packing $(\cT', z')$ of value $p/2$ in $G'$.  
However, there is no guarantee for the diameters of the trees in $ \cT'$.
Focus on each tree $T \in  \cT'$. 
It is not hard to find a perfect matching $M$ over $\cK'$, and a set of $|M|$ edge-disjoint paths $\cP$ in $T$ that supports $M$.  
Here, one needs to pair the elements of $\cK'$ iteratively starting from the pair with the least common ancestor which is furthest from a predefined root, removing that pair, and recursing.
We say a path $P \in \cP$ is short if its length is at most $16D$; otherwise, we say $P$ is long.  We say that $T$ is bad if the number of long paths in $\cP$ is at least $|\cK'|/4$; otherwise, we say $T$ is good. It follows that $\sum_{T \in  \cT': T \text{ bad}} z'_T \leq p/4$, as otherwise we have $w(E') > 16D \times |\cK'|/4 \times p/4 = |\cK'|pD$. A contradiction.
Thus, $Z'= \sum_{T \in  \cT': T \text{ good}} z'_T \geq p/4$. 

The randomized algorithm now works as follows. We first randomly choose a good tree $T\in  \cT'$ with probability $z'_T/Z'$.  
Then we take the perfect matching $M$ over $\cK'$ and the set of $|M|$ edge-disjoint paths $\cP$ in $T$ that support $M$. 
We remove all long paths from $\cP$ and their corresponding pairs from $M$. Then we output $(M, P)$. As each edge $e \in E'$ has $w_e \leq 1$, we have that 
$$
\Pr[e \text{ is used by paths in } \cP] = \frac{\sum_{T \in  \cT': T \text{good}, T \ni e} z'_T}{Z'} \leq  \frac{w_e}{Z'} \leq \frac{4}{p}.
$$
This finishes the proof of Lemma~\ref{lemma:matching}.
\end{proof}
	
We now proceed to the proof of Theorem~\ref{thm:matching}.  
We shall define a randomized algorithm to output a Steiner tree $T$ over $\cK$ of diameter at most $\tO(D)$. 
The final packing is implicitly defined by the randomized algorithm.  That is, a tree $T$ has $z_T$ value proportional to the probability that the randomized algorithm outputs $T$. The algorithm is a simple application of Lemma~\ref{lemma:matching} above and proceeds as follows: Initially, set $\cK' \gets \cK$, $T \gets \emptyset$. Now, repeat the following steps until $|\cK'| = 1$: (i) apply Lemma~~\ref{lemma:matching} to find a matching $M$ over $\cK'$ and its corresponding supporting paths $\cP$, (ii) add the edges in $\cP$ to $T$, and (iii) for every $(u, v) \in M$, arbitrarily remove one of the two vertices in $\{u, v\}$ from $\cK'$.  Finally, return $T$. Note that this procedure recurses $\left \lceil \log_{4/3}{k}\right\rceil \le 4\log{k}$  many times (and the final diameter and congestion in the worst-case gets multiplied by $4\log{k}$).
Lemma~\ref{lemma:matching} implies that the diameter of $T$ is at most $64 D \log k=\tO(D)$. Moreover, for every $e \in E$, the probability that $e \in T$ is at most $\frac{16}{p}\log k$. 
	
To obtain the fractional Steiner Tree packing, let $p_T$ be the probability that tree $T$ is returned by the randomized algorithm. It follows that $z_T=\frac{p}{16\log{k}}p_T$ is a solution to the Steiner Tree packing LP of value $\frac{p}{16\log{k}}=\tOmega(p)$. This finishes the proof of Theorem~\ref{thm:matching}.
\end{proof}

\subsection{Steiner tree upper bounds}
\label{app:sd-ub}

We consider a reasonably large class of composed functions. In particular, given a function $g:\{0,1\}^n\to \{0,1\}$, the class of functions $g\circ\SYM$ if the class of all functions $f:\left(\{0,1\}^n\right)^{\cK}\to\{0,1\}$ such that there exits `inner' symmetric functions $h_i:\{0,1\}^{\cK}\to\{0,1\}$ for $i\in [n]$ such that
\[f(\{\vx_u\}_{u\in \cK}) = g\left( h_1(\{\vx_u[1]\})_{u\in\cK},\dots, h_n(\{\vx_u[n]\})_{u\in\cK}\right).\]
Note that $\SD_{\cK,n}$ is a special case when $g$ is the $n$-bit $OR$ and $h_i$ is the $k$-bits $\AND$. Next, we argue that all such functions have a simple Steiner tree type upper bound.
We now show that
\begin{lemma}
\label{lemma:k-SD}
For any graph $G$ and subset of players $\cK$, let $f$ be in $g\circ \SYM$ for an arbitrary $g:\{0,1\}^n\to\{0,1\}$. Then
\[R_0(f,G,\cK)  \leq \tO\left( \min_{\Delta\in [|V|]} \left(\frac{n}{\ST(G,\cK,\Delta)}+\Delta\right)\right).\]
\end{lemma}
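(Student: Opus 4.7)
The plan is to exploit the symmetric structure of the inner functions $h_i$ to pipeline their computation up an (approximately) edge-disjoint family of low-diameter Steiner trees spanning $\cK$. Fix $\Delta^*$ achieving the minimum of $n/\ST(G,\cK,\Delta)+\Delta$, set $p^* = \ST(G,\cK,\Delta^*)$, and fix an arbitrary root $u_0 \in \cK$.

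First I would convert the fractional Steiner tree packing witnessing $\ST(G,\cK,\Delta^*) = p^*$ (with every tree of diameter at most $\Delta^*$) into an integral family of $p = \tilde\Omega(p^*)$ Steiner trees, each of diameter at most $\Delta^*$, with edge-congestion $\tilde O(1)$. This is a standard randomized-rounding step: sample each tree $T$ of the fractional solution with probability proportional to $z_T$, scaled up by an $O(\log|V|)$ factor so Chernoff bounds yield the congestion bound. The same high-level strategy is already used in Theorem~\ref{thm:matching}, though here we want every terminal of $\cK$ inside every sampled tree rather than a random matching, so we apply it directly to the trees in the LP support instead of to the supporting paths of a matching.

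Next, root each sampled tree at $u_0$, partition the coordinate set $[n]$ into $p$ blocks of size $\lceil n/p \rceil$, and assign one block to each tree. On tree $T_i$ I would run a bottom-up pipelined convergecast to compute, at $u_0$, the symmetric output $h_j$ for every $j$ in the block assigned to $T_i$. Because $h_j$ is symmetric, each internal node only needs to forward a count in $\{0,1,\ldots,k\}$ (i.e.\ $O(\log k)$ bits) per coordinate; by standard pipelining a single tree $T_i$ would compute its $n/p$ counts in $\tilde O(n/p + \Delta^*)$ rounds in isolation. Running all $p$ pipelines in parallel amounts to a packet-routing instance with dilation $\tilde O(\Delta^*)$ and per-edge congestion $\tilde O(n/p)$ (since at most $\tilde O(1)$ trees share any edge and each such tree pushes $\tilde O(n/p)$ bits over it), so by the Leighton--Maggs--Rao scheduling theorem — or, more simply, by direct time-multiplexing of the $\tilde O(1)$ trees through every edge — a valid schedule of length $\tilde O(n/p^* + \Delta^*)$ exists.

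After this phase, $u_0$ knows every $h_j$, computes $g(h_1,\ldots,h_n)$ locally, and broadcasts the single-bit answer along any one of the trees in $O(\Delta^*)$ rounds, so that every terminal in $\cK$ learns the output and the total round count is $\tilde O\bigl(n/\ST(G,\cK,\Delta^*) + \Delta^*\bigr)$, as required. The step I expect to be the main obstacle is the very first one: producing a rounding that simultaneously preserves the LP value, the diameter constraint, and the low-congestion guarantee. This is precisely the bi-criteria rounding flavor of~\cite{MRS98} that already underpins Theorem~\ref{thm:matching}, which is why the loss should be only polylogarithmic and therefore safely absorbed into $\tilde O(\cdot)$.
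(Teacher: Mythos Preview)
Your proposal is correct and follows essentially the same route as the paper: round the fractional diameter-$\Delta$ Steiner-tree packing to $\tilde\Omega(\ST(G,\cK,\Delta))$ integral trees via Raghavan--Thompson, split the $n$ coordinates across the trees, pipeline the symmetric aggregates (sums modulo $k+1$, hence $O(\log k)$ bits per coordinate) up each tree to a common root, and broadcast the answer. One clarification: the step you flag as the main obstacle is in fact routine here, because the LP defining $\ST(G,\cK,\Delta)$ ranges only over $\cT_{\Delta,\cK}$, so every tree in the fractional support already has diameter at most $\Delta$ and sampling from the support preserves the diameter constraint automatically; the bi-criteria machinery of~\cite{MRS98} and Theorem~\ref{thm:matching} is used only for the \emph{lower}-bound direction, and for this upper bound the paper simply cites~\cite{raghavan1987randomized}.
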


\begin{proof}
Let $\Delta \in [|V|]$ and consider an optimal fractional solution to the $\Delta$-diameter Steiner Tree (ST) packing LP of value  $\ST(G,\cK,\Delta)$.
Such a solution can be rounded to an integral ST packing of value $\tOmega(\ST(G,\cK,\Delta)$ \cite{raghavan1987randomized}.
Let $u_0 \in \cK$.
For every tree in the ST packing it is straightforward to schedule the transmission of a stream of bits from each terminal in $\cK \setminus \{u_0\}$ towards $u_0$ such that vertex $u_0$ receives for all $i\in [n]$, the sum of the $i$th bits in $\cK \setminus \{u_0\}$. Note that since the $h_i$'s are symmetric functions this is enough for $u_0$ to compute the value of $f$. If each terminal $u$ holds $m$ bits $\vx_u\in\{0,1\}^m$, using a single tree, vertex $u_0$ will be able to compute the sum of the collection $\{\vx_u\in\{0,1\}^m\}_{u\in\cK}$ in at most $m\left\lceil\log{k}\right\rceil+\Delta$ rounds.
Using the $\tOmega(\ST(G,\cK,\Delta)$ trees in parallel one may set $m=\tO\left(\frac{n}{\ST(G,\cK,\Delta)}\right)$ on each tree to conclude our assertion.
\end{proof}

\subsection{Some tight bounds}
\label{app:tight-ST}

As noted earlier, $\SD_{\cK,n}$ is a special case of the composed function from Section~\ref{app:sd-ub}. Lemma~\ref{lemma:k-SD} along with Theorem~\ref{thm:sd--gen-lb-main} (where we use the well-known lower bounds for two-party $\SD$~\cite{Razborov92} and setting $\tilde \vx$ to be the all $1$s vector) 
proves Theorem~\ref{thm:k-SD}. 

We sketch how this result can be extended to a larger family of composed functions.

\begin{prop}
\label{prop:gen-comp}
Consider the class of all composed functions (in the sense of Section~\ref{app:sd-ub}) where all the inner symmetric functions $h_i$'s are not the constant function, the parity (or its negation). Further, the outer function is such that $g(\vx[1]\vee\vy[1],\dots,\vx[n]\vee\vy[n])$ has two party-communication complexity of $\Omega(n)$. Then for any $\eps\ge 0$, every function $f$ in this class satisfies:
\[R_{\eps}(f,G,\cK)  = \tT\left( \min_{\Delta\in [|V|]} \left(\frac{n}{\ST(G,\cK,\Delta)}+\Delta\right)\right).\]
\end{prop}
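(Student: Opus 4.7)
The upper bound follows directly from Lemma~\ref{lemma:k-SD}, since any $f$ in this class lies in $g \circ \SYM$, and $R_\eps \leq R_0$ for every $\eps \geq 0$. So the plan is to prove the matching lower bound by invoking Theorem~\ref{thm:sd--gen-lb-main} with parameter $n' = \Omega(n)$. For this it suffices to show that for every pair of distinct $a, b \in \cK$ one can choose $\tilde \vx \in (\{0,1\}^n)^{\cK \setminus \{a, b\}}$ such that $\CC_\eps(f_{\{a\}, \{b\}, \tilde \vx}) = \Omega(n)$.

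Fix such a pair $a, b$. For each coordinate $i \in [n]$, I will set the bits $\{\tilde \vx[v][i]\}_{v \in \cK \setminus \{a, b\}}$ so that their sum is a cleverly chosen weight $w_i \in \{0, 1, \ldots, k - 2\}$. This makes the induced two-bit function $\tilde h_i(x, y) := h_i(x + y + w_i)$ be determined by the triple $(h_i(w_i), h_i(w_i + 1), h_i(w_i + 2))$. The combinatorial claim I would prove is: since $h_i$ is symmetric, non-constant, and not (the negation of) parity, one can find a $w_i$ so that this triple lies in $\{(0,0,1), (0,1,1), (1,0,0), (1,1,0)\}$, i.e., so that $\tilde h_i$ is one of $\{\AND, \OR, \mathrm{NAND}, \mathrm{NOR}\}$. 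Indeed, if every triple were either ``flat'' (in $\{(0,0,0),(1,1,1)\}$) or ``alternating'' (in $\{(0,1,0),(1,0,1)\}$), a short induction on consecutive triples shows $h_i$ must be either constant or one of the two parity functions, contradicting the hypothesis.

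Each of AND, OR, NAND, NOR can be written uniformly as $c_i \oplus \OR(x \oplus a_i, y \oplus b_i)$ for appropriate constants $a_i, b_i, c_i \in \{0, 1\}$. After Alice and Bob locally complement the relevant bits of their inputs to obtain $\vx_a', \vx_b'$, the restricted function becomes $g'(\OR(\vx_a'[1], \vx_b'[1]), \ldots, \OR(\vx_a'[n], \vx_b'[n]))$, where $g'(z) := g(z \oplus (c_1, \ldots, c_n))$. Invoking the hypothesis on $g$---read as applying to each of the finitely many sign-twisted variants $g'$ as well---this two-party function has randomized communication complexity $\Omega(n)$, and Theorem~\ref{thm:sd--gen-lb-main} then yields the matching lower bound $\min_\Delta(n/\ST(G,\cK,\Delta) + \Delta) \le \tO(R_\eps(f,G,\cK))$.

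The main obstacle in this plan is the combinatorial step of finding, for every non-constant non-parity symmetric $h_i$, a weight $w_i$ whose triple is neither flat nor XOR/XNOR-alternating; once that is in hand the rest is bookkeeping. A secondary subtlety is the handling of the output-complement constants $c_i$, which is most cleanly resolved by interpreting the hypothesis on $g$ to apply to all bit-complemented variants (this is automatic for the natural examples covered, including the setting of Theorem~\ref{thm:k-SD}, where the inner restriction is AND and $g$ is OR, giving ordinary two-party $\SD$).
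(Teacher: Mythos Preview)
Your approach matches the paper's: upper bound from Lemma~\ref{lemma:k-SD}, lower bound from Theorem~\ref{thm:sd--gen-lb-main} after fixing the other $k-2$ players' bits so that each coordinate collapses to a two-bit OR-type gadget.

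Your combinatorial step is in fact more complete than the paper's sketch. The paper only searches for a weight $c_i$ with $h_i(c_i)\ne h_i(c_i+1)=h_i(c_i+2)$ (yielding OR or NOR), but such a $c_i\le k-2$ need not exist: for $h_i=\AND_k$ or $h_i=\mathrm{NAND}_k$ the only value change sits between weights $k-1$ and $k$. Your weaker requirement---any triple that is neither flat nor alternating---is the correct one, and your induction ruling out the all-flat/all-alternating case is sound.

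The output-complementation subtlety you flag is genuine, and the paper's sketch simply glosses over it: it asserts the restriction is ``exactly $g(\vx_a[1]\vee\vx_b[1],\dots)$,'' which is only literally true when every coordinate produces OR rather than NOR. So your explicit reading of the hypothesis as covering the input-bit-complemented variants $g'$ of $g$ is as much as either argument actually delivers; under the literal hypothesis on $g$ alone neither proof is complete, and your treatment is the more honest one.
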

Note that $\neg\SD_{\cK,n}$ belongs to this class of functions. 
\begin{proof}[Proof Sketch of Proposition~\ref{prop:gen-comp}]
Lemma~\ref{lemma:k-SD} proves an upper bound of $\tO\left( \min_{\Delta\in [|V|]} \left(\frac{n}{\ST(G,\cK,\Delta)}+\Delta\right)\right)$. Further, since the $h_i$'s are not one of the four ruled out functions, there is always a way to fix any $k-2$ of the inputs (other than say the terminals $a$ and $b$) such that value of $f$ is determined by $g'(\vx_a,\vx_b)=g(\vx_a[1]\vee\vx_b[1],\dots,\vx_a[n]\vee\vx_b[n])$. Indeed, by the choice of $h_i$, for every $i\in [n]$, there exist  a value $0\le c_i<k-1$ such that $h_i$ evaluates to different values on inputs with $c_i$ and $c_{i}+1$ ones. Further, it evaluates to the same value on inputs of size $c_{i}+1$ and $c_{i}+2$. In other words, if we pick $\tilde \vx$ such that the sum of the number of ones among $\tilde\vx[u]$ for all $u\in \cK\setminus \{a,b\}$ in the $i$th position is exactly $c_i$, then we note that $f_{\{a\},\{b\},\tilde \vx}$ is exactly $g(\vx_a[1]\vee\vx_b[1],\dots,\vx_a[n]\vee\vx_b[n])$.
By assumption $g'$ has  $\Omega(n)$ two party communication complexity, which by Theorem~\ref{thm:sd--gen-lb-main} implies an overall lower bound of $\tOm\left( \min_{\Delta\in [|V|]} \left(\frac{n}{\ST(G,\cK,\Delta)}+\Delta\right)\right)$.
\end{proof}

\section{Multicommodity flow type bounds}
\label{sec:ed}


\subsection{Circuits to Protocols}


Here we sketch the proof of Lemma~\ref{lem:ckt-protocol}, which we re-state below:
\newtheorem*{lem:ckt}{Lemma~\ref{lem:ckt-protocol}}
\begin{lem:ckt}[Restated]
Let $f:\left(\{0,1\}^n\right)^k\to \{0,1\}$ have a  circuit with constant fan-in and constant fan-out gates and depth $d$. Further, each level $i\in [d]$ has $s_i$ gates in it. Then
\begin{equation}
\label{eq:ckt-protocol-ub}
R_0(f,G,\cK) \le \sum_{i=1}^d \tO\left(\tau_{\MCF}\left(G,\cK,\frac{s_i}{k}\right)\right).
\end{equation}
Finally, we can upper bound the above by $\tO\left(d\cdot \tau_{\MCF}\left(G,\cK,\frac{s}{k}\right)\right)$ as well as $\tO\left(\frac{s}{k}\cdot \tau_{\MCF}\left(G,\cK,1\right)\right)$.
\end{lem:ckt}

The proof is an adaptation of an idea that was used in~\cite{DKO13} to design protocols for $G$ being a clique (i.e. the CONGEST-CLIQUE model).
Let $C$ be the given circuit for $f$. Then one can assign each gate of $C$ to each terminal in $\cK$ and then we evaluate each layer by setting up a multi-commodity flow problem where for each gate $g$ in the current level all the input gates (or their assigned terminals) send their value to $g$ (or the player that is assigned to $g$). Since at level $i$ $C$ has $s_i$ gates,  it can be shown via the probabilistic method that there exists an assignment of gates such that each terminal only has a total requirement of $\tO(s_i/k)$. 
We now present the details.

\begin{proof}[Proof of  Lemma~\ref{lem:ckt-protocol}]
 Assuming~\eqref{eq:ckt-protocol-ub} is correct, we note that the second bound follows by the simple observation that $s_i\le s$. Further, the third bound follows from Claim~\ref{claim:tau-sub-additive} and the fact that $\sum_{i=0}^d s_i\le s$.

We now argue~\eqref{eq:ckt-protocol-ub}. Let $C$ be the given $(s,d)$-bounded circuit for $f$. For every $0\le i\le d$, let $s_i$ be the number of gates a level $i$. (Note that $s_0=nk$ and $s_d=1$.) The idea is to evaluate the circuit $C$ in the given delay. We will do so by evaluating all gates in a given level one at a time. In particular, we will argue that we can evaluate the gates at level $i$ with delay $\tO\left(\tau_{\MCF}\left(G,\cK,\frac{s_i}{k}\right)\right)$. Note that this suffices to prove~\eqref{eq:ckt-protocol-ub}.

We will need a map from the gates of $C$ to terminals in $\cK$ with certain properties. To show the existence of such a map,
let $\pi$ denote a random map from the $s$ gates of $C$ to the $k$ players. We note that by a standard balls and bins argument, any set of $\Theta(s_i)$ gates are assigned to any specific player with load $L_i=O\left(\frac{s_i}{k}\log{kd}\right)$ with probability $>1-1/(2d)$. (We will see shortly that this is enough to handle all bad cases that may arise in the rest of our arguments.) We begin with level $0$. Note that in this case the $s_0=nk$ input bits would need to be re-routed according to $\pi$. By the balls and bins argument, this means we have a demand set where each player has load $L_0+s_0/k$. (Recall that initially each player has $n=s_0/k$ bits.) Thus, we can `evaluate' level $0$ with delay $\tau_{\MCF}\left(G,\cK,L_0+s_0/k\right)$, which by Claim~\ref{claim:tau-sub-additive} is $\tO\left(\tau_{\MCF}\left(G,\cK,\frac{s_0}{k}\right)\right)$, as desired.

Assume by induction we have evaluated all levels up to level $i\ge 0$. Now consider level $i+1$. Consider an arbitrary gate $g$ whose inputs are gates $g'$ (and possibly) $g''$. We add a demand pair with requirement $1$ between the pairs $(\pi(g),\pi(g'))$ and $(\pi(g),\pi(g''))$. Note that since there are $s_{i+1}$ such gates $g$ and at most $2s_{i+1}$ input gates from previous levels.
Thus, by the balls and bins argument, each player has at most $3L_{i+1}$ of the gates at level $i+1$ and their inputs. This implies that $\tau_{\MCF}\left(G,\cK,3L_{i+1}\right)$ rounds suffice to evaluate level $i+1$, which by Claim~\ref{claim:tau-sub-additive} is $\tO\left(\tau_{\MCF}\left(G,\cK,\frac{s_{i+1}}{k}\right)\right)$, as desired.

Finally we note that we had at most $2d-1$ bad events (where a bad event is at level $i$ some player has more than $L_i$ gates from level $i$ or one of its input gates assigned to it) that we would like $\pi$ to avoid.  By the union bound, there exists a map $\pi$ that makes the protocol above go through with the required round complexity.
\end{proof}


\subsection{The lower bound}



We are now ready to state our most general lower bound.
\begin{thm}
	\label{thm:ed-lb-main}
	Let $G, \cK, f:\big(\{0,1\}^n\big)^\cK \to \{0, 1\}, \eps\ge 0$ be defined as usual, and assume $k$ is even. Let $h: [k/2] \times [k/2] \to \R_{\geq 0}$. Assume the following is true: for every pair of disjoint sets $\cA, \cB  \subseteq \cK$ such that $|\cA|, |\cB| \leq k/2$, there exists some $\tilde \vx \in \big(\{0, 1\}^n\big)^{\cK \setminus (\cA \cup \cB)}$,  such that 
	$\CC_\eps\left(f_{\cA, \cB,\tilde \vx}\right) \geq h\big(|\cA|, |\cB|\big)$. Then,
	%
	\begin{align}
		\tau_{\MCF}\big(G, \cK, n'\big) \leq \tO\big(R_\eps(f, G, \cK)\big),
	\end{align}
	where $n' = \min_{y, z \in [0, k/2]: y + z > k/2 }\frac{h(y,z)}{y + z - k/2}$.
\end{thm}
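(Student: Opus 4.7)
The plan is to obtain this multi-commodity-flow lower bound by combining Theorem~\ref{thm:2pSD} with the cut-matching-game technique of \cite{KRV09} for constructing expander embeddings. Write $\tau = R_\eps(f, G, \cK)$, so the goal is to show that the desired demand is routable in $G^{(\tilde O(\tau))}$ with congestion $\tilde O(1)$.

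First, I would derive a family of ``generalized cut capacity'' inequalities in the timed graph $G^{(4\tau)}$. Fix any disjoint pair $\cA, \cB \subseteq \cK$ with $|\cA|, |\cB| \leq k/2$ and $|\cA|+|\cB| > k/2$. Any $\tau$-round protocol for $f$ on $G$ can be simulated on the contracted graph $G_{\cA, \cB}$ (where $v_\cA$ carries out the actions of all vertices in $\cA$, $v_\cB$ the actions of all vertices in $\cB$, and the remaining terminals use the inputs $\tilde \vx$) as a $\tau$-round two-party protocol for $f_{\cA, \cB, \tilde \vx}$ with terminals $\{v_\cA, v_\cB\}$. Hence $R_\eps(f_{\cA, \cB, \tilde \vx}, G_{\cA, \cB}, \{v_\cA, v_\cB\}) \le \tau$, and Theorem~\ref{thm:2pSD}, together with the hypothesis $\CC_\eps(f_{\cA, \cB, \tilde \vx}) \ge h(|\cA|, |\cB|)$, yields
\[\tauroute\bigl(G_{\cA, \cB}, \{v_\cA, v_\cB\}, h(|\cA|, |\cB|)\bigr) \leq 4\tau.\]
Equivalently, there is a flow of value $h(|\cA|, |\cB|) \geq n'(|\cA| + |\cB| - k/2)$ from $v_\cA$ to $v_\cB$ in $G_{\cA, \cB}^{(4\tau)}$, which pulls back to a flow of the same value in $G^{(4\tau)}$ whose source mass can be distributed freely among the copies of $\cA$ and whose sink mass similarly among the copies of $\cB$ in each time slice.

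Second, I would feed these capacities into a KRV-style cut-matching game on $\cK$. The game runs for $T = O(\log^2 k)$ rounds; in round $i$ the cut player produces a balanced bipartition $(A_i, B_i)$ of $\cK$ with $|A_i| = |B_i| = k/2$, and the matching player must output a (nearly) doubly-stochastic fractional matching $M_i$ between $A_i$ and $B_i$, along with a low-congestion routing of $M_i$ in $G^{(4\tau)}$. I would extract $M_i$ from the $h(k/2, k/2) \ge n' k/2$ units of flow supplied by the first step: flow-path decomposition yields a non-negative matrix on $A_i \times B_i$ of total mass $\ge n' k/2$, which, after rebalancing via the capacity inequalities at unbalanced sizes $h(y, z) \ge n'(y + z - k/2)$ used to absorb vertex-level excesses, becomes a doubly-stochastic matrix of mass $k/2$ with overall edge-congestion $\tilde O(1/n')$ in $G^{(4\tau)}$. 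Summing over the $T$ rounds, the standard KRV analysis shows that $H = \bigcup_i M_i$ is, with high probability, an $\Omega(1)$-expander on $\cK$, embedded into $G^{(\tilde O(\tau))}$ with total edge congestion $\tilde O(1/n')$.

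Third, having such an expander backbone, I would route the target demand (which sends $n'/k$ units between every ordered pair of $\cK$) as follows. An $\Omega(1)$-expander $H$ on $k$ vertices, by the lazy-random-walk bound (Theorem~\ref{thm:random-walk}), supports routing of the uniform $n'/k$-per-pair demand with congestion $\tilde O(n'/k)$ and dilation $O(\log k)$ along short $H$-paths. Lifting along the embedding multiplies congestion by $\tilde O(1/n')$ and dilation by $\tilde O(\tau)$, so the lifted routing sits in $G^{(\tilde O(\tau))}$ with congestion $\tilde O(1/k) \le 1$. Converting an $\tilde O(1)$-congestion routing in $G^{(\tilde O(\tau))}$ to a congestion-$1$ routing in $G^{(\tilde O(\tau))}$ costs a further $\tilde O(1)$ factor in time, giving $\tau_{\MCF}(G, \cK, n') \leq \tilde O(\tau) = \tilde O(R_\eps(f, G, \cK))$, as claimed. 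The main obstacle I expect is the matching-extraction step in the cut-matching game: the pullback flow produces only a non-negative bipartite weight matrix with possibly very uneven row and column sums rather than a perfect fractional matching. Turning it into a doubly-stochastic matching with routing, while keeping overall congestion within $\tilde O(1/n')$, is precisely where the generalized hypothesis on $h(y,z)$ at unbalanced sizes earns its keep; the corrective flows that equalize vertex loads are charged against the capacities guaranteed at those sizes, and verifying that the KRV expansion guarantee survives this rebalancing is the delicate technical core.
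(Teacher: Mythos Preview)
Your high-level strategy matches the paper's proof almost exactly: reduce to $\tauroute$ on contracted graphs via Theorem~\ref{thm:2pSD}, feed the resulting routability into the cut-matching game of \cite{KRV09} to build an expander embedded in $G^{(O(\tau))}$ with low expected congestion, and then route the uniform demand along short lazy-random-walk paths in that expander.

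The one place your write-up diverges from the paper is precisely the step you flag as the ``main obstacle'': extracting, from the pullback flow for a balanced bipartition $(A_i,B_i)$, a matching with balanced per-vertex load. You propose to start from the raw flow of value $\ge n'k/2$, path-decompose, and then \emph{rebalance} the resulting (possibly highly skewed) bipartite matrix using auxiliary corrective flows charged against the unbalanced bounds $h(y,z)\ge n'(y+z-k/2)$. As stated this is vague, and it is not clear that iteratively patching row and column excesses converges without blowing up congestion; you never specify what ``absorb vertex-level excesses'' means operationally.

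The paper sidesteps this rebalancing entirely with a single clean max-flow argument (Lemma~\ref{lemma:edps-between-AB}). One builds $\tilde G$ from $G^{(\tau)}$ by adding a super-source $s$ with exactly $n'$ parallel edges to each $(a,0)$ for $a\in A_i$, and a super-sink $t$ with $n'$ parallel edges from each $(b,\tau)$ for $b\in B_i$. An integral $s$--$t$ max-flow of value $n'k/2$ is then automatically an edge-disjoint path system in which every terminal is the origin (resp.\ destination) of \emph{exactly} $n'$ paths; no rebalancing is needed. The existence of such a flow is established by min-cut: any $s$--$t$ cut $(S,T)$ spends $n'(k-|\cA'|-|\cB'|)$ on the super edges (where $\cA'=\{a\in A_i:(a,0)\in S\}$ and $\cB'=\{b\in B_i:(b,\tau)\in T\}$), and the remaining cut edges in $G^{(\tau)}$ separate $\cA'\times\{0\}$ from $\cB'\times\{\tau\}$, hence have size $\ge h(|\cA'|,|\cB'|)\ge n'(|\cA'|+|\cB'|-k/2)$ by the unbalanced hypothesis. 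The two pieces sum to at least $n'k/2$. This is exactly where the condition on $h$ at all sizes $(y,z)$ with $y+z>k/2$ is used, and it replaces your entire rebalancing program with a one-line cut computation. With this $n'$-regular bipartite path system in hand, the paper decomposes it into $n'$ perfect matchings and lets the matching player pick one uniformly at random, so each edge of $G^{(\tau)}$ is used with probability $\le 1/n'$; the rest of your outline (expected congestion $O(\log^2 k)/n'$ over the $O(\log^2 k)$ rounds, then $O(\log^3 k)$-step random walk on the expander) goes through unchanged.
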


The above implies lower bounds for the $\ED_{\cK,n}$ function: 

\begin{cor}
	\label{lem:ED-lb}
	For any $G$ and $\cK$, if $n \geq 1 + 2\ceil{\log k}$, then we have
	\begin{equation*}
		\tau_{\MCF}\big(G,\cK, 1\big)  \leq \tO\big(R(\ED_{\cK,n},G,\cK)\big),
		\qquad \text{and} \qquad
		\tau_{\MCF}\big(G,\cK, n\big)  \leq \tO\big(R_0(\ED_{\cK,n},G,\cK)\big).
	\end{equation*}
\end{cor}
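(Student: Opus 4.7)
The plan is to invoke Theorem~\ref{thm:ed-lb-main} with carefully chosen bounds $h(|\cA|, |\cB|)$ on the two-party complexity of the restricted $\ED$ instances. The key trick is to choose $\tilde\vx$ to consist of $k - |\cA| - |\cB|$ distinct strings carrying a distinguished ``out-of-band'' prefix so that no string of $\tilde\vx$ can collide with any string that Alice or Bob might produce. Then $(\ED_{\cK,n})_{\cA, \cB, \tilde\vx}$ reduces to a pure two-party problem: the output is $1$ iff Alice's $|\cA|$ strings and Bob's $|\cB|$ strings, taken together with the fixed $\tilde\vx$, are all distinct.

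For the randomized lower bound, we reduce from classical two-party set disjointness $\SD$, whose randomized complexity on sets in $[m]$ is $\Omega(m)$ by Razborov. WLOG assume $|\cA| \leq |\cB|$, and reduce from $\SD$ on universe $[|\cA|]$. Alice encodes $S_A \subseteq [|\cA|]$ using two-bit prefixes: element $i \in S_A$ becomes $(\texttt{00}, \mathrm{bin}(i))$, and she adds $|\cA| - |S_A|$ filler strings of the form $(\texttt{01}, \mathrm{bin}(j))$; Bob symmetrically uses prefixes $\texttt{00}$ and $\texttt{10}$, and $\tilde\vx$ uses prefix $\texttt{11}$. By design, the only possible collision is between Alice's and Bob's payload strings, and it happens iff $S_A \cap S_B \neq \emptyset$; so $(\ED)_{\cA, \cB, \tilde\vx}$ computes $\neg\SD(S_A, S_B)$. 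The encoding fits in $n$ bits because $n \geq 1 + 2\lceil\log k\rceil \geq 2 + \log(k/2)$. Hence $h(|\cA|, |\cB|) = \Omega(\min(|\cA|, |\cB|))$.

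For the deterministic lower bound we need $h$ to grow by an extra factor of $n$, so we pack more information per string. Reduce from the two-party problem $\bigvee_{i=1}^{|\cA|} \EQ_m(X_i, Y_i)$, whose deterministic complexity is $\Omega(|\cA|\cdot m)$ by the log-rank bound: the communication matrix of its negation equals $(J_m - I_m)^{\otimes |\cA|}$, a tensor product of $|\cA|$ copies of the full-rank $2^m \times 2^m$ ``not-equal'' matrix. Encode Alice's data as strings $(\texttt{00}, \mathrm{bin}(i), X_i)$ and Bob's as $(\texttt{00}, \mathrm{bin}(i), Y_i)$ padded with fillers $(\texttt{10}, \mathrm{bin}(j))$, with $\tilde\vx$ using $(\texttt{11}, \mathrm{bin}(\ell))$. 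Setting $m = n - 2 - \lceil\log|\cA|\rceil$, the hypothesis $n \geq 1 + 2\lceil\log k\rceil$ guarantees $\log|\cA| \leq \log k \leq (n-1)/2$, whence $m = \Theta(n)$ and $h(|\cA|, |\cB|) = \Omega(\min(|\cA|, |\cB|) \cdot n)$.

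Finally, compute $n' = \min_{y + z > k/2}\frac{h(y, z)}{y + z - k/2}$. The elementary inequality $\min(y, z) \geq y + z - k/2$ (for $y, z \leq k/2$: WLOG $y \leq z$, so $\min(y,z) = y$ and $y + z - k/2 \leq y$ since $z \leq k/2$) yields $n' = \Omega(1)$ in the randomized case and $n' = \Omega(n)$ in the deterministic case. Theorem~\ref{thm:ed-lb-main} combined with the monotonicity of $\tau_{\MCF}$ in Claim~\ref{claim:tau-sub-additive} then gives both claimed inequalities. The main obstacle is the tight coupling between the prefix-length budget and the number of usable payload bits per string; the hypothesis $n \geq 1 + 2\lceil\log k\rceil$ is exactly what is needed to make the encoding fit while still leaving $\Theta(n)$ payload bits to feed into the hard two-party instance.
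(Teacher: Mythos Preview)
Your proposal is correct and follows essentially the same approach as the paper: invoke Theorem~\ref{thm:ed-lb-main} with $h(y,z)=c\cdot\min(y,z)$ (randomized) and $h(y,z)=c\cdot n\cdot\min(y,z)$ (deterministic), use prefix bits to isolate the fixed strings $\tilde\vx$ from Alice's and Bob's blocks, and then compute $n'$ via the elementary inequality $\min(y,z)\ge y+z-k/2$ for $y,z\le k/2$. The only differences are cosmetic: you spend two prefix bits and write out self-contained reductions (from $\SD$ on universe $[|\cA|]$ via Razborov, and from $\bigvee_i\EQ_m$ via the log-rank bound on $(J-I)^{\otimes|\cA|}$), whereas the paper spends a single prefix bit, cites~\cite{HW07} for the randomized sparse-disjointness lower bound, and appeals to its appendix Theorem~\ref{thm:EQ-k-by-k'} (which is the same log-rank argument on $\AND\circ\neg\EQ$) for the deterministic case.
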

\begin{proof}
        Let $f = \ED_{\cK,n}$. Fix some $\cA, \cB \subseteq \cK$ such that $\cA \cap \cB = \emptyset$ and $|\cA| \leq |\cB| \leq k/2$. We shall let $\tilde \vx \in \big(\{0, 1\}^{n}\big)^{\cK \setminus (\cA \cup \cB)}$ be a vector so that $\tilde \vx_{v, 1} = 1$ for every $v \in \cK \setminus (\cA \cup \cB)$, and  the $|\cK \setminus (\cA \cup \cB)|$ vectors  $\set{\tilde \vx_{v} }_{v \in \cK \setminus (\cA \cup \cB)}$ are different. This is possible since $n \geq 1 + 2\ceil{\log k}$. Then for the function $f_{\tilde \vx, \cA, \cB}(\vx_A, \vx_B)$, we are interested in the input pairs $(\vx_A, \vx_B)$ such that $\vx_{A,v}[1] = 0$ for every $v \in \cA$ and  $\vx_{B, v}[1] = 0$ for every $v \in \cB$. Thus, $f_{\tilde x, \cA, \cB}(x_A, x_B) = 1$ if and only if the $|\cA| + |\cB|$ strings $\{\vx_A[v]\}_{v \in \cA} \cup \{\vx_B[v]\}_{v \in \cB}$  are all  different.  In other words, we want to compute the two party $\SD$ problem on the sets $\{\vx_A[v]\}_{v\in \cA}$ and $\{\vx_B[v]\}_{v\in\cB}$. It is well-known that $\CC\left(f_{\cA, \cB,\tilde \vx}\right) \ge \Omega(|\cA|)$ (\cite{HW07}). We argue from first principles in Theorem~\ref{thm:EQ-k-by-k'} that $\CC_0\left(f_{\cA, \cB,\tilde \vx}\right) \ge \Omega(n|\cA|)$.

        Let $\eps=1/3$. Let $n' = \tilde \Omega(1)$ 
        be small enough. Let $h(y, z) = n' \min\{y, z\}$ for every $y, z \in [k/2]$.  Then $\min_{y, z \in [k/2]: y+z>k/2}\frac{h(y,z)}{y+z-k/2} = \min_{0 \leq y \leq z \leq k/2:  y+z>k/2}\frac{yn'}{y+z-k/2} = n'$. Thus, if $n'$ 
        is small enough, then the condition for Theorem~\ref{thm:ed-lb-main} holds. Thus, we have 
        \[\tau_{\MCF}\big(G, \cK, n'\big) \leq \tO\big(R(\ED_{\cK,n}, G, \cK)\big).\] 
        Then by Claim~\ref{claim:tau-sub-additive}, $\tau_{\MCF}(G, \cK, 1) \leq \ceil{\frac{1}{n'}}\tau_{\MCF}(G, \cK, n') \leq \tilde O(1) \tau_{\MCF}(G, \cK, n') \leq \tO\big(R(\ED_{\cK,n}, G, \cK)\big)$.

        Let $\eps=0$. Let $n' = \Omega(n)$ 
        be small enough. Let $h(y, z) = n' \min\{y, z\}$ for every $y, z \in [k/2]$.  Again, if $n'$ is small enough, then the condition for Theorem~\ref{thm:ed-lb-main} holds. Thus, we have 
\[\tau_{\MCF}\big(G, \cK, n'\big) \leq \tO\big(R_0(\ED_\cK, G, \cK, n)\big).\] 
Then $\tau_{\MCF}(G, \cK, n) \leq \ceil{\frac{n}{n'}}\tau_{\MCF}(G, \cK, n') \leq  O(1) \tau_{\MCF}(G, \cK, n') \leq \tO\big(R_0(\ED_{\cK,n}, G, \cK)\big)$, by Claim~\ref{claim:tau-sub-additive}.
\end{proof}

In Section~\ref{app:apps} we make use of other corollaries of Theorem~\ref{thm:ed-lb-main}.

We now sketch the proof of Theorem~\ref{thm:ed-lb-main} (specialized to $R(\ED_{\cK,n},G,\cK)$). 
First we note that any cut separating $k'$ terminals from the rest of the $k-k'$ terminals induces a two party communication complexity problem that needs $\Omega(\min(k',k-k'))$ communication across the cut. This in conjunction with our argument for $k=2$ implies that there are $\Omega(\min(k',k-k'))$ edge disjoint paths between the two subsets in $G^{(\tau)}$. We now use the cut-matching game framework of Khandekar, Rao and Vazirani~\cite{KRV09} to argue that we can construct an expander graph each of whose edges can be embedded into $G^{(\tau)}$ in the sense that each edge in the expander corresponds to a path in $G^{(\tau)}$ (and these paths have low congestion). Since the multicommodity flow with a total demand requirement of $\tO(k)$ from each terminal can be done with $d=\tO(1)$ delay on the expander graph, we can route these paths in $G^{(d\cdot \tau)}$. (We need to make sure that the paths in the expander are not too long but this can be done.) This implies a protocol for the multi-commodity flow problem that we need to solve for the upper bound with delay $\tO(\tau)$, as desired.

We now formally prove Theorem~\ref{thm:ed-lb-main}.  Fix any two disjoint sets $\cA, \cB \subseteq \cK$ such that $|\cA|, |\cB| \leq k/2$. Let $\tilde \vx \in \big(\{0,1\}^n\big)^{\cK \setminus (\cA \cup \cB)}$ be the vector satisfying the condition of the theorem for the pair $(\cA, \cB)$.  

By Theorem~\ref{thm:2pSD}, we have that 
	\[\tauroute(G_{A,B},\{v_A,v_B\},\CC_{\eps}(f_{A,B, \tilde \vx})) \le 4 {R_{\eps}(f_{A,B,\tilde \vx},G,\{v_A,v_B\}}).\]
where we overload notation for $R_{\eps}$ and $\CC_{\eps}$ by allowing input function to have inputs from different domains: i.e. unlike the original definition, $f_{\cA, \cB,\tilde \vx}:\left(\{0,1\}^n\right)^{\cA}\times \left(\{0,1\}^n\right)^{\cB}$ has two inputs from different domains.\footnote{This is the only place in this paper where we will need this overloading of notation.}

It is easy to see that  $R_\eps\big(f_{\cA, \cB, \tilde \vx}, G_{\cA, \cB}, \big\{v_\cA, v_\cB\big\}\big) \leq R_\eps\big(f, G, \cK\big)$, since every protocol to compute $f$ in $G$ among $\cK$ leads to a protocol to compute $f_{A, B, \tilde \vx}$ in $G_{\cA,\cB}$ between $v_\cA$ and $v_\cB$. Let $\tau =  \ceil{4R_\eps\big(f, G, \cK\big)}$. Then,  $\tauroute\Big(G_{\cA, \cB}, \{v_\cA, v_\cB\}, \CC_\eps\big(f_{\cA, \cB, \tilde \vx}\big)\Big)\leq \tau$.

 Since $\CC_\eps\big(f_{\cA, \cB, \tilde \vx}\big) \geq h\big(|\cA|,|\cB|\big)$, we have $\tauroute\Big(G_{\cA, \cB}, v_\cA, v_\cB, h\big(|\cA|,|\cB|\big)\Big) \allowbreak \leq  \tau$. Thus,  there are $h\big(|\cA|,|\cB|\big)$ edge disjoint paths from $(v_\cA, 0)$ to $(v_\cB, \tau)$ in $G_{\cA, \cB}^\tau$.   This implies that there are $h\big(|\cA|,|\cB|\big)$ edge-disjoint paths from $\cA \times \{0\}$ to $\cB \times \{\tau\}$ in $G^\tau$. To see this, focus on each of the $h\big(|\cA|,|\cB|\big)$ edge-disjoint paths from $(v_\cA, 0)$ to $(v_\cB, \tau)$ in $G_{\cA, \cB}^\tau$. Let $t$ be the smallest number such that $(v_\cB, t)$ is in the path; let $t'$ be the largest number such that $t' < t$ and $(v_\cA, t')$ is in the path.  Then, we modify this path as follows: we travel from $(v_\cA, 0)$ to $(v_\cA, t')$ using memory edges and then then use the segment of the path from $(v_\cA, t')$ to $(v_\cB, t)$, and then travel from $(v_\cB, t)$ to $(v_\cB, \tau)$ using the memory edges. After the modifications, the $h\big(|\cA|,|\cB|\big)$ edge-disjoint paths in $G_{\cA, \cB}^\tau$ can be naturally mapped back to $h\big(|\cA|,|\cB|\big)$ edge-disjoint paths in $G^\tau$ from $\cA \times \{0\}$ to $\cB \times \{\tau\}$. Next, we argue that these paths have even more structure.

\begin{lemma}
	\label{lemma:edps-between-AB}
	For partition $(\cA, \cB)$ of $\cK$ such that $|\cA| = |\cB| = k/2$,  we can find $n'k/2$ edge-disjoint paths from $\cA \times \{0\}$ to $\cB \times \{\tau\}$ in $G^\tau$, such that every vertex in $\cA \times \{0\}$ is the origin of exactly $n'$ paths, and every vertex in $\cB\times \{\tau\}$ is the destination of exactly $n'$ paths. 
\end{lemma}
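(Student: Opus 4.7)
The plan is to prove the lemma by a max-flow min-cut argument, leveraging the edge-disjoint path guarantee just established for arbitrary disjoint pairs in the preceding paragraph.

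First, I would augment $G^\tau$ with a super-source $s$ and super-sink $t$ as follows: add an edge of capacity $n'$ from $s$ to $(a,0)$ for each $a \in \cA$, and an edge of capacity $n'$ from $(b,\tau)$ to $t$ for each $b \in \cB$. All original edges of $G^\tau$ keep unit capacity (with memory edges of infinite capacity). Then the existence of $n'k/2$ edge-disjoint paths in $G^\tau$ with exactly $n'$ originating at each vertex of $\cA \times \{0\}$ and exactly $n'$ terminating at each vertex of $\cB \times \{\tau\}$ is equivalent to an integer $s$-$t$ flow of value $n'k/2$ in this augmented network. Integer max-flow equals integer min-cut here since all relevant capacities are integral.

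Next, I would bound the capacity of an arbitrary $s$-$t$ cut $(S,T)$. Let $\cA' = \{a \in \cA : (a,0) \in T\}$ and $\cB' = \{b \in \cB : (b,\tau) \in S\}$; these are disjoint subsets of $\cK$ with $|\cA'|, |\cB'| \le k/2$. The $s$-edges contribute at least $n'(k/2 - |\cA'|)$, the $t$-edges contribute at least $n'(k/2 - |\cB'|)$, and the remaining cut edges must separate $\cA' \times \{0\}$ from $\cB' \times \{\tau\}$ in $G^\tau$. Invoking the edge-disjoint paths result (proven just above Lemma~\ref{lemma:edps-between-AB}, applied to the disjoint pair $(\cA', \cB')$), the $G^\tau$-contribution is at least $h(|\cA'|, |\cB'|)$. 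So the total cut value is at least
\begin{align*}
n'\bigl(k - |\cA'| - |\cB'|\bigr) + h\bigl(|\cA'|, |\cB'|\bigr).
\end{align*}

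Finally, I would verify this lower bound is always at least $n'k/2$, which reduces to showing $h(|\cA'|, |\cB'|) \ge n'(|\cA'|+|\cB'|-k/2)$. If $|\cA'|+|\cB'| \le k/2$ the right side is non-positive and the inequality is trivial. Otherwise it is exactly the defining inequality of $n' = \min_{y,z \in [0,k/2]:\ y+z > k/2}\frac{h(y,z)}{y+z-k/2}$. Degenerate corner cases (e.g.\ $\cA' = \emptyset$ or $\cB' = \emptyset$, where $h$ is not invoked) are handled by the trivial bound $n'(k - |\cA'| - |\cB'|) \ge n'k/2$. The main conceptual obstacle is precisely the right choice of augmented flow network: one has to apply the previous existence lemma to \emph{every} subpair $(\cA', \cB')$ induced by a cut, not just $(\cA, \cB)$ itself, and the definition of $n'$ as a minimum over all $(y,z)$ is tailored exactly for this simultaneous use.
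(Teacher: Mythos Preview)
Your approach is essentially identical to the paper's: augment $G^\tau$ with a super-source and super-sink, invoke max-flow min-cut, and lower bound each cut by combining the source/sink edge contribution with the edge-disjoint-path guarantee for the induced subpair, using the defining minimum for $n'$ to finish. One small slip: your definitions of $\cA'$ and $\cB'$ have $S$ and $T$ reversed---you need $\cA' = \{a \in \cA : (a,0) \in S\}$ and $\cB' = \{b \in \cB : (b,\tau) \in T\}$, so that the cut $s$- and $t$-edges number $k/2-|\cA'|$ and $k/2-|\cB'|$ respectively and so that the residual cut in $G^\tau$ genuinely separates $\cA'\times\{0\}\subseteq S$ from $\cB'\times\{\tau\}\subseteq T$; with your stated conventions both the edge count and the separation claim are inverted.
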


\begin{proof}
	Construct a directed graph $\tilde G$ as follows. We start from $G^\tau$,  and add a super source $s$ and a super sink $t$. Then for every $u \in \cA$, we add $n'$ edges from $s$ to $(u, 0)$. For every $v \in \cB$, we add $n'$ edges from $(v, \tau)$ to $t$. To prove the lemma, it suffices to show that there are $n'k/2$ edge-disjoint paths from $s$ to $t$ in $\tilde G$.  Assume otherwise. Then, there is an $s$-$t$ cut $(S, T)$ in $\tilde G$ whose size is strictly less than $n'k/2$. 	Let $\cA'$ be the subset of $\cA$ such that $S \cap (\cA \times \{0\}) = \cA' \times \{0\}$; let $\cB'$ be the subset of $\cB$ such that $T \cap (\cB \times \{\tau\}) = \cB' \times \{\tau\}$. The number of edges  in the cut  that are incident to $s$ or $t$ is exactly $n' \Big(\big|T \cap (\cA \times \{0\})\big| + \big|S \cap (\cB \times \{\tau\})\big|\Big) = n'(|\cA \setminus \cA'| + |\cB \setminus \cB'|) = n'(k - |\cA'| - |\cB'|)$.  It implies that the number of edges in $G^\tau$ in the $(S, T)$ cut is strictly less than $n'k/2 -  n'(k - |\cA'| - |\cB'|) = n'\big(|\cA'| + |\cB'|-k/2\big) \leq h(|\cA'|,|\cB'|)$, by the definition of $n'$. (Note that if $|\cA'|+|\cB'|\le k/2$ then the inequality is trivially true since $h$ is always positive.)
Thus, we find a cut in the original graph $G^\tau$ of size strictly less than $h(|\cA'|,|\cB'|)$ separating $\cA' \times \{0\}$ and $\cB' \times \{\tau\}$, a contradiction.  This finishes the proof of the lemma.
\end{proof}

	We use the cut-matching game of Khandekar, Rao and Vazirani~\cite{KRV09}. In this game, we are given a set $V_X$ of $N_X$ vertices, where $N_X$ is even, and two players: a cut player, whose goal is to construct an expander $X = (V_X, E_X)$ on the set $V_X$ of vertices, and a matching player, whose goal is to delay its construction. The game is played in iterations. We start with the graph $X = (V_X, \emptyset)$.
	
	In each iteration $j$, the cut player computes a bi-partition $(\cA_j,\cB_j)$ of $V_X$ into two equal-sized sets, and the matching player returns some perfect matching $M_j$ between the two sets. The edges of $M_j$ are then added to $E_X$. Khandekar, Rao and Vazirani have shown that there is a strategy for the cut player, guaranteeing that after $O(\log^2N_X)$ iterations we obtain a $1/2$-expander with high probability.  Subsequently, Orecchia et al.~\cite{OSV08} have shown the following improved bound:
	
	\begin{thm}[Cut-Matching Game \cite{OSV08}]\label{thm:CMG}
		There is a probabilistic algorithm for the cut player, such that, no matter how the matching player plays, after $O(\log^2N_X)$ iterations, graph $X$ is an $\alphaCMG(N_X) = \Omega(\log N_X)$-expander, with constant probability.
	\end{thm}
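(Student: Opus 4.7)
The plan is to exhibit a randomized strategy for the cut player, following the matrix multiplicative-weights approach of Orecchia--Sachdeva--Vishnoi, that turns every response of the matching player into enough spectral progress to drive the second eigenvalue of the Laplacian of $X$ up to $\Omega(\log N_X)$. Throughout, let $L_j$ denote the Laplacian of the graph $X_j = (V_X, M_1 \cup \cdots \cup M_j)$ after $j$ iterations, so that $L_0 = 0$ and $L_j = L_{j-1} + L(M_j)$. The final step will convert a spectral-gap bound on $L_T$ into the desired expansion via Cheeger's inequality, which is already stated in the paper.

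At iteration $j$, the cut player forms a density matrix $\rho_j$ proportional to $\exp(-\eta L_{j-1})$, for a parameter $\eta = \Theta(1/\log N_X)$, samples a Gaussian vector $u$ with covariance $\rho_j$ (restricted to the space orthogonal to the all-ones vector), and sets $\cA_j = \{v : u_v \geq 0\}$ and $\cB_j = V_X \setminus \cA_j$; a small balancing step forces $|\cA_j| = |\cB_j| = N_X/2$ so that the matching player's response $M_j$ is a perfect matching across the cut. The intuition is that $\exp(-\eta L_{j-1})$ concentrates mass on the eigenvectors of $L_{j-1}$ with small eigenvalues, i.e., on the directions in which $X_{j-1}$ is still far from being an expander; any matching crossing $(\cA_j,\cB_j)$ is then forced to contribute Laplacian energy that attacks exactly those bad directions.

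Define the potential $\Phi_j := \mathrm{tr}\bigl(\exp(-\eta L_j)\bigr) - 1$ (the $-1$ discounts the trivial unit mass carried by the all-ones eigenvector of Laplacian eigenvalue $0$). The heart of the proof is a per-round multiplicative contraction
\begin{align*}
\mathbb{E}_{(u,M_j)}\bigl[\Phi_j \mid X_{j-1}\bigr] \;\le\; (1 - \Omega(\eta))\,\Phi_{j-1},
\end{align*}
valid uniformly over the matching player's strategy. This is established by using the Golden--Thompson inequality to linearize $\exp\bigl(-\eta(L_{j-1} + L(M_j))\bigr)$, and then lower bounding $\mathbb{E}_u\bigl[\mathrm{tr}\bigl(L(M_j)\,\exp(-\eta L_{j-1})\bigr)\bigr]$ by $\Omega(\eta \cdot \Phi_{j-1})$ via the Gaussian-projection property: in expectation over $u$, any perfect matching between the two sides of the cut must cross the small-eigenvalue subspace of $L_{j-1}$ to a nontrivial extent. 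Iterating the contraction for $T = O(\log^2 N_X)$ rounds and invoking a Markov-type argument on the telescoping product yields $\Phi_T = O(1)$ with constant probability, which forces $\lambda_2(L_T) = \Omega(\log N_X)$; Cheeger's inequality then gives $\alphaCMG(N_X) = \Omega(\log N_X)$-expansion.

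The main obstacle is the per-round contraction bound against a worst-case matching: the matching player sees $(\cA_j,\cB_j)$ before choosing $M_j$ and can try to pair vertices whose ``representative'' directions have already been mixed out. Handling this uniformly requires matrix-trace inequalities (Golden--Thompson, and in places matrix H\"older) together with a careful argument that the randomized Gaussian-projection cut always exposes enough residual small-eigenvalue structure that no adversarial pairing can evade. Once the single-step contraction is in place, the remainder is a routine potential-convergence calculation, and the extra factor of $\log N_X$ in the final expansion over the original KRV game is attributable directly to the matrix-exponential weighting in the cut player's strategy.
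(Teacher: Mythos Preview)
The paper does not prove this theorem at all: it is quoted verbatim as a black-box result from \cite{OSV08} (Orecchia--Schulman--Vazirani--Vishnoi) and then immediately applied inside the proof of Theorem~\ref{thm:ed-lb-main}. So there is no ``paper's own proof'' to compare against; your proposal is an attempt to reconstruct a proof of an external citation.

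As for the sketch itself, two remarks. First, you attribute the argument to ``Orecchia--Sachdeva--Vishnoi,'' but the reference here is the earlier Orecchia--Schulman--Vazirani--Vishnoi paper; the potential-function machinery there is closer in spirit to a refined heat-kernel/random-walk analysis of the original KRV potential than to the later matrix-multiplicative-weights framework you invoke. Second, your outline is plausible at the level of ``define an exponential potential, show a per-round multiplicative drop, iterate,'' but the step you flag as the main obstacle --- a uniform $(1-\Omega(\eta))$ contraction against \emph{any} matching the adversary returns --- is doing all the work, and your justification (``Golden--Thompson plus a Gaussian-projection lower bound on $\mathrm{tr}(L(M_j)\exp(-\eta L_{j-1}))$'') is not enough on its own: one must argue that the random bisection induced by the Gaussian exposes a constant fraction of the residual potential \emph{regardless} of how the matching pairs the two sides, and this requires a more careful structural argument than a single trace inequality. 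In short, the skeleton is right, but the contraction step as written is still a claim rather than a proof.
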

	
	\begin{defn}
		Let $E'$ be a set of edges over $\cK$, $\tilde \tau > 0$ be an integer. Let $\overline{E'}$ be the set of directed edges obtained from $E'$ by replacing every undirected edge $e = (u, v) \in E'$ with two directed edges $(u, v)$ and $(v, u)$. An \emph{embedding} of $E'$ in $G^{\tilde \tau}$ is a set $\cP = \set{P_e: e \in \overline{E'}}$ of paths, where $P_e$ for a directed edge $e = (u, v)$ is a path connecting $(u, 0)$ to $(v, \tilde \tau)$ in $G^{\tilde \tau}$.
	\end{defn}
	
	\begin{lemma}\label{lemma:construct-expander}
		There is a randomized algorithm that outputs an $O(\log^2k)$-regular $\Omega(\log k)$-expander $X = (\cK, E_X)$, and an embedding $\cP$ of $E_X$ in $G^\tau$, such that 
		the expected number of paths in $\cP$ that use each edge $e$ in $G^\tau$ is at most $O(\log^2k/n')$. 
	\end{lemma}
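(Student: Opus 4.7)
The plan is to apply the cut-matching game of Theorem~\ref{thm:CMG} with $V_X = \cK$, letting the cut player follow the strategy guaranteed by that theorem, and showing that Lemma~\ref{lemma:edps-between-AB} lets us implement the matching player in a way that controls the edge congestion in $G^\tau$. After $O(\log^2 k)$ iterations we obtain an $\Omega(\log k)$-expander, and since each iteration adds one perfect matching on $\cK$, the resulting graph $X$ is $O(\log^2 k)$-regular as required.

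In each iteration $j$, the cut player produces an equal partition $(\cA_j, \cB_j)$ of $\cK$. We invoke Lemma~\ref{lemma:edps-between-AB} to obtain a collection of $n'k/2$ edge-disjoint paths in $G^\tau$ going from $\cA_j \times \{0\}$ to $\cB_j \times \{\tau\}$, with each vertex of $\cA_j \times \{0\}$ being the origin of exactly $n'$ paths and each vertex of $\cB_j \times \{\tau\}$ being the destination of exactly $n'$ paths. Viewing each path as an edge between its source in $\cA_j$ and its destination in $\cB_j$, these paths induce an $n'$-regular bipartite multigraph $H_j$ on $(\cA_j, \cB_j)$. By K\"onig's edge-coloring theorem for regular bipartite multigraphs, $E(H_j)$ decomposes into $n'$ perfect matchings $M_j^{(1)}, \dots, M_j^{(n')}$. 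The matching player picks one of these $n'$ matchings uniformly at random to return as $M_j$; the embedding $\cP$ records the $k/2$ corresponding paths in $G^\tau$ as the realizations of the edges of $M_j$.

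For the congestion bound, fix any edge $e$ of $G^\tau$. Since the $n'k/2$ paths produced in iteration $j$ are edge-disjoint, $e$ lies on at most one of these paths, and hence in the paths supporting at most one of the matchings $M_j^{(1)}, \dots, M_j^{(n')}$. Therefore the probability (over the random choice in iteration $j$) that the paths in $\cP$ from iteration $j$ use $e$ is at most $1/n'$. Summing over the $O(\log^2 k)$ iterations and invoking linearity of expectation yields expected congestion $O(\log^2 k/n')$ at $e$, which is exactly the claimed bound.

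The main conceptual step is recognizing that the balanced edge-disjoint path structure supplied by Lemma~\ref{lemma:edps-between-AB} is precisely what lets the matching player produce a \emph{random} perfect matching whose supporting paths spread the load evenly across $G^\tau$; the rest reduces to standard facts (K\"onig's theorem for the decomposition, Theorem~\ref{thm:CMG} for the expansion guarantee, and linearity of expectation for the congestion). The only potential subtlety is ensuring that $(\cA_j, \cB_j)$ is an equal partition so Lemma~\ref{lemma:edps-between-AB} applies, but this is guaranteed by the cut player's strategy in the cut-matching game, which always produces a balanced bipartition of $\cK$.
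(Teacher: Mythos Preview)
Your approach is exactly the paper's: run the cut-matching game, use Lemma~\ref{lemma:edps-between-AB} to produce an $n'$-regular bipartite graph of paths, decompose it into $n'$ perfect matchings, and have the matching player pick one uniformly at random. The congestion analysis is also identical.

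Two small gaps worth noting. First, by the paper's definition of an embedding, $\cP$ must contain a path in \emph{each} direction for every undirected edge of $E_X$: one path from $(u,0)$ to $(v,\tau)$ and one from $(v,0)$ to $(u,\tau)$. Your description only records the $k/2$ paths from $\cA_j$ to $\cB_j$. The paper fixes this by adding, for each selected path, its \emph{mirrored} path (replace each edge $((u,t-1),(v,t))$ by $((v,\tau-t),(u,\tau-t+1))$); this at most doubles the expected congestion and is absorbed in the $O(\cdot)$. Second, Theorem~\ref{thm:CMG} only guarantees an $\Omega(\log k)$-expander \emph{with constant probability}, not surely; the paper handles this by repeating the whole game until it succeeds, which multiplies the expected congestion by only a constant. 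You should mention both of these, but neither changes the argument substantively.
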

	
	\begin{proof}
We run the cut-matching game over $\cK$. Initially, $\cP = \emptyset$ and $E_X = \emptyset$. 
		
		In the $j$-th iteration of the game, the cut-player finds a partition $(\cA_j, \cB_j)$ of $\cK$ according to his strategy. Then by Lemma~\ref{lemma:edps-between-AB}, we can find a set $\cQ_j$ of $n'|\cA_j| =n' k/2$ edge-disjoint paths from $\cA_j \times \{0\}$ to $\cB_j \times \{\tau\}$ in $G^\tau$, such that every vertex in $\cA_j \times \{0\}$ is the origin of exactly $n'$ paths and every vertex in $\cB_j \times \{\tau\}$ is the destination of exactly $n'$ paths.  These paths naturally define an $n'$-regular bipartite graph $H = (\cA_j, \cup \cB_j, E_H)$ between $\cA_j$ and $\cB_j$, where for each edge $e = (u, v) \in E_H, u \in \cA_j, v \in \cB_j$, $e$ is associated with a unique path $Q_e \in \cQ_j$ connecting $(u, 0)$ to $(v, \tau)$ in $G^\tau$. We can break $E_H$ into $n'$ matchings between $\cA_j$ and $\cB_j$.  Then, the matching player will randomly choose a matching $M_j$, out of the $n'$ matchings, each with probability $1/n'$. The matching player will play $M_j$; so we shall add $M_j$ to $E_X$. 
		
		Let $\cQ'_j  = \set{Q_e: e \in M_j}$ be the set of paths corresponding to $M_j$, and let $\cQ''_j$ be the set of mirrored paths of paths in $\cQ'_j$. The mirrored edge of an edge $((u, t-1), (v, t))$ in $G^\tau$ is the edge $((v, \tau-t), (u, \tau-t+1))$. The mirrored path of a path $P$ is constructed by concatenating the mirrored edges of all edges in $P$.  Thus, if $P$ connects $(u, 0)$ to $(v, \tau)$ in $G^\tau$, then the mirrored edge of $P$ connects $(v, 0)$ to $(u, \tau)$ in $G^\tau$. Thus, $\cQ'_j \cup \cQ''_j$ is an embedding of $M_j$ in $G^\tau$.  Since paths in $\cQ_j$ are edge-disjoint, each edge in $G^\tau$ belongs to $\cQ'_j$ with probability at most $1/n'$. Thus, each edge belongs to $\cQ''_j$ with probability at most $1/n'$. Moreover, $\cQ'_j \cup \cQ''_j$ causes congestion at most $2$ in $G^\tau$.   We add $\cQ'_j \cup \cQ''_j$ to $\cP$.
		
		Considering all the $O(\log^2 k)$ iterations together, $\cP$ is an embedding of $E_X$ in $G^\tau$. The paths in $\cP$ cause congestion at most $O(\log^2 k)$, and the expected number of paths in $\cP$ that use an edge $e$ in $G^\tau$ is at most $O(\log^2k)/n'$. By Theorem~\ref{thm:CMG}, the graph $X$ we obtained is an $O(\log^2 k)$-regular $\alphaCMG(k)$-expander. The algorithm succeeds with constant probability and thus we can repeat the algorithm until it succeeds.  The expected number of times we run the algorithm is a constant; this can only increase the expected number of paths in $\cP$ that use an edge by a constant factor. 
	\end{proof}
	
	We emphasize that we are not interested in the efficiency of the algorithm in Lemma~\ref{lemma:construct-expander} as it is only used for the analysis. Indeed, we need an exponential time algorithm to check whether $X$ is an $\alphaCMG(k)$-expander or not since the problem is NP-hard. 
	
	We use Lemma~\ref{lemma:construct-expander} to find a $d$-regular $\Omega(\log k)$-expander $X = (\cK, E_X)$, for some $d = O(\log^2 k)$, and an embedding $\cP = \set{P_e:e \in \overline{E_X}}$ of $E_X$ in $G^\tau$.  Let $A$ be the adjacency matrix of $X$ and $\lambda_2$ be the second largest eigenvalue of $A$.  Then, by Cheeger's Inequality, we have $\Phi(X) \leq \sqrt{2d(d-\lambda_2)}$.  Thus $\lambda_2 \leq d-\phi^2(X)/(2d) \leq d-\Omega(1)$, since $\phi^2(X)/(2d) = \Omega(\log^2 k)/O(\log^2 k) = \Omega(1)$.
	
	We consider the lazy random walk on $X$, starting from some vertex $v \in \cK$. By Theorem~\ref{thm:random-walk},  the difference between the distribution we obtain after $T$ steps of random walk and the uniform distribution is at most $\sqrt{k}\left(\frac{1+\lambda_2/d}{2}\right)^T$, in terms of the $L_1$ distance. Notice that $\frac{1+\lambda_2/d}{2} \leq \frac{2-\Omega(1/d)}{2} = 1 - \Omega\left(\frac{1}{\log^2k}\right)$. If we let $T = O(\log^3 k)$ to be large enough, then the difference is at most $1/(2k)$.  Thus, after $T$ steps of the lazy random walk, the probability that we are at each vertex $u \in \cK$ is at least $1/(2k)$. 
	
	Using the random walk, we show how to send $1/(2k)$ units of flow from $v$ to $u$ in $X$, for every ordered pair $(v, u) \in \cK^2$.  We have $k$ types of commodity, indexed by $\cK$. Initially, for every vertex $v \in \cK$, $v$ has 1 unit of commodity $v$. At each time step we do the following. For every $v \in \cK$,  and a commodity type $v' \in \cK$, we send $1/(2d)$ fraction of commodity $v'$ to each of the $d$ neighbors of $v$; thus, $1/2$ fraction of the commodity $v'$ will remain at $v$. After $T$ steps, every vertex $u \in \cK$ has at least $1/(2k)$ units of commodity $v'$, for every $v' \in \cK$.  Since $X$ is regular, at each time, the total amount of commodity at each vertex $v$ is 1. In each step, the amount of commodity sent through each edge $e \in E_X$ in each direction is exactly $1/(2d)$.
	
	Now, we can simulate the flow in the time graph $G^{\tau'}$, for $\tau' = T\tau$.  Recall that $\cP= \{P_E: e \in \overline{E_X}\}$ is the embedding of $E_X$ in $G^\tau$.  Initially, for each vertex $v \in \cK$ and a commodity type $v' \in \cK$, there is 1 unit of commodity $v'$ at $(v, 0)$.  Suppose at the $t$-th step, we sent $x$ units of commodity $v'$ from $v \in \cK$ to its neighbor $u \in \cK$, using edge $e$ in $E_X$. Let $e' \in \overline{E_X}$ be edge $e$ directed from $v$ to $u$. Then in graph $G^{\tau'}$, we sent $x$ units of commodity $v'$ from $(v, (t-1)\tau)$ to $(u, t\tau)$, using the path $P_{e'}$, shifted by $(t-1)\tau$ units of time. That is, the shifted path contains $((v', (t-1)\tau + i-1), (u', (t-1)\tau + i))$, for every edge $((v', i-1), (u', i))$ in $P_{e'}$. If $x$ units of commodity $v'$ remains at $v$, then we send $x$ units of commodity $v'$ from $(v, (t-1)\tau)$ to $(v, t\tau)$ using the memory edges at $v$.  Thus, we have a multi-commodity flow in $G^{\tau'}$, where for each ordered pair $(v, u) \in \cK^2$, we sent at least $1/2k$ units of flow from $(v, 0)$ to $(u, \tau')$. 
	
	If an edge $((v', i-1), (u', i)$ in $G^\tau$ is used by $p$ paths in $\cP$, then for every $t \in [T]$, the amount of flow sent through the $((v', (t-1)\tau + i - 1), (v', (t-1)\tau + i ))$ is $p/(2d)$.  By Lemma~\ref{lemma:construct-expander}, the expected amount of flow sent through each edge $e$ in $G^\tau$ is at most $1/(2d) \times O(\lg^2k/n') = O(1/n')$, where the expectation is over the randomness of $X$ and $\cP$.  Taking all pairs $(X, \cP)$ in the probability space (again, we are not interested in the efficiency of the algorithm), and scaling the multi-commodity flow by a factor of $2n'$, we obtain a multi-commodity flow in $G^{\tau'}$, where for each ordered pair $(v, u) \in \cK^2$, we sent at least $n'/k$ units of flow from $(v, 0)$ to $(u, \tau')$.  The flow causes congestion $O(1)$ in $G^{\tau'}$.   By, scaling $\tau'$ by a constant factor, we can reduce the congestion to $1$.  This proves that $\tau_{\MCF}(G, \cK, n') \leq O(T\tau) = O(\lg^3k)\cdot \left\lceil 4 R_{\eps}(f,G, \cK)\right\rceil = \tilde O(R_{\eps}(f, G, \cK))$, finishing the proof of Theorem~\ref{thm:ed-lb-main}. 

\subsection{Bounds for $\ED$}


%

The proof of the upper bound in Theorem~\ref{thm:ED} will crucially use the following result on existence of a small circuit for $\ED$: 
\begin{lemma}
\label{lem:ed-ckt-bound}
$\ED_{\cK,m}$ has an $(O(km\log{k}), O(m\log{k}))$-bounded circuit.
\end{lemma}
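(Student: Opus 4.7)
The plan is to build the circuit in two stages: first sort the $k$ input strings, then check that consecutive sorted strings are pairwise distinct.

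For the sorting stage, I would invoke the AKS sorting network, which sorts $k$ items using $O(k \log k)$ comparators arranged in $O(\log k)$ layers. Here each ``item'' is an $m$-bit string $\vx_u$, so each comparator must lexicographically compare two $m$-bit strings and output the smaller first. Such an $m$-bit comparator can be implemented with bounded fan-in and fan-out gates using $O(m)$ gates and depth $O(m)$: compute the predicate $[\vx_u \prec \vx_v]$ by scanning from the most significant bit down (maintaining a running ``still equal so far'' bit), then use this single output bit to conditionally swap $\vx_u$ and $\vx_v$. In total, the sorting stage uses $O(k\log k)\cdot O(m) = O(km\log k)$ gates and has depth $O(m\log k)$.

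For the distinctness stage, after sorting the strings are in lexicographic order, so $\ED_{\cK,m}$ evaluates to $1$ iff every consecutive sorted pair differs. In parallel, I would compute the $k-1$ inequality bits (each is the OR of the bitwise XOR of two $m$-bit strings, requiring $O(m)$ gates and depth $O(\log m)$), then AND the resulting $k-1$ bits via a balanced binary tree ($O(k)$ gates and depth $O(\log k)$). This stage contributes $O(km)$ gates and $O(\log m + \log k)$ depth, both absorbed into the bounds from the sorting stage. Handling the fan-out-$2$ constraint is routine: any wire feeding into more than two gates is replaced by a binary tree of duplication gates, adding only a constant factor to the gate count and at most an $O(\log)$ factor to depth (which is subsumed by the $O(m\log k)$ depth bound).

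Summing the two stages gives total wire count $O(km\log k)$ and total depth $O(m\log k)$, matching the lemma. The main obstacle is avoiding a blowup in the comparator count: using bitonic sort in place of AKS would give $O(k\log^2 k)$ comparators and hence $O(km\log^2 k)$ wires, exceeding the stated bound. Thus invoking an $O(k\log k)$-size sorting network (such as AKS, or the simpler Ajtai--Koml\'os--Szemer\'edi-style constructions of Paterson) is essential. Everything else is a direct combination of standard building blocks, and correctness follows immediately from the correctness of the sorting network and the observation that sorted distinctness reduces to comparing consecutive elements.
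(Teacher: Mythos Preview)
Your proposal is correct and follows essentially the same approach as the paper: invoke the AKS sorting network to sort the $k$ strings with $O(k\log k)$ comparators in $O(\log k)$ layers, implement each $m$-bit comparator as an $(O(m),O(m))$-bounded circuit, and then test distinctness by checking inequality of consecutive sorted pairs and AND-ing the results. The paper's proof is slightly terser (it does not spell out the comparator construction or the fan-out handling), but the decomposition and the resulting bounds are identical.
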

\begin{proof}
We first recall that there exists sorting networks that sort $k$ numbers with $O(k\log{k})$ swaps and depth $O(\log{k})$~\cite{AKS83}. By swap we mean a gate that takes as input two numbers and outputs the smaller number as the ``first" output and the larger number as the ``second" output. Note that if the numbers are $m$-bits then such a swap can be implemented with an $(O(m),O(m))$ bounded circuit. This implies that there exists a $(O(km\log{k}), O(m\log{k}))$-bounded circuit to sort $k$ numbers (each of which is $m$ bits).

Assume that the sorted numbers are $\vx_1,\dots,\vx_k$. Then note that the final answer is
\[\wedge_{i=1}^{k-1} \neg\EQ(\vx_i,\vx_{i+1}),\]
where $\EQ(\vx,\vy)=1$ if and only if $\vx=\vy$. Note that one can implement the $\EQ$ function with a $(O(m),O(\log{m}))$-bounded circuit. This implies that we can compute $\ED_{\cK,m}(\vx_1,\dots,\vx_k)$ with a $(O(km),O(\log{km}))$ bounded circuit (assuming $\vx_1,\dots,\vx_k$ are sorted in that order).

Thus, combining the two circuits, we get an $(O(km\log{k}), O(m\log{k}))$-bounded circuit for $\ED_{\cK,m}$, as desired.
\end{proof}

It is known that $\ED_{\cK, n}$ can be solved by solving $\ED_{\cK,O(\log{k})}$ (by using $O(\log{k})$ random hashes for each input)-- see e.g.~\cite{CRR14}. By Lemma~\ref{lem:ed-ckt-bound}, there exists a randomized $(O(k\log^2{k}),O(\log^2{k}))$-bounded circuit to solve $\ED_{\cK,n}$. Lemma~\ref{lem:ckt-protocol} and Claim~\ref{claim:tau-sub-additive} then show that $R(\ED_{\cK,n},G,\cK) \le \tO\left(\tau_{\MCF}(G,\cK,1)\right)$. Similarly using Lemma~\ref{lem:ed-ckt-bound} with $m=n$ we have that  $R_0(\ED_{\cK,n},G,\cK) \le \tO\left(\tau_{\MCF}(G,\cK,n)\right)$. Note that these upper bounds match the lower bounds in Corollary~\ref{lem:ED-lb}, which in turn proves Theorem~\ref{thm:ED}.


\section{Applications}
\label{app:apps}



We now consider distributed graph problems. For such problems every player $u\in \cK$ receives a subgraph $H_u$ and the goal of the players is to compute some (Boolean) function on the overall graph
\[H\eqdef \bigcup_{u\in \cK} H_u.\]

We define $N_H$, $M_H$ and $\Delta_H$ to be the number of vertices in $H$, number of edges in $H$ and the maximum degree in $H$ respectively. We will present our bounds in terms of these parameters (as well as parameters that depend on the underlying topology).

\subsection{Distribution of the input}
\label{sec:repr}

In this section, we tackle issues related to how the inputs $\{H_u\}_{u\in\cK}$ are represented and distributed among the players in $\cK$. We will assume that $H_u$'s (and hence $H$) are presented in the adjacency list representation and that all players know the set of vertices $V(H)$. In other words, the only knowledge that is distributed is the set of edges $E(H)$. There are two natural ways of distributing  the edges set that we consider in this section:
\begin{enumerate}
\item \textit{Node distribution:} In this case the adjacency list of a vertex is assigned to a terminal in $\cK$ as a whole. Further, we will assume that for every $u\in V(H)$, all terminals know the location of the assigned terminal for $u$.\footnote{This is a relatively mild assumption since these mappings in practical applications are done by publicly known hash mappings.} However, only the assigned terminal knows the adjacency list of $u$.
\item \textit{Edge distribution:} In this case the edge set $E(H)$ is distributed among the $k$ terminals and in this case all the terminals only know about the identity of $V(H)$.
\end{enumerate}

Finally, we will assume that in either distribution all of the $H_u$'s are roughly of the same size.
\begin{defn}
A node (edge resp.) distribution of $H$ among the $k$ players is called $M$-balanced if for every $u\in \cK$, the size of $H_u$ is at most $M$.\footnote{In the case of node distribution, the size of $H_u$ is the sum of the degree of the vertices assigned to $u$ while in the case of edge distribution, the size of $H_u$ is the number of edges assigned to $u$.}
\end{defn}

It turns out that one can convert a balanced edge distribution into a random balanced node distribution.
\begin{lemma}
\label{lem:edge->node-distr}
If $H$ is represented by an $\tO(M_H/k+\Delta_H)$-balanced edge distribution then it can be converted into an $\tO(M_H/k+\Delta)$-balanced node distribution in $\tO(\tau_{\MCF}(G,\cK,M_H/k+\Delta_H))$ rounds of communication. Further, in the latter, every node is assigned uniformly and independently at random to the terminals in $\cK$.
\end{lemma}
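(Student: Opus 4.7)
The plan is to use public randomness to agree on a uniformly random assignment $\pi:V(H)\to\cK$, and then simply have each terminal route every edge it currently holds to the terminal(s) that own the endpoints of that edge. The goal will then be to show that the associated demand is $\tO(M_H/k + \Delta_H)$-bounded, at which point Lemma~\ref{lem:bounded-dem} completes the argument.

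More concretely, I would proceed as follows. First, since all terminals share public randomness and know $V(H)$, they can agree on a uniformly random $\pi:V(H)\to\cK$ with no communication at all; this will be the node assignment in the output distribution. Second, in the initial edge distribution, each terminal $u$ holds a set $E(H_u)$ of at most $\tO(M_H/k+\Delta_H)$ edges, and for each edge $(v,w)\in E(H_u)$, $u$ needs to deliver the pair $(v,w)$ to terminals $\pi(v)$ and $\pi(w)$. Thus the number of messages $u$ has to send out is at most $2|E(H_u)|=\tO(M_H/k+\Delta_H)$, which bounds the outgoing load at every terminal.

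The main step is to bound the incoming load. Terminal $t\in\cK$ receives, for each vertex $v$ with $\pi(v)=t$, exactly $\deg_H(v)$ edge-messages (one per incident edge, from whichever terminal currently owns that edge). The expected total incoming load at $t$ is $\sum_{v\in V(H)} \deg_H(v)/k = 2M_H/k$. Since no single vertex contributes more than $\Delta_H$ and the $\mathbf{1}[\pi(v)=t]$ indicators are independent Bernoulli, a standard Chernoff bound applied to the sum $\sum_v \deg_H(v)\cdot\mathbf{1}[\pi(v)=t]$ (rescaled by $\Delta_H$) yields concentration around $2M_H/k$ with deviation $\tO(\sqrt{M_H\Delta_H/k}+\Delta_H)=\tO(M_H/k+\Delta_H)$ with probability at least $1-1/\mathrm{poly}(k,N_H)$. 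A union bound over the $k$ terminals then shows that with high probability every terminal's incoming load is $\tO(M_H/k+\Delta_H)$ as well.

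Combining the two bounds, the demand induced by this re-routing is $\tO(M_H/k+\Delta_H)$-bounded in the sense of Section~\ref{sec:prelims}. By Lemma~\ref{lem:bounded-dem}, it can therefore be routed in $\tO(\tau_{\MCF}(G,\cK,M_H/k+\Delta_H))$ rounds, and each terminal $t$ can then locally assemble the adjacency list of each vertex $v$ with $\pi(v)=t$ from the received edge-messages. The assignment $\pi$ is uniform and independent by construction, and the balance guarantee follows from the incoming-load concentration above. The only mild subtlety I anticipate is the concentration step, where one has to be careful to apply the Chernoff bound to the weighted sum (with weights $\deg_H(v)/\Delta_H\in[0,1]$) rather than to a plain indicator sum, in order to pay only the additive $\Delta_H$ term and not lose a factor of $\Delta_H$ multiplicatively; this is a routine but essential check.
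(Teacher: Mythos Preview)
Your proposal is correct and follows essentially the same approach as the paper: choose a uniformly random assignment $\pi:V(H)\to\cK$, argue via concentration that the resulting node distribution is $\tO(M_H/k+\Delta_H)$-balanced, and route the induced bounded demand via multicommodity flow. The paper's proof is terser, outsourcing the concentration step to \cite[Lemma~4.1]{klauck} rather than spelling out the weighted Chernoff argument as you do, but the structure is identical.
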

\begin{proof}
The argument basically follows from a technical result in~\cite{klauck}. Let $\pi:V(H)\to \cK$ be a completely random map (i.e. each vertex is mapped independently and uniformly randomly to $\cK$). Then~\cite[Lemma 4.1]{klauck} argues that size of the newly mapped $H_u$ is $\tO(M_H/k+\Delta_H)$. It is easy to see that we can move from the edge distribution to the random node distribution with a multicommodity flow problem with $\tO(M_H/k+\Delta_H)$-bounded demands, which completes the proof.
\end{proof}

It turns out that the extra pre-processing round complexity of $\tO(\tau_{\MCF}(G,\cK,M_H/k+\Delta_H))$ can always be absorbed in the upper bounds that we can prove and so for the rest of the section, when talking about upper bounds we will assume that $H$ is node distributed such that each node is randomly assigned a terminal in $\cK$. Note that this implies that our upper bounds hold for worst-case balanced node or edge distribution. However, our upper bounds do not hold when the distribution of $H$ over the terminals is {\em skewed}. Skew is a known issue in parallel processing and handling it is left as an open problem.

Our lower bounds work for both $\tO(M_H/k+\Delta_H)$-balanced node and edge distribution representations. However, unlike the results of~\cite{klauck}, our lower bounds assume a worst-case partition of the input among the terminals.

\subsection{Some hard problems}

In this section, we define some hard problems that we will reduce to our distributed graph problems.

The two problems, which we dub $\ORSD_{\cK, n}$ and $\ANDSD_{\cK, n}$ respectively, informally are the logical $\OR$ (and logical $\AND$ resp.) of $\binom{k}{2}$ independent copies of the two-party $\SD$ problem. In particular, each player $u\in \cK$ gets $k-1$ strings $\{\vx_{u,v}\}_{v\in\cK\setminus \{u\}}$. Then the players want to compute
\[\ORSD_{\cK, n}\left(\{ \vx_{u,v}\}_{u\in \cK, v\in \cK\setminus \{u\}}\right)=\bigvee_{\{u,v\}\in \binom{\cK}{2}} \left( \bigvee_{i\in [n]} \vx_{u,v}[i] \wedge \vx_{v,u}[i]\right),\]
and
\[\ANDSD_{\cK, n}\left(\{ \vx_{u,v}\}_{u\in \cK, v\in \cK\setminus \{u\}}\right)=\bigwedge_{\{u,v\}\in \binom{\cK}{2}} \left( \bigvee_{i\in [n]} \vx_{u,v}[i] \wedge \vx_{v,u}[i]\right),\]
where for a set $S$, we use $\binom{S}{2}$ to denote the set of all unordered pairs from $S$.

We show the hardness of the two above functions by recalling the large communication complexity of two closely related functions in the classical two-party model: Let Alice (Bob) get $m$ strings, $\vx_1,\ldots,\vx_m$ ($\vy_1,\ldots,\vy_m$), with each $\vx_i \in \{0,1\}^n$ ($\vy_i \in \{0,1\}^n$). Let $\ORSDTWO_{m, n}$ denote the problem of determining if any pair of strings $(\vx_i,\vy_i)$ have a 1 at a common index. Then, 
the following is a simple implication of Bar-Yossef et.al \cite{BYJKS04}.

\begin{thm}  \label{thm:OR-DISJ-2-party}
 $\CC_{1/3}\big(\ORSDTWO_{m, n} \big) \ge \Omega(mn)$.
\end{thm}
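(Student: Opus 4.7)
The plan is to observe that $\ORSDTWO_{m,n}$ is literally two-party set disjointness on a universe of size $mn$, so the claimed lower bound is a direct consequence of the well-known linear lower bound on the randomized communication complexity of two-party disjointness.

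First I would exhibit the trivial reduction in the other direction: given an instance of $\SD_{\{a,b\},mn}$ with Alice holding $\vx \in \{0,1\}^{mn}$ and Bob holding $\vy \in \{0,1\}^{mn}$, Alice parses $\vx$ into $m$ blocks $\vx_1,\ldots,\vx_m$ of length $n$ each, and Bob parses $\vy$ analogously into $\vy_1,\ldots,\vy_m$. The identity
\[
\bigvee_{\ell \in [mn]} \bigl(\vx[\ell]\wedge \vy[\ell]\bigr) \;=\; \bigvee_{i\in[m]}\bigvee_{j\in[n]} \bigl(\vx_i[j]\wedge \vy_i[j]\bigr)
\]
shows that $\SD_{\{a,b\},mn}$ on $(\vx,\vy)$ equals $\ORSDTWO_{m,n}$ on $(\{\vx_i\}_i,\{\vy_i\}_i)$. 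Hence any randomized two-party protocol that computes $\ORSDTWO_{m,n}$ with error at most $1/3$ using $C$ bits of communication immediately yields a protocol computing $\SD_{\{a,b\},mn}$ with the same error and the same $C$ bits of communication.

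Second, I would invoke the standard $\Omega(N)$ randomized lower bound on $\CC_{1/3}\bigl(\SD_{\{a,b\},N}\bigr)$. The cited work~\cite{BYJKS04} obtains this bound via the information complexity framework: they introduce an information-cost measure on protocols, prove a direct-sum theorem showing that information cost is additive over the $N$ coordinates of a disjointness instance (which has the form $\bigvee_\ell (\vx[\ell]\wedge \vy[\ell])$), and lower bound the primitive information cost of the two-bit $\mathrm{AND}$ under a carefully chosen ``collapsing'' distribution by a positive constant. Specializing to $N=mn$ gives $\CC_{1/3}\bigl(\SD_{\{a,b\},mn}\bigr) = \Omega(mn)$, which by the reduction above yields $\CC_{1/3}\bigl(\ORSDTWO_{m,n}\bigr) \ge \Omega(mn)$.

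There is essentially no obstacle in this proof: the theorem is a restatement of the disjointness lower bound under a different combinatorial packaging, and the reduction between the two formulations is an equality of Boolean functions with zero communication overhead. The only point worth flagging is that the $\Omega(\cdot)$ constant is exactly the one delivered by the information-complexity bound for the primitive $\mathrm{AND}$, and it is independent of both $m$ and $n$.
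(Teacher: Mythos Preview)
Your proposal is correct and matches the paper's treatment: the paper does not give a proof but simply states the theorem as ``a simple implication of Bar-Yossef et al.~\cite{BYJKS04},'' and your observation that $\ORSDTWO_{m,n}$ is literally two-party disjointness on a universe of size $mn$ (via concatenation of the $m$ blocks) is exactly the implication intended. Your additional sketch of the information-complexity argument underlying \cite{BYJKS04} is more detail than the paper provides, but the approach is the same.
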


Similarly, define $\ANDSDTWO_{m,n}$ as the 2-party problem of determining if all pairs of strings $(\vx_i,\vy_i)$ have a 1 at a common index. This is also called the $\TRB_{m,n}$ problem. The following establishes its hardness.

\begin{thm}[Jayram et al.\cite{JKS03}]  \label{thm:Tribes-2-party}
 $R_{1/3}^{(2)}\big(\TRB_{m,n} \big) \ge \Omega(mn)$.
\end{thm}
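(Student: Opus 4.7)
The plan is to prove this lower bound via the information complexity framework of Bar-Yossef et al., applying a \emph{direct sum} argument twice: first to pass from the $\AND$ function on two bits up to $\SD$ on $n$ bits, and then from $\SD$ to $\TRB_{m,n}$. Throughout, I will use $IC_\mu(\Pi) = I(X;\Pi \mid Y, D) + I(Y;\Pi \mid X, D)$ to denote the internal information cost of a protocol $\Pi$ under a distribution $\mu$ on inputs $(X,Y)$ equipped with auxiliary information $D$, and $IC_\mu(f)$ the infimum over $\eps$-error protocols $\Pi$ for $f$. The key standard fact is $R_{1/3}(f) \ge IC_\mu(f)$ for every $\mu$.

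The first step is to set up a hard ``collapsing'' distribution for the single-coordinate $\AND$: let $D$ be uniform on $\{0,1\}$; conditioned on $D=0$ take $X$ uniform in $\{0,1\}$ and $Y=0$, and conditioned on $D=1$ take $Y$ uniform in $\{0,1\}$ and $X=0$. Under this $\mu$, $\AND(X,Y)\equiv 0$, but using the cut-and-paste lemma of \cite{BYJKS04} together with a Hellinger-distance argument one shows $IC_\mu(\AND) \ge c$ for an absolute constant $c>0$. The crucial feature of $\mu$ is that conditioned on $D$, the inputs $X$ and $Y$ are independent, which is precisely what makes the direct sum go through.

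The second step is the direct sum from $\AND$ to $\SD$. Consider the product distribution $\mu^n$ on coordinates indexed by $j \in [n]$, with auxiliary data $D^n = (D_1,\dots,D_n)$. Any protocol $\Pi$ computing $\SD_n$ can be used by embedding a single $\AND$ instance into a random coordinate $J$, simulating the other coordinates using public randomness and the conditional independence provided by $D^n$; a standard averaging then gives $IC_\mu(\AND) \le \frac{1}{n}\, IC_{\mu^n}(\SD_n)$, hence $IC_{\mu^n}(\SD_n) = \Omega(n)$.

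The third step is the analogous direct sum from $\SD$ to $\TRB_{m,n}$, and this is where the main technical care is required. Take $m$ independent copies of $\mu^n$, one per tribe; on this product distribution every tribe evaluates to $0$ so $\TRB_{m,n}\equiv 0$. Given a protocol $\Pi$ for $\TRB_{m,n}$, to isolate a single tribe $I$ one must simulate the other $m-1$ tribes in a way that preserves the no-instance structure. The natural fix, due to \cite{JKS03}, is to embed the target $\SD_n$ instance into tribe $I$, and for each other tribe $i \ne I$ let the players sample a random no-instance of $\SD_n$ from $\mu^n$ using public randomness plus one bit of private randomness per coordinate (legal since conditioned on $D$ the coordinates are independent, so one player can fix its half). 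A symmetry argument over $I$ then yields $IC_{\mu^{mn}}(\TRB_{m,n}) \ge m \cdot IC_{\mu^n}(\SD_n) = \Omega(mn)$, and combining with $R_{1/3}(\TRB_{m,n}) \ge IC_{\mu^{mn}}(\TRB_{m,n})$ completes the proof. The main obstacle is verifying that the embedding preserves correctness of $\Pi$ on the distribution induced by sampling the other tribes: one must control the error contributed by the simulated tribes and ensure the inputs remain consistent with $\TRB$'s marginal distribution, which is where the use of $\mu$-instances that are always no-instances of $\AND$ (and hence always no-instances of $\SD$) is essential.
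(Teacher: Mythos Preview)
The paper does not prove this theorem at all; it is simply quoted as a known result from \cite{JKS03} and used as a black box. So there is no ``paper's proof'' to compare against, only the original argument of Jayram, Kumar and Sivakumar.

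That said, your sketch of the outer direct sum has a genuine gap. You write that for each tribe $i\neq I$ the players sample a random \emph{no}-instance of $\SD_n$ from $\mu^n$. But in the paper's convention $\SD_n(\vx,\vy)=\bigvee_j (\vx[j]\wedge\vy[j])$, so a no-instance has $\SD_n=0$, and then $\TRB_{m,n}=\bigwedge_i \SD_n(\cdot)=0$ identically, regardless of what happens in tribe $I$. The simulated protocol therefore outputs $0$ (with high probability) on every input, and is \emph{not} a valid $\SD_n$ protocol; so you cannot invoke the bound $IC_{\mu^n}(\SD_n)\ge\Omega(n)$ on it. This is precisely why the inner direct sum (for $\OR$ of $\AND$'s) is easy---fixing the other coordinates to $\AND=0$ leaves $\SD$ equal to the embedded $\AND$---while the outer one (for $\AND$ of $\SD$'s) is not: the absorbing values point the wrong way.

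What \cite{JKS03} actually do is plant \emph{yes}-instances of $\SD$ in the other tribes (each such tribe has one designated intersecting coordinate), so that $\TRB_{m,n}$ collapses to the embedded $\SD_I$ and the correctness of the induced protocol is preserved. The subtlety is then to set up the conditioning variables so that the information-cost direct sum still goes through under this modified distribution, and to re-prove the $\Omega(n)$ single-copy bound with respect to it. Your outline captures the information-complexity skeleton correctly, but the step you flag as ``the main obstacle'' is resolved in exactly the opposite way from what you wrote.
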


Theorem~\ref{thm:ed-lb-main} implies the following results:
\begin{cor}
\label{cor:or-disj}
For any $G$ and $\cK$, we have
\[R(\ORSD_{\cK, n},G,\cK)\ge \tOm\big(\tau_{\MCF}(G,K,nk)\big).\]
\end{cor}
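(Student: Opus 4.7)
The plan is to apply Theorem~\ref{thm:ed-lb-main} to $f = \ORSD_{\cK,n}$ with the choice $h(y,z) = c \, n \, y \, z$ for a suitable constant $c > 0$, which will yield the desired $n' = \Omega(nk)$. Fix any pair of disjoint non-empty $\cA, \cB \subseteq \cK$ with $|\cA|, |\cB| \le k/2$, and set $\tilde \vx$ so that every string $\vx_{u,v}$ with $u \in \cK \setminus (\cA \cup \cB)$ is the all-zeros string. Under this choice, every pair $\{u,v\} \in \binom{\cK}{2}$ with at least one endpoint outside $\cA \cup \cB$ contributes identically $0$ to the outer OR, since at least one of $\vx_{u,v}, \vx_{v,u}$ is a coordinate of $\tilde \vx$ and hence zero.

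Next, to lower-bound $\CC_\eps(f_{\cA,\cB,\tilde\vx})$, I further restrict the input space to those $(\vx_\cA, \vx_\cB)$ in which $\vx_{u,v} = 0^n$ for every $\{u,v\} \subseteq \cA$ and every $\{u,v\} \subseteq \cB$; this restriction can only decrease CC, and it kills the contributions from intra-$\cA$ and intra-$\cB$ pairs. What remains is precisely an instance of the two-party problem $\ORSDTWO_{|\cA| \cdot |\cB|, \, n}$, in which Alice holds $\{\vx_{u,v}\}_{u \in \cA, v \in \cB}$, Bob holds $\{\vx_{v,u}\}_{u \in \cA, v \in \cB}$, and they must decide whether any of the $|\cA| \cdot |\cB|$ pairs $(\vx_{u,v}, \vx_{v,u})$ share a $1$-coordinate. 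By Theorem~\ref{thm:OR-DISJ-2-party} this has CC at least $\Omega(n |\cA| |\cB|)$, so $h(y,z) = c\,n\,y\,z$ is a valid choice.

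It remains to evaluate
\[n' = \min_{\substack{y, z \in [0, k/2] \\ y + z > k/2}} \frac{c \, n \, y \, z}{y + z - k/2}.\]
For each fixed $s = y + z \in (k/2, k]$, the box constraint $y, z \in [0, k/2]$ forces $y \in [s - k/2,\, k/2]$ with $z = s - y$, and the concave function $y(s-y)$ attains its minimum on this segment at either endpoint, both of which give $yz = (k/2)(s - k/2)$. Hence $yz/(s - k/2) \ge k/2$ uniformly in $s$, and $n' \ge cnk/2 = \Omega(nk)$. Plugging this into Theorem~\ref{thm:ed-lb-main} and invoking Claim~\ref{claim:tau-sub-additive} to promote $\tau_{\MCF}(G,\cK,\Omega(nk))$ to $\tau_{\MCF}(G,\cK,nk)$ yields the desired bound $R(\ORSD_{\cK,n}, G, \cK) \ge \tOm(\tau_{\MCF}(G,\cK,nk))$.

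The main obstacle is subtle: in contrast to the $\ED$ corollary, where $h(y,z) \propto \min(y,z)$ only delivers $n' = \Omega(1)$, here we genuinely need the bilinear growth $h(y,z) \propto yz$ to cancel the vanishing denominator $y + z - k/2$ as $y+z \downarrow k/2$. This relies on both the box constraint $y, z \le k/2$ (which forces $yz \ge (k/2)(s-k/2)$ on every slice $y+z = s$) and the multiplicative structure of the $|\cA| \cdot |\cB|$ independent two-party DISJ copies embedded inside $\ORSD_{\cK,n}$. A naive reduction that identifies only a single embedded two-party DISJ per $(\cA, \cB)$ would give $h(y,z) = \Omega(n)$ and hence the weaker conclusion $\tOm(\tau_{\MCF}(G, \cK, n))$.
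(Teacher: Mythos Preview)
Your proof is correct and follows essentially the same route as the paper: fix $\tilde\vx = \mathbf{0}$, reduce $f_{\cA,\cB,\tilde\vx}$ to $\ORSDTWO_{|\cA|\cdot|\cB|,n}$, invoke Theorem~\ref{thm:OR-DISJ-2-party} to get $h(y,z) = \Omega(nyz)$, and then verify $\min_{y+z>k/2}\frac{yz}{y+z-k/2} = k/2$ before applying Theorem~\ref{thm:ed-lb-main} and Claim~\ref{claim:tau-sub-additive}. Your treatment is in fact slightly more careful than the paper's in one place: the paper asserts that $f_{\cA,\cB,\tilde\vx}$ ``is exactly'' an $\ORSDTWO_{|\cA|\cdot|\cB|,n}$ instance, whereas you correctly note that the intra-$\cA$ and intra-$\cB$ pairs are still free and must be killed by a further input restriction (which only lowers communication complexity); your optimization argument via slicing on $s = y+z$ is also just a rephrasing of the paper's one-line observation $(k/2 - y)(k/2 - z) \ge 0$.
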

\begin{proof}
        Let $f = \ORSD_{\cK,n}$. Fix some $\cA, \cB \subseteq \cK$ such that $\cA \cap \cB = \emptyset$ and $|\cA|, |\cB| \leq k/2$. We shall let $\tilde \vx \in \big(\{0, 1\}^{nk}\big)^{\cK \setminus (\cA \cup \cB)}$ be an all-0 vector.  Note that $f_{\cA,\cB,\tilde\vx}$ is exactly an $\ORSDTWO_{|\cA|\cdot|\cB|,n}$ problem. Thus, by Theorem~\ref{thm:OR-DISJ-2-party}, we have that $\CC_{1/3}\left(f_{\cA, \cB,\tilde \vx}\right) \ge \Omega(|\cA|\cdot|\cB|\cdot n)$.

        Let $\eps=1/3$. Let $n'' = \Omega(n)$ be small enough; let $n' = n''k/2$; let $h(y, z) = n''yz$ for every $y, z \in [k/2]$.  Then $\min_{y, z \in [k/2]: y+z>k/2}\frac{h(y,z)}{y+z-k/2} = \min_{y,z \in [k/2]:  y+z>k/2}\frac{n''yz}{y+z-k/2} = n''k/2=n'$, where the second equality holds since $(k/2-y)(k/2-z) \geq 0$ implies $yz \geq (y + z - k/2)k/2$, and $y=1, z=k/2$ implies $\frac{yz}{y+z-k/2}=k/2$.

        Thus, if $n''$ is small enough, then the condition for Theorem~\ref{thm:ed-lb-main} holds. Thus, we have 
\[\tau_{\MCF}\big(G, \cK, n'\big) \leq \tO\big(R(\ORSD_{\cK,n}, G, \cK)\big).\] 
Then $\tau_{\MCF}(G, \cK, kn) \leq \ceil{\frac{kn}{n'}}\tau_{\MCF}(G, \cK, n') \leq O(1) \tau_{\MCF}(G, \cK, n') \leq \tO\big(R(\ORSD_{\cK,n}, G, \cK)\big)$, by Claim~\ref{claim:tau-sub-additive}.
\end{proof}

\begin{cor}
\label{cor:and-disj}
For any $G$ and $\cK$, we have
\[R(\ANDSD_{\cK, n},G,\cK)\ge \tOm\big(\tau_{\MCF}(G,K,nk)\big).\]
\end{cor}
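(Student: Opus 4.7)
The plan is to mirror the proof of Corollary~\ref{cor:or-disj} almost verbatim, substituting the \textsc{Tribes} lower bound (Theorem~\ref{thm:Tribes-2-party}) for the two-party $\OR$-$\SD$ lower bound (Theorem~\ref{thm:OR-DISJ-2-party}) and adjusting the choice of $\tilde\vx$ appropriately. Concretely, I would set $f=\ANDSD_{\cK,n}$, fix disjoint $\cA,\cB\subseteq\cK$ with $|\cA|,|\cB|\le k/2$, and then choose $\tilde\vx\in(\{0,1\}^{nk})^{\cK\setminus(\cA\cup\cB)}$ so that for every $w\in\cK\setminus(\cA\cup\cB)$ and every $u\in\cK\setminus\{w\}$, the coordinate $\tilde\vx_{w,u}$ has its first bit equal to $1$ (say, all-ones is the simplest choice).

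The key step is to argue $\CC_{1/3}(f_{\cA,\cB,\tilde\vx})\ge\Omega(|\cA|\cdot|\cB|\cdot n)$ via a reduction from $\ANDSDTWO_{|\cA|\cdot|\cB|,n}=\TRB_{|\cA|\cdot|\cB|,n}$. Given inputs $\{\vu_{a,b}\},\{\vv_{a,b}\}$ of Alice and Bob to the \textsc{Tribes} problem indexed by $(a,b)\in\cA\times\cB$, Alice sets $\vx_{a,b}=\vu_{a,b}$ and sets the first bit of every $\vx_{a,a'}$ (for $a'\in\cA\setminus\{a\}$) and $\vx_{a,w}$ (for $w\in\cK\setminus(\cA\cup\cB)$) equal to $1$; Bob does the symmetric assignment. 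With this padding, every inner disjointness clause outside $\cA\times\cB$ is automatically satisfied because each such clause contains two strings both of which have the first bit set to $1$. Hence $f_{\cA,\cB,\tilde\vx}$ on the padded input equals $\TRB_{|\cA|\cdot|\cB|,n}$ on the original input, and Theorem~\ref{thm:Tribes-2-party} gives the claimed bound.

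Having established $\CC_{1/3}(f_{\cA,\cB,\tilde\vx})\ge\Omega(|\cA|\cdot|\cB|\cdot n)$, the remainder of the argument is identical to the proof of Corollary~\ref{cor:or-disj}. Take $n''=\Omega(n)$ sufficiently small, set $n'=n''k/2$, and $h(y,z)=n''yz$. The same algebraic calculation shows $\min_{y,z\in[k/2]:\,y+z>k/2}\frac{h(y,z)}{y+z-k/2}=n''k/2=n'$, so Theorem~\ref{thm:ed-lb-main} applies and yields $\tau_{\MCF}(G,\cK,n')\le\tO(R(\ANDSD_{\cK,n},G,\cK))$. A final application of Claim~\ref{claim:tau-sub-additive} upgrades this to $\tau_{\MCF}(G,\cK,nk)\le\tO(R(\ANDSD_{\cK,n},G,\cK))$.

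I do not anticipate a real obstacle: the template from Corollary~\ref{cor:or-disj} carries through and the only non-mechanical step is picking $\tilde\vx$ so that Alice and Bob can force all ``non-cross'' clauses to the correct neutral value. Since $\ANDSD$'s neutral value is $1$ (intersect) rather than $0$ (disjoint), we set the padding bits to make every spurious clause a positive instance, which is precisely the role of the first-bit-set choice of $\tilde\vx$ described above.
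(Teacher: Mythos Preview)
Your proposal is correct and follows essentially the same approach as the paper: the paper also sets $\tilde\vx$ to be the all-ones vector, observes that $f_{\cA,\cB,\tilde\vx}$ embeds a $\TRB_{|\cA|\cdot|\cB|,n}$ instance so that Theorem~\ref{thm:Tribes-2-party} gives $\CC_{1/3}(f_{\cA,\cB,\tilde\vx})\ge\Omega(|\cA|\cdot|\cB|\cdot n)$, and then defers verbatim to the proof of Corollary~\ref{cor:or-disj}. Your write-up is in fact more explicit than the paper's about the padding needed to neutralize the intra-$\cA$, intra-$\cB$, and mixed clauses, which the paper leaves implicit in the phrase ``is a $\TRB_{|\cA|\cdot|\cB|,n}$ problem.''
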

\begin{proof}
        Let $f = \ANDSD_{\cK,n}$. Fix some $\cA, \cB \subseteq \cK$ such that $\cA \cap \cB = \emptyset$ and $|\cA|, |\cB| \leq k/2$. We shall let $\tilde \vx \in \big(\{0, 1\}^{nk}\big)^{\cK \setminus (\cA \cup \cB)}$ be an all-1 vector.  Note that $f_{\cA, \cB,\tilde \vx}$ is a $\TRB_{|\cA|\cdot|\cB|,n}$ problem. Thus, by Theorem~\ref{thm:Tribes-2-party}, we have that $\CC_{1/3}\left(f_{\cA, \cB,\tilde \vx}\right) \ge \Omega(|\cA|\cdot|\cB|\cdot n)$. The rest of the proof is the same as that of Corollary~\ref{cor:or-disj} and is omitted.
\end{proof}


\subsection{Reductions from $\ORSD$}

In this section we consider the following three problems:

\paragraph{Acyclicity.} Given $H_u$ to each player $u\in \cK$, the players have to decide if $H$ is acyclic or not.

\paragraph{Triangle-Detection.} Given $H_u$ to each player $u\in \cK$, the players have to decide if $H$ has a triangle or not.

\paragraph{Bipartiteness.} Given $H_u$ to each player $u\in \cK$, the players have to decide if $H$ is bipartite or not.

The argument below follows from a simple adaptation of the reduction used to prove hardness of these problems for the total communication case in~\cite{CRR14}.

\begin{thm}
\label{thm:or-disj-redux}
Each of the problems of acyclicity, triangle-detection and bipartiteness for input $H$ on topology $G$ with set of player $\cK$ needs $\tOm\left(\tau_{\MCF}\left(G,\cK,\frac{M_H+N_H}{k}\right)\right)$ rounds of communication (even for randomized protocols). Further, these results hold for the case when $H$ is $O(M_H/k+\Delta_H)$-balanced node (or edge) distributed.
\end{thm}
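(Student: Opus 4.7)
The plan is to reduce $\ORSD_{\cK,n}$ to each of the three graph problems and invoke Corollary~\ref{cor:or-disj}. We choose $n = \Theta((M_H+N_H)/k^2)$ so that the graph $H$ produced by the reduction has $M_H, N_H = \Theta(nk^2)$ and the lower bound $\tOm(\tau_{\MCF}(G,\cK,nk))$ for $\ORSD_{\cK,n}$ translates directly to the claimed bound $\tOm(\tau_{\MCF}(G,\cK,(M_H+N_H)/k))$. Since $H$ will be constructed locally by each player from its own $\ORSD_{\cK,n}$ input without any communication, any $T$-round protocol for the graph problem on $H$ yields a $T$-round protocol for $\ORSD_{\cK,n}$, and the randomized lower bound follows immediately from Corollary~\ref{cor:or-disj}.

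For each unordered pair $\{u,v\}\in\binom{\cK}{2}$ (say with $u<v$) and each $i\in[n]$, I will introduce a vertex-disjoint constant-size gadget engineered so that a distinguished local structure (a triangle, a cycle, or an odd cycle) appears iff $\vx_{u,v}[i]\wedge\vx_{v,u}[i]=1$. The canonical triangle gadget uses three vertices $a_{u,v,i},b_{u,v,i},c_{u,v,i}$ with the edge $(a,c)$ always present, the edge $(a,b)$ present iff $\vx_{u,v}[i]=1$, and the edge $(b,c)$ present iff $\vx_{v,u}[i]=1$, so that a triangle is formed exactly when both input bits are $1$. Since a triangle is simultaneously a cycle and an odd cycle, the same gadget serves all three problems: $H$ contains a triangle, is cyclic, and is non-bipartite iff $\ORSD_{\cK,n}=1$. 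Vertex disjointness of gadgets rules out any spurious triangles, cycles, or odd cycles that could be stitched together from fixed edges across different gadgets.

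In the edge-distribution model, the construction is balanced by design: player $u$ owns the $(k-1)n$ conditional edges of the form $(a_{u,v,i},b_{u,v,i})$ for $v>u$ and $(b_{v,u,i},c_{v,u,i})$ for $v<u$ (all determined by $\vx_{u,\cdot}$) together with a $1/k$ share of the $O(nk^2)$ fixed edges, giving $\tO(nk) = \tO(M_H/k+\Delta_H)$ edges per player as required, where $\Delta_H = O(1)$ by vertex disjointness. The main subtlety is the node-distribution variant: the middle vertex $b_{u,v,i}$ has an adjacency list that depends on both $\vx_{u,v}[i]$ and $\vx_{v,u}[i]$, so it cannot be owned by a single player. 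I will handle this by splitting $b_{u,v,i}$ into two vertices $b^u_{u,v,i}, b^v_{u,v,i}$ connected by a fixed cross-player edge, each owned by the corresponding player, so that all conditional edges become intra-player (and hence determined by their owner) while all cross-player edges are fixed (and hence consistently known to both endpoint owners). After the split, the gadget contains a $4$-cycle iff both bits are $1$, which suffices for acyclicity; adding one additional intra-player fixed edge yields a $5$-cycle, which suffices for bipartiteness; and for triangle detection a slightly further enlargement (keeping the activated structure nested around a fixed triangle template on auxiliary vertices) restores the triangle while keeping every vertex's adjacency list determinable from a single player's input. All such enlargements affect the parameters $M_H, N_H, \Delta_H$ and the per-player load only by constant factors, so Corollary~\ref{cor:or-disj} delivers the claimed lower bound in either distribution model.
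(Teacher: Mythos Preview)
Your high-level approach matches the paper's: reduce from $\ORSD_{\cK,n}$ so that $H$ is a forest when $\ORSD=0$ and contains a triangle when $\ORSD=1$, then invoke Corollary~\ref{cor:or-disj}. The paper uses one ``hub'' gadget per pair $\{u,w\}$ (two special vertices $y^{u,w},y^{w,u}$ joined by a fixed edge, with universe vertices $x^{\{u,w\}}_i$ attached to $y^{u,w}$ iff $\vx_{u,w}[i]=1$ and to $y^{w,u}$ iff $\vx_{w,u}[i]=1$), whereas you use one constant-size triangle gadget per coordinate. Both give $N_H,M_H=\Theta(nk^2)$ and balanced pieces of size $O(nk)$, and the edge-distribution case goes through identically.

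There is, however, a genuine gap in your node-distribution fix for \emph{triangle detection}. Your splitting trick correctly turns the triangle into a $4$-cycle for acyclicity (and, with a further vertex subdivision --- not merely ``one additional edge'' --- into a $5$-cycle for bipartiteness). But the claim that ``a slightly further enlargement \ldots\ restores the triangle while keeping every vertex's adjacency list determinable from a single player's input'' cannot be realized by any local gadget. Suppose every cross-owner edge is fixed and every conditional edge lies inside one owner's vertex set. Any triangle has three vertices; whatever the split among owners, at most one of its three edges is intra-$u$ and at most one is intra-$v$, and in fact at most one intra-owner edge total (if the split is $2{+}1$) or none (if $1{+}1{+}1$). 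Hence the presence of that triangle depends on at most one player's bit, so it cannot encode $\vx_{u,v}[i]\wedge\vx_{v,u}[i]$. Thus no vertex-disjoint gadget with strict node-ownership can make triangle-existence equivalent to $\ORSD=1$. (The paper's own gadget is not strictly node-distributed either --- the universe vertex $x^{\{u,w\}}_i$ has an adjacency list depending on both $\vx_{u,w}[i]$ and $\vx_{w,u}[i]$ --- so the paper glosses over the same point you stumbled on; but you have asserted a fix that provably does not exist.)
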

\begin{proof}
We will use a reduction from $\ORSD_{\cK, n}$ to construct an instance $H$ such that either (i) $H$ is a forest or (ii) $H$ has a triangle (depending on the output of the $\ORSD$ instance). Note that a protocol for any of acyclicity, triangle-detection or bipartiteness can distinguish between the two cases. Thus, to complete the proof we present the construction of $H$ from a given instance of $\{ \vx_{u,v}\}_{u\in \cK, v\in \cK\setminus \{u\}}$ of $\ORSD_{\cK,n}$. We will argue explicitly for node distribution and mention where the reduction needs to be modified to make it work for edge distribution.

Fix any $u\in \cK$.
We will define the subgraph $H_u$. $H_u$ is the disjoint union of subgraphs $H_{u,w}=(V_{u,w}, E_{u,w})$ for every $w\in\cK\setminus \{u\}$. In particular, $V_{u,w}$ consists of one vertex for each domain element of the universe corresponding to the two party $\SD$ corresponding to $(u,w)$ and two special vertices corresponding to the pair $\{u,w\}$. In other words we have
\[V_{u,w} = \left\{\cup_{i\in [n]} x^{\{u,w\}}_i\right\} \cup\{ y^{u,w}, y^{w,u}\}.\]
The edge set $E_{u,w}$ consists of the edge $( y^{u,w}, y^{w,u})$ plus edges between elements that are present in $\vx_{u,w}$ and $y^{u,w}$. In other words,
\[E_{u,w} = \left\{ \left(x^{\{u,w\}}_i, y^{u,w}\right)| \vx_{u,w}[i]=1\right\} \cup \left\{ ( y^{u,w}, y^{w,u})\right\}.\footnote{For edge distribution, we assign the edge $( y^{u,w}, y^{w,u})$ to exactly one of $H_u$ or $H_w$.}\]
See Figure~\ref{fig:or-disj-redux} for an illustration of this reduction.

To complete the argument we make the following observations. First if $\ORSD_{\cK, n}\left(\{ \vx_{u,v}\}_{u\in \cK, v\in \cK\setminus \{u\}}\right)=1$, then $H$ has a triangle otherwise $H$ is a forest. Indeed first note that for every $\{u,w\}\in\binom{\cK}{2}$, the subgraphs $H_{u,w}\cup H_{w,u}$ are node disjoint. Thus, $H$ has a triangle if and only if $H_{u,w}\cup H_{w,u}$ has a triangle for some $\{u,w\}\in\binom{\cK}{2}$. Next, we note that if $(\vx_{u,w}[i]\wedge \vx_{w,u}[i])=1$ for some $i\in [n]$, then the triple $\{y^{u,w},y^{w,u},x_i^{u,w}\}$ forms a triangle. Otherwise, $y^{u,w}$ and $y^{w,u}$ are connected via edges to disjoint set of the vertices in $\{x_i^{\{u,w\}}\}_{i\in [n]}$, which implies that $H_{u,w}\cup H_{w,u}$ is a forest. This argues the correctness of the reduction.

Second, for every $u\in \cK$, the player $u$ can construct $H_u$ from its input $\{ \vx_{u,w}\}_{w\in \cK\setminus \{u\}}$. Finally, note that in this construction both $N_H,M_H$ are $\Theta(nk^2)$.
Further, each $H_u$ is of size $O(nk)$, which is $O(M_H/k+\Delta_H)$.
All of the above along with Corollary~\ref{cor:or-disj} completes the proof.
\footnote{The fact that $M_H$ is $\Omega(nk^2)$ follows from the fact that sets in the hard distribution in Corollary~\ref{cor:or-disj} have sets whose size is linear in the size of the universe.}
\end{proof}

\begin{figure}[ht]
\begin{center}
\begin{tikzpicture}[scale=1.75]


\foreach \xS in {0,5.2}
{
	\foreach \yS in {0,5}
	{
		\draw (0.2+\xS,0.2+\yS) rectangle (1.4+\xS, 0.6+\yS); 
		\foreach \x in {1,2,3}
		{
			\draw (0.2+\xS+\x*0.4-0.4, 0.2+\yS) -- (0.2+\xS+\x*0.4-0.4, 0.6+\yS);
			\node [above] at (0.2+\xS+\x*0.4-0.2, 0.6+\yS)  {\textcolor{gray}{{\tiny $\x$}}};
		}
	}
}


\foreach \y in {1,2,3}
{
                \node [left] at (1.8, 1.2+\y*0.4-0.2) {\textcolor{gray}{{\tiny $\y$}}};
                \node [right] at (5.0, 1.2+\y*0.4-0.2) {\textcolor{gray}{{\tiny $\y$}}};
} 

\foreach \xS in {0, 2.8}
{
	\draw (1.8+\xS, 1.2) rectangle (2.2+\xS,2.4);
	\foreach \y in {1,2,3}
	{
		\draw (1.8+\xS, 1.2+\y*0.4-0.4) -- (2.2+\xS, 1.2+\y*0.4-0.4);
	}
}



\node at (0.4,0.4) {$0$};
\node at (0.8,0.4) {$1$};
\node at (1.2,0.4) {$0$};

\node [below] at (0.8,0.2) {$\vx_{2,1}$};


\node[circle,draw] (x112) at (0.4,1.2) {{\tiny $x_1^{\{1,2\}}$}};
\node[circle,draw] (x212) at (0.8,1.8) {{\tiny $x_2^{\{1,2\}}$}};
\node[circle,draw] (x312) at (1.2,1.2) {{\tiny $x_3^{\{1,2\}}$}};
\node[circle,draw] (y21) at (0.8,2.6) {{\tiny $y^{2,1}$}};

\draw (x212) edge (y21);




\node at (0.4,5.4) {$1$};
\node at (0.8,5.4) {$0$};
\node at (1.2,5.4) {$1$};

\node [above] at (0.8,5.8) {$\vx_{1,2}$};


\node[circle,draw] (x121) at (0.4,4.2) {{\tiny $x_1^{\{1,2\}}$}};
\node[circle,draw] (x221) at (0.8,4.8) {{\tiny $x_2^{\{1,2\}}$}};
\node[circle,draw] (x321) at (1.2,4.2) {{\tiny $x_3^{\{1,2\}}$}};
\node[circle,draw] (y12) at (0.8,3.4) {{\tiny $y^{1,2}$}};

\draw (y12) edge (y21);
\draw (y12) edge (x121);
\draw (y12) edge (x321);

\begin{scope}[shift={(5.2,0)}]



\node at (0.4,0.4) {$0$};
\node at (0.8,0.4) {$0$};
\node at (1.2,0.4) {$1$};

\node [below] at (0.8,0.2) {$\vx_{3,1}$};


\node[circle,draw] (x113) at (0.4,1.8) {{\tiny $x_1^{\{1,3\}}$}};
\node[circle,draw] (x213) at (0.8,1.2) {{\tiny $x_2^{\{1,3\}}$}};
\node[circle,draw] (x313) at (1.2,1.8) {{\tiny $x_3^{\{1,3\}}$}};
\node[circle,draw] (y31) at (0.8,2.6) {{\tiny $y^{3,1}$}};

\draw (x313) edge (y31);




\node at (0.4,5.4) {$1$};
\node at (0.8,5.4) {$1$};
\node at (1.2,5.4) {$0$};

\node [above] at (0.8,5.8) {$\vx_{1,3}$};


\node[circle,draw] (x131) at (0.4,4.2) {{\tiny $x_1^{\{1,3\}}$}};
\node[circle,draw] (x231) at (0.8,4.8) {{\tiny $x_2^{\{1,3\}}$}};
\node[circle,draw] (x331) at (1.2,4.2) {{\tiny $x_3^{\{1,3\}}$}};
\node[circle,draw] (y13) at (0.8,3.4) {{\tiny $y^{1,3}$}};

\draw (y31) edge (y13);
\draw (x131) edge (y13);
\draw (x231) edge (y13);

\end{scope}



\node at (2,1.4) {$0$};
\node at (2,1.8) {$1$};
\node at (2,2.2) {$1$};

\node [above] at (2,2.4) {$\vx_{2,3}$};


\node[circle,draw] (x123) at (2.6,0.6) {{\tiny $x_1^{\{2,3\}}$}};
\node[circle,draw,fill=orange, fill opacity=0.5] (x223) at (2.6,1.4) {{\tiny $x_2^{\{2,3\}}$}};
\node[circle,draw] (x323) at (2.6,2.6) {{\tiny $x_3^{\{2,3\}}$}};
\node[circle,draw,fill=orange, fill opacity=0.5] (y23) at (3.0,2) {{\tiny $y^{2,3}$}};

\draw (x223) edge (y23);
\draw (x323) edge (y23);




\node at (4.8,1.4) {$0$};
\node at (4.8,1.8) {$1$};
\node at (4.8,2.2) {$0$};

\node [above] at (4.8,2.4) {$\vx_{3,2}$};


\node[circle,draw] (x132) at (4.2,0.6) {{\tiny $x_1^{\{2,3\}}$}};
\node[circle,draw,fill=orange, fill opacity=0.5] (x232) at (4.2,1.4) {{\tiny $x_2^{\{2,3\}}$}};
\node[circle,draw] (x332) at (4.2,2.6) {{\tiny $x_3^{\{2,3\}}$}};
\node[circle,draw,fill=orange, fill opacity=0.5] (y32) at (3.8,2) {{\tiny $y^{3,2}$}};

\draw (x232) edge (y32);
\draw (y23) edge (y32);

			

\draw [dotted, gray, thick] (0.1, -0.1) rectangle (3.3, 2.95);
\draw [dotted, gray, thick] (3.5, -0.1) rectangle (6.7, 2.95);
\draw [dotted, gray, thick] (0.1, 3.05) rectangle (6.7, 6.2);

\node [above] at (3.4,3) {{\tiny Player $1$}};
\node [left] at (3.3, .1) {\textcolor{gray}{{\tiny Player $2$}}};
\node [right] at (3.5, .1) {\textcolor{blue}{{\tiny Player $3$}}};

\end{tikzpicture}
\end{center}
\caption{Illustration of the reduction in proof of Theorem~\ref{thm:or-disj-redux} for $n=k=3$. 
In this example the overall graph $H$ has a triangle and the three participating nodes are colored in orange. (Note that in this case $\ORSD_{\{1,2,3\},3}$ is $1$.)}
\label{fig:or-disj-redux}
\end{figure}
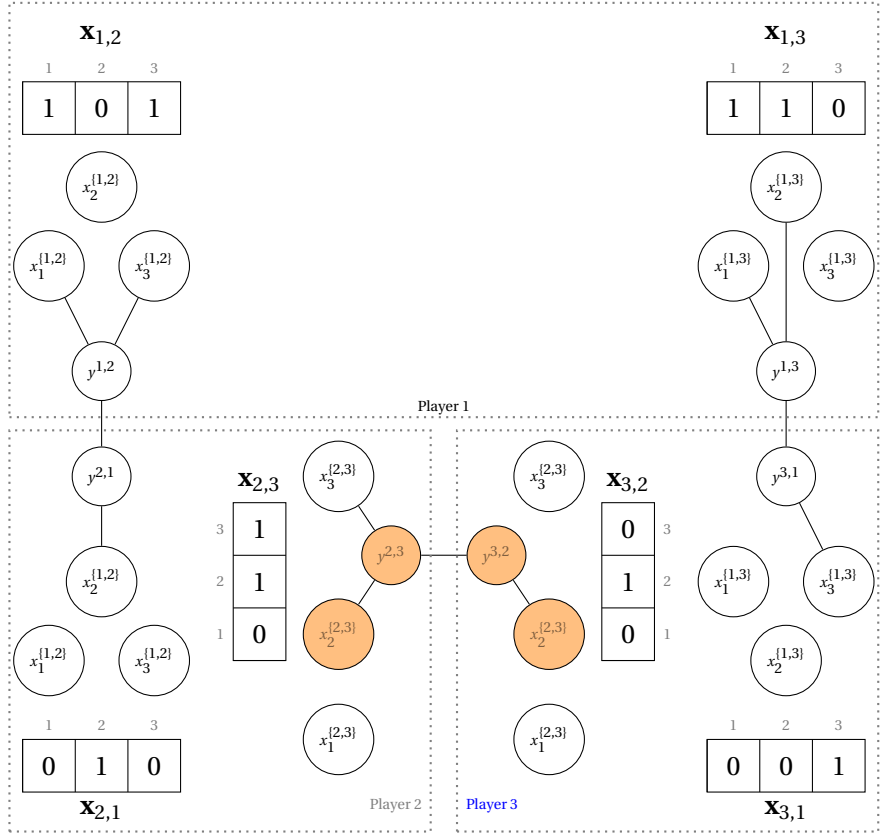

\subsubsection{Upper Bounds}

We defer the discussion of the upper bounds for acyclicity and bipartiteness to Section~\ref{app:bfs}.

We next outline a protocol (which is simple generalization of the protocol in~\cite{DKO13})  to detect whether $H$ contains a triangle or not.
\begin{prop}
\label{prop:triang}
Assuming that for every $\eps>0$, there exists arithmetic circuits of size $O(n^{2+\epsilon})$ for computing $n\times n$ matrix multiplication over $\mathbb{F}_2$, the problem of triangle detection on $H$ can be solved with $\tO\left(\tau_{\MCF}\left(G,\cK,\frac{(N_H)^{2+\epsilon}}{k}\right)\right)$ rounds of randomized communication. 
\end{prop}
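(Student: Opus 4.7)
The plan is to build a small, low-depth randomized circuit for triangle detection from the hypothesised matrix-multiplication circuit, and then feed it into Lemma~\ref{lem:ckt-protocol}. Concretely, the goal is a circuit of size $\tO((N_H)^{2+\epsilon})$ and depth $\tO(1)$ whose input is the $N_H\times N_H$ adjacency matrix $A\in\mathbb{F}_2^{N_H\times N_H}$ of $H$ and whose output is $1$ iff $H$ contains a triangle; the lemma will then yield round complexity $\tO\bigl(\tau_{\MCF}(G,\cK,(N_H)^{2+\epsilon}/k)\bigr)$, matching the claim.

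For the circuit I would use the standard reduction of triangle detection to Boolean matrix multiplication, randomised to cope with the fact that $A\cdot A$ over $\mathbb{F}_2$ only exposes parities of common-neighbour counts. Using public randomness I would sample $T=O(\log N_H)$ independent subsets $S_1,\dots,S_T\subseteq V(H)$ (each vertex in $S_i$ independently with probability $1/2$), zero out the columns of $A$ outside $S_i$ to form $A^{(S_i)}$, and invoke the hypothesised $O(n^{2+\epsilon})$ matrix-multiplication circuit to compute $B_i:=A\cdot A^{(S_i)}$ over $\mathbb{F}_2$. The circuit then outputs $\bigvee_{i,u,v}A_{uv}\wedge (B_i)_{uv}$. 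The correctness argument is the usual one: for any pair $(u,v)$ with at least one common neighbour in $H$ (which is exactly the triangle-edge case once $A_{uv}=1$), the parity $|N(u,v)\cap S_i|\bmod 2$ is $1$ with probability $1/2$ per $i$, so the probability that every $(B_i)_{uv}=0$ is at most $2^{-T}\le 1/\mathrm{poly}(N_H)$, and a union bound over the $\le(N_H)^2$ pairs gives correctness with high probability; if $H$ has no triangle, every term in the OR is identically $0$. The total size is $T\cdot O((N_H)^{2+\epsilon})+\tO((N_H)^2)=\tO((N_H)^{2+\epsilon})$, and the depth is polylogarithmic in $N_H$ (the matrix-multiplication circuit can be taken to be log-depth, and the final OR is log-depth), so after the routine rewrite to constant fan-in and fan-out it is still $\tO(1)$ in the sense needed by Lemma~\ref{lem:ckt-protocol}.

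Finally I would plug this circuit into Lemma~\ref{lem:ckt-protocol}, which gives a protocol of round complexity $\tO\bigl(d\cdot \tau_{\MCF}(G,\cK,s/k)\bigr)=\tO\bigl(\tau_{\MCF}(G,\cK,(N_H)^{2+\epsilon}/k)\bigr)$; public randomness means every player reads the same $S_i$'s with zero communication, so the randomness is free. The one mildly delicate point — and the only place where I expect to have to pause — is the input-distribution bookkeeping: the circuit expects all of $A$ as input, whereas in the balanced node distribution guaranteed by Section~\ref{sec:repr} each player initially holds only its assigned rows of $A$, a total of $\tO((N_H)^2/k)$ bits per player. This is exactly the situation handled by the ``level $0$'' routing step inside the proof of Lemma~\ref{lem:ckt-protocol}, and since each player's share is $\tO((N_H)^2/k)\le\tO(s/k)$ bits, the routing fits inside the first invocation of $\tau_{\MCF}$ without changing the asymptotic bound. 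This completes the plan.
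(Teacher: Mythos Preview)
Your approach is essentially the paper's: reduce triangle detection to a handful of $\mathbb{F}_2$ matrix products via random subsampling (the paper delegates this step to \cite{DKO13}, whereas you spell it out), assemble a small poly-log-depth circuit, and invoke Lemma~\ref{lem:ckt-protocol}.

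Two small points. First, a slip: with your definition (zeroing the \emph{columns} of the second factor outside $S_i$), $(A\cdot A^{(S_i)})_{uv}$ is either $0$ or the full parity $\sum_w A_{uw}A_{wv}$, not $\sum_{w\in S_i}A_{uw}A_{wv}$; you want to zero the \emph{rows} of the second factor (equivalently, the columns of the first). Second, and more substantively, the hypothesis bounds only the \emph{size} of the $\mathbb{F}_2$ matrix-multiplication circuit, so your assertion that it ``can be taken to be log-depth'' is not automatic --- the paper explicitly flags this step and cites the structural results on (bilinear) matrix-multiplication circuits in \cite{BCS97} to obtain poly-logarithmic depth while preserving the wire count.
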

Note that the above bound is within any polynomial factor of the lower bound in Theorem~\ref{thm:or-disj-redux} for the case of graphs with $M_H\ge \Omega(N_H^2)$.

\begin{proof}[Proof Sketch of Proposition~\ref{prop:triang}]
First, recall that cubing the adjacency matrix of $H$ over the Boolean semiring is enough to detect triangles. This is because a triangle is present if and only if the cubed matrix has a non-zero diagonal entry. It can be shown (see Section 2.1 of \cite{DKO13}) that there exists a randomized reduction of this problem to a few matrix multiplications over the field $\mathbb{F}_2$. Now the conjecture about matrix multiplication yields arithmetic circuits of $O(n^{2+\epsilon})$ size for these matrix multiplications. A further argument shows, exploiting the structure of matrix multiplication \cite{BCS97}, that such circuits can be made to have few wires and poly-logarithmic depth. Given such a circuit, an application of our Lemma~\ref{lem:ckt-protocol} yields the distributed protocol with $\tO\left(\tau_{\MCF}\left(G,\cK,\frac{(N_H)^{2+\epsilon}}{k}\right)\right)$ rounds. 
\end{proof}

\subsection{Reductions from $\ANDSD$}

In this section we consider the following two problems:

\paragraph{Connectivity.} Given $H_u$ to each player $u\in \cK$, the players have to decide if $H$ is connected or not.

\paragraph{Connected Components.} Given $H_u$ to each player $u\in \cK$, the players have to compute the number of connected components of $H$.

Since a lower bound for connectivity implies a lower bound for the connected components, we only present the lower bound for the latter. This reduction again is a simple adaptation of the corresponding one for total communication in~\cite{CRR14}.

\begin{thm}
\label{thm:and-disj-redux}
The connectivity problem for input $H$ on topology $G$ with set of player $\cK$ needs $\tOm\left(\tau_{\MCF}\left(G,\cK,\frac{M_H+N_H}{k}\right)\right)$ rounds of communication (even for randomized protocols).  Further, these results hold for the case when $H$ is $O(M_H/k+\Delta_H)$-balanced node (or edge) distributed.
\end{thm}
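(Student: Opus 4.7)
The plan is to prove Theorem~\ref{thm:and-disj-redux} by a reduction from $\ANDSD_{\cK,n}$ to connectivity, adapting the template of Theorem~\ref{thm:or-disj-redux}. From an input $\{\vx_{u,v}\}$ of $\ANDSD_{\cK,n}$ I will construct an input graph $H$ with $N_H,M_H=\Theta(nk^2)$ such that $H$ is connected iff $\ANDSD_{\cK,n}(\{\vx_{u,v}\})=1$. Combined with Corollary~\ref{cor:and-disj} this yields the claimed $\tOm\left(\tau_{\MCF}(G,\cK,nk)\right)=\tOm\left(\tau_{\MCF}(G,\cK,(M_H+N_H)/k)\right)$ bound.

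The construction: for each unordered pair $\{u,v\}\in\binom{\cK}{2}$ (fix the convention $u<v$), introduce vertices $\alpha^{u,v}$, $\beta^{u,v}$, and $x_i^{u,v}$ for every $i\in[n]$. Player $u$ contributes the edge $(\alpha^{u,v},x_i^{u,v})$ whenever $\vx_{u,v}[i]=1$. Player $v$ contributes $(\beta^{u,v},x_i^{u,v})$ whenever $\vx_{v,u}[i]=1$, and, crucially, a fallback edge $(\alpha^{u,v},x_i^{u,v})$ whenever $\vx_{v,u}[i]=0$. A non-input spanning tree over $\{\alpha^{u,v}\}_{u<v}$ (owned by, say, player $1$) glues all the $\alpha$-vertices into a single component $C$. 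The fallback edges are the key new ingredient compared to the acyclicity/triangle construction: they guarantee that every $x_i^{u,v}$ is incident to at least one edge, which is essential since the $\TRB$ hard distribution (Theorem~\ref{thm:Tribes-2-party}) typically produces many coordinates with $\vx_{u,v}[i]=\vx_{v,u}[i]=0$.

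For correctness I will argue that, inside the gadget for pair $\{u,v\}$, $x_i^{u,v}$ is adjacent to $\alpha^{u,v}$ iff $\vx_{u,v}[i]=1$ or $\vx_{v,u}[i]=0$, and adjacent to $\beta^{u,v}$ iff $\vx_{v,u}[i]=1$. A simple four-case enumeration over $(\vx_{u,v}[i],\vx_{v,u}[i])$ shows that every $x_i^{u,v}$ has degree $\ge1$, and that $x_i^{u,v}$ is adjacent to both $\alpha^{u,v}$ and $\beta^{u,v}$ exactly when $\vx_{u,v}[i]=\vx_{v,u}[i]=1$. Hence $\alpha^{u,v}$ and $\beta^{u,v}$ lie in the same component iff the two-party $\SD$ evaluates to $1$ on the pair. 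Since every $\alpha^{u,v}$ already lies in the big component $C$, $H$ is connected iff each $\beta^{u,v}$ lies in $C$, iff $\SD$ holds on every pair, iff $\ANDSD_{\cK,n}=1$.

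Finally I will verify the size/balance claims and the node-distribution case. The vertex count is $\binom{k}{2}(n+2)=\Theta(nk^2)$ and, using that in the $\TRB$ hard distribution each $\vx_{u,v}$ has Hamming weight $\Theta(n)$, the edge count is $\Theta(nk^2)$ as well, so $M_H+N_H=\Theta(nk^2)$. Each player contributes $O(n)$ edges per pair and participates in $k-1$ pairs, so each $H_u$ has $O(nk)=O(M_H/k)$ edges, meeting the balanced edge-distribution requirement. For node distribution I follow the splitting trick used in Theorem~\ref{thm:or-disj-redux}: assign $\alpha^{u,v}$ to player $u$ and $\beta^{u,v}$ to player $v$, and split each $x_i^{u,v}$ into copies $x_i^{u\to v},x_i^{v\to u}$ owned by $u$ and $v$ respectively and joined by a non-input edge; the same connectivity analysis then goes through. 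The main obstacle I anticipate is keeping the fallback edges compatible with node ownership after this splitting, which reduces to a routine case check. Plugging the parameters into Corollary~\ref{cor:and-disj} then yields the desired $\tOm\left(\tau_{\MCF}(G,\cK,(M_H+N_H)/k)\right)$ lower bound.
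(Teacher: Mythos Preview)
Your proposal is correct and follows essentially the same reduction as the paper: the paper also reduces from $\ANDSD_{\cK,n}$ via a per-pair gadget with universe vertices $x_i^{\{u,w\}}$, a ``local'' hub $\ell^{\{u,w\}}$ (your $\beta^{u,v}$), and the same fallback-edge trick so that every $x_i$ is always attached to something and the local hub joins the big component iff the corresponding two-party $\SD$ evaluates to $1$. The only cosmetic difference is that the paper glues the gadgets through a single shared vertex $r$ rather than a spanning tree on your $\alpha^{u,v}$'s, and it places the fallback edges on the smaller-indexed player instead of the larger; both variants yield $N_H,M_H=\Theta(nk^2)$ and the same balance analysis, so Corollary~\ref{cor:and-disj} finishes the argument in either case.
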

We note that by Proposition~\ref{prop:mcf-clique}, the above implies a lower bound of $\tOm((M_H+N_H)/k^2)$ for the case when $G$ is a clique on $k$ terminals. This quantitatively recovers the bound for connectivity proved in~\cite{klauck}.

\begin{proof}[Proof of Theorem~\ref{thm:and-disj-redux}]
We will use a reduction from $\ANDSD_{\cK, n}$ to construct an instance $H$ such that  $H$ is connected or not depending on output of the $\ANDSD_{\cK,n}$ instance. 
To complete the proof we present the construction of $H$ from a given instance of $\{ \vx_{u,v}\}_{u\in \cK, v\in \cK\setminus \{u\}}$ of $\ANDSD_{\cK,n}$. (The argument holds for both node and edge distributions.)

Fix any $u\in \cK$.
We will define the subgraph $H_u$. $H_u$ is the union of subgraphs $H_{u,w}=(V_{u,w}, E_{u,w})$ for every $w\in\cK\setminus \{u\}$. We next present the description of $H_{u,w}$. For the rest of the proof, we will assume that there is a pre-determined total order among the players, i.e. given any two $u,w\in\cK$, the comparison $u<w$ is well-defined.

In particular, $V_{u,w}$ consists of one vertex for each domain element of the the universe corresponding to the two party $\SD$ corresponding to $(u,w)$ and two special vertices corresponding to the  pair $\set{u,w}$. In other words, we have if $u<w$
\[V_{u,w} = \left\{\cup_{i\in [n]} x^{\{u,w\}}_i\right\} \cup\{\ell^{\{u,w\}},r\}\]
and otherwise
\[V_{u,w} = \left\{\cup_{i\in [n]} x^{\{u,w\}}_i\right\} \cup\{r\},\]
where the node $r$ is shared across all subgraphs.
The edge set $E_{u,w}$ consists of the following: Consider the case $u<w$. If $\vx_{u,w}[i]=1$, then the edge $\big(x^{\{u,w\}}_i,\ell^{\{u,w\}}\big)$ is present. If $\vx_{u,w}[i]=0$, the edge $\big(x^{\{u,w\}}_i,r\big)$ is present. In the other case of $u>w$, edge $\big(x^{\{u,w\}}_i,r\big)$ is present if $\vx_{u,w}[i]=1$. See Figure~\ref{fig:and-disj-redux} for an illustration of this reduction.

To complete the argument we make the following observations. First note that if $\ANDSD_{\cK, n}\left(\{ \vx_{u,v}\}_{u\in \cK, v\in \cK\setminus \{u\}}\right)=1$, then $H$ is connected otherwise $H$ is not. Second, for every $u\in \cK$, the player $u$ can construct $H_u$ from its input $\{ \vx_{u,w}\}_{w\in \cK\setminus \{u\}}$. Finally, note that in this construction both $N_H,M_H$ are $\Theta(nk^2)$. 
Further, each $H_u$ is of size $O(nk)$, which is $O(M_H/k+\Delta_H)$.
All of the above along with Corollary~\ref{cor:and-disj} completes the proof.\footnote{The claim that $M_H\ge \Omega(nk^2)$ follows from the fact that in the hard distribution in Corollary~\ref{cor:and-disj}, the individual sets are of size $\Omega(n)$.}
\end{proof}

\begin{figure}[ht]
\begin{center}
\begin{tikzpicture}[scale=1.75]



\draw (0.2,0.2) rectangle (1.4, 0.6);
\foreach \x in{1,2,3}
{
	\draw (0.2+\x*0.4-0.4, 0.2) -- (0.2+\x*0.4-0.4, 0.6);
        \node [above] at (0.2+\x*0.4-0.2, 0.6)  {\textcolor{gray}{{\tiny $\x$}}};
}

\node at (.4,.4) {$1$};
\node at (.8,.4) {$0$};
\node at (1.2,.4) {$0$};

\node [below] at (.8,.2) {$\vx_{2,1}$};



\node[circle, draw] (x121) at (.4,1.2) {{\tiny $x_1^{\{1,2\}}$}};
\node[circle, draw] (r2) at (.8,1.8) {{\tiny $r$}};

\draw (x121) edge (r2);



\draw (2, 1) rectangle (2.4,2.2);

\foreach \y in {1,2,3}
{
	\draw (2, 1+\y*.4-.4) -- (2.4, 1+\y*.4-.4);
	\node [left,gray] at (2, 1+\y*.4-.2) {{\tiny $\y$}};
}

\node at (2.2, 1.2) {$0$};
\node at (2.2, 1.6) {$1$};
\node at (2.2, 2) {$0$};

\node [below] at (2.2,1) {$\vx_{2,3}$};


\node[circle, draw] (x123) at (1.4,1.2) {{\tiny $x_1^{\{2,3\}}$}};
\node[circle, draw] (x323) at (1.4,2) {{\tiny $x_3^{\{2,3\}}$}};
\node[circle, draw] (x223) at (2.8,1.6) {{\tiny $x_2^{\{2,3\}}$}};
\node[circle, draw] (l23) at (2.8,.8) {{\tiny $\ell^{\{2,3\}}$}};

\draw (x123) edge (r2);
\draw (x323) edge (r2);
\draw (x223) edge (l23);





\draw (4, 1) rectangle (4.4,2.2);

\foreach \y in {1,2,3}
{
        \draw (4, 1+\y*.4-.4) -- (4.4, 1+\y*.4-.4);
        \node [left,gray] at (4, 1+\y*.4-.2) {{\tiny $\y$}};
}

\node at (4.2, 1.2) {$0$};
\node at (4.2, 1.6) {$1$};
\node at (4.2, 2) {$1$};

\node [below] at (4.2,1) {$\vx_{3,2}$};


\draw (5,0.2) rectangle (6.2, 0.6);

\foreach \x in{1,2,3}
{
        \draw (5+\x*0.4-0.4, 0.2) -- (5+\x*0.4-0.4, 0.6);
        \node [above] at (5+\x*0.4-0.2, 0.6)  {\textcolor{gray}{{\tiny $\x$}}};
}

\node at (5.2,.4) {$0$};
\node at (5.6,.4) {$0$};
\node at (6,.4) {$1$};

\node [below] at (5.6,.2) {$\vx_{3,1}$};


\node[circle,draw] (x232) at (5,1.2) {{\tiny $x_2^{\{2,3\}}$}};
\node[circle,draw] (x332) at (5,2) {{\tiny $x_3^{\{2,3\}}$}};
\node[circle,draw] (x331) at (5.8,1.2) {{\tiny $x_3^{\{1,3\}}$}};
\node[circle,draw] (r3) at (5.8,2) {{\tiny $r$}};

\draw (r3) edge (x232);
\draw (r3) edge (x332);
\draw (r3) edge (x331);





\draw (.2, 3) rectangle (.6,4.2);

\foreach \y in {1,2,3}
{
        \draw (.2, 3+\y*.4-.4) -- (.6, 3+\y*.4-.4);
        \node [left,gray] at (.2, 3+\y*.4-.2) {{\tiny $\y$}};
}

\node at (.4, 3.2) {$0$};
\node at (.4, 3.6) {$1$};
\node at (.4, 4) {$1$};

\node [below] at (.4,3) {$\vx_{1,2}$};


\draw (5.8, 3) rectangle (6.2,4.2);

\foreach \y in {1,2,3}
{
        \draw (5.8, 3+\y*.4-.4) -- (6.2, 3+\y*.4-.4);
        \node [right,gray] at (6.2, 3+\y*.4-.2) {{\tiny $\y$}};
}

\node at (6, 3.2) {$1$};
\node at (6, 3.6) {$1$};
\node at (6, 4) {$1$};

\node [below] at (6,3) {$\vx_{1,3}$};



\node[circle, draw] (x112) at (1.2,3) {{\tiny $x_1^{\{1,2\}}$}};
\node[circle, draw] (r3) at (2.6,3) {{\tiny $r$}};

\draw (x112) edge (r3);

\node[circle, draw, fill=blue, fill opacity=0.5] (x212) at (1.2,3.6) {{\tiny $x_2^{\{1,2\}}$}};
\node[circle, draw, fill=blue, fill opacity=0.5] (x312) at (1.2,4.2) {{\tiny $x_3^{\{1,2\}}$}};
\node[circle, draw, fill=blue, fill opacity=0.5] (l2) at (2.6,3.9) {{\tiny $\ell^{\{1,2\}}$}};

\draw (l2) edge (x212);
\draw (l2) edge (x312);



\node[circle, draw] (x113) at (5.2,3) {{\tiny $x_1^{\{1,3\}}$}};
\node[circle, draw] (x213) at (5.2,3.6) {{\tiny $x_2^{\{1,3\}}$}};
\node[circle, draw] (x313) at (5.2,4.2) {{\tiny $x_3^{\{1,3\}}$}};
\node[circle, draw] (l13) at (4,3.6) {{\tiny $\ell^{\{1,3\}}$}};

\draw (l13) edge (x113);
\draw (l13) edge (x213);
\draw (l13) edge (x313);
\draw (l13) edge (x113);




\draw [dotted, gray, thick] (0, -0.1) rectangle (3.3, 2.5);
\draw [dotted, gray, thick] (3.5, -0.1) rectangle (6.5, 2.5);
\draw [dotted, gray, thick] (0, 2.6) rectangle (6.5, 4.6);

\node [above] at (3.4,2.6) {{\tiny Player $1$}};
\node [left] at (3.3, .1) {\textcolor{gray}{{\tiny Player $2$}}};
\node [right] at (3.5, .1) {\textcolor{blue}{{\tiny Player $3$}}};

\end{tikzpicture}
\end{center}
\caption{Illustration of the reduction in proof of Theorem~\ref{thm:and-disj-redux} for $n=k=3$. (For clarity singleton nodes in each of the player's subgraphs are not shown.) 
In this example the overall graph $H$ has two connected components: the nodes shaded blue form one connected component and the rest of the vertices form another connected component. (Note that in this case $\ANDSD_{\{1,2,3\},3}$ is $0$.)}
\label{fig:and-disj-redux}
\end{figure}
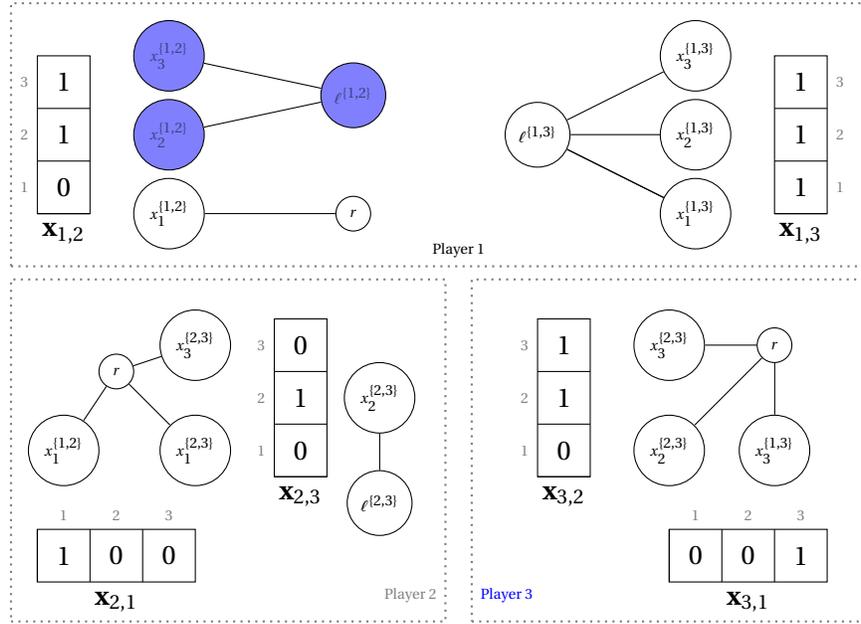

\subsubsection{Upper Bounds}
\label{app:bfs}

We outline how we can adapt the argument of~\cite{klauck} to implement BFS in our framework and then argue that for  large enough inputs $H$, the lower bound for connectivity in Theorem~\ref{thm:and-disj-redux} is tight. (Recall that we are assuming that the original input $H$ is randomly partitioned across the terminals in a node distribution: we'll call this the random node distribution.)

\begin{thm}
\label{thm:connectivity-ub} 
Let $H$ be a random node distributed graph. Then if $H$ is large enough compared to $G$, we can solve the connectivity problem on $H$ with $\tO\left(\tau_{\MCF}\left(G,\cK,\frac{M_H+ N_H}{k}\right)\right)$ randomized rounds of communication.
\end{thm}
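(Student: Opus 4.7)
The plan is to implement a distributed Bor\r{u}vka-style connectivity algorithm, using the multi-commodity flow primitive of Lemma~\ref{lem:bounded-dem} as the atomic communication step. Bor\r{u}vka proceeds in $\lceil \log_2 N_H \rceil$ phases; at the end of phase $i$ every vertex $v \in V(H)$ stores a super-component identifier such that vertices in the same connected component of $H$ that have been merged by phase $i$ share an identifier, and each super-component contains $\min(2^i,|C|)$ vertices where $C$ is the true component containing $v$. After $O(\log N_H)$ phases every vertex knows the identifier of its connected component, from which the players can answer the connectivity question by a single aggregation on a low-diameter Steiner tree over $\cK$.

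Each Bor\r{u}vka phase is carried out by a constant number of bulk communication substeps: (a) propagate the current super-component identifiers of both endpoints of every edge of $H$ to the terminal storing that edge; (b) each super-component selects one outgoing edge (if any)---say the lexicographically smallest one---by an aggregation keyed on the component identifier; (c) perform $O(\log N_H)$ pointer-jumping steps on the resulting merge forest over super-components to compute the new identifiers. Under the random node distribution, each terminal in $\cK$ holds $\tO((M_H+N_H)/k)$ adjacency-list entries by the Chernoff/balls-into-bins argument already invoked in Lemma~\ref{lem:edge->node-distr}. We introduce a public random hash $\pi$ mapping (super-)component identifiers to terminals in $\cK$, so the aggregations in (b) and (c) are also load-balanced up to $\tO((M_H+N_H)/k)$ messages per terminal. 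Each substep is therefore an $\tO((M_H+N_H)/k)$-bounded demand on $\cK$, and by Lemma~\ref{lem:bounded-dem} can be routed in $\tO(\tau_{\MCF}(G,\cK,(M_H+N_H)/k))$ rounds. Summing over the $O(\log^2 N_H)$ substeps gives the claimed total of $\tO(\tau_{\MCF}(G,\cK,(M_H+N_H)/k))$ rounds.

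The main obstacle is the load-balancing argument behind (b) and (c): a naive implementation can create hot-spot terminals receiving a $\omega((M_H+N_H)/k)$-sized message whenever the current super-component structure is skewed. We handle this exactly as in~\cite{klauck}, by routing ``forwarding requests'' in pointer jumping through a uniformly random intermediate terminal drawn from $\cK$ and using independent hash functions in each phase, so that the maximum load per terminal remains $\tO((M_H+N_H)/k)$ with high probability. The ``$H$ large enough compared to $G$'' hypothesis lets us absorb polylogarithmic additive overheads---e.g., the cost of distributing the phase number, the current hash function, and the final yes/no answer via a Steiner-tree broadcast whose depth is at most the diameter of $G$---into the main $\tO(\tau_{\MCF}(G,\cK,(M_H+N_H)/k))$ bound, matching the lower bound in Theorem~\ref{thm:and-disj-redux}.
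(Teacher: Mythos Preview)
Your proposal takes a genuinely different algorithmic route from the paper. The paper simulates a \emph{flooding BFS} on $H$: it picks a start vertex and, in each of $D$ phases (with $D$ the BFS depth), routes the tokens crossing one BFS layer as a multicommodity flow with $\tO(m_i/k + \Delta_i)$-bounded demand. The delicate step in the paper is collapsing the sum $\sum_{i=0}^{D} \tO\big(\tau_{\MCF}(G,\cK, m_i/k + \Delta_i)\big)$ into a single $\tO\big(\tau_{\MCF}(G,\cK, (M_H+N_H)/k)\big)$; for this the paper invokes Lemma~\ref{lem:MCF-large-dem}, a super-additivity property of $\tau_{\MCF}$, and the ``$H$ large enough'' hypothesis is used precisely to make that lemma applicable (namely $M_H/(Dk) \geq B_0$). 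Your Bor\r{u}vka approach sidesteps Lemma~\ref{lem:MCF-large-dem} entirely: you have only $O(\log^2 N_H)$ substeps, each already with a uniform $\tO((M_H+N_H)/k)$-bounded demand, so Lemma~\ref{lem:bounded-dem} and Claim~\ref{claim:tau-sub-additive} alone suffice---a real simplification, and your use of ``$H$ large enough'' is correspondingly milder than the paper's. The price is that your per-phase load-balancing is more delicate than in BFS (where a layer's edges are automatically balanced under random node distribution by~\cite[Lemma~4.1]{klauck}). In particular, your stated fix for pointer-jumping hot spots---routing requests through a uniformly random intermediate terminal---balances path congestion but does \emph{not} bound the endpoint load at a high-fan-in parent, so that sentence needs a different mechanism (e.g., local reduction followed by a tree-structured aggregation per key, or a sorting primitive, as is standard in MPC-style connectivity algorithms); the appeal to~\cite{klauck} does not cover this as stated, since the argument the present paper borrows from~\cite{klauck} is the BFS one.
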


Before we prove Theorem~\ref{thm:connectivity-ub}, we will need the following fact about multi commodity flows (which is only needed to prove the tightness of our upper bound):

\begin{lemma}
\label{lem:MCF-large-dem}
There exists a constant $c>0$ such that
given any $G$ and $\cK$, there exists an integer $B_0$ such that for every $B\ge B_0$ we have that
\[\tau_{\MCF}(G,\cK,B) \ge c\cdot\left(\frac{B}{B_0}\cdot \tau_{\MCF}(G,\cK,B_0)\right).\]
\end{lemma}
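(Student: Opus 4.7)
My plan is to invoke Fekete's lemma for superadditive sequences to show that the ratio $\tau_{\MCF}(G,\cK,B)/B$ converges as $B\to\infty$, and then to exploit that convergence to obtain the claim with the universal constant $c=1/2$.

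Define $f(\tau)$ to be the maximum value of $B$ such that in the time graph $G^{(\tau)}$ one can simultaneously route $B/k$ units of flow from $(u,0)$ to $(v,\tau)$ for every ordered pair of distinct $u,v\in\cK$, respecting unit capacity on each non-memory edge. Then $\tau_{\MCF}(G,\cK,B)=\min\{\tau:f(\tau)\ge B\}$. The first step is to establish that $f$ is \emph{superadditive}, i.e. $f(\tau_1+\tau_2)\ge f(\tau_1)+f(\tau_2)$. Given optimal flows $F_1$ in $G^{(\tau_1)}$ and $F_2$ in $G^{(\tau_2)}$, I concatenate them inside $G^{(\tau_1+\tau_2)}$: route $F_1$ on the non-memory edges between layers $0,\dots,\tau_1$, and route a time-shifted copy of $F_2$ on the non-memory edges between layers $\tau_1,\dots,\tau_1+\tau_2$; the ``second batch'' of flow waits at each source via memory edges between times $0$ and $\tau_1$. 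Because the two pieces use disjoint non-memory edges, the combined flow has unit congestion and routes $(f(\tau_1)+f(\tau_2))/k$ units between every ordered pair, which proves superadditivity. This will be the main conceptual step; the remaining steps are essentially calculation.

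Next I will show that $f(\tau)/\tau$ is uniformly bounded. Fix any terminal $u\in\cK$ and consider the cut in $G^{(\tau)}$ that separates $\{(u,i)\}_{0\le i<\tau}$ from the rest. The other $k-1$ terminals collectively receive $(k-1)f(\tau)/k$ units of flow from $u$ across this cut, while the cut contains only $\deg_G(u)\cdot\tau$ non-memory edges (memory edges do not cross it). Hence $f(\tau)/\tau\le k\deg_G(u)/(k-1)$, so $\mu^*:=\sup_{\tau\ge 1}f(\tau)/\tau$ is finite. By Fekete's lemma applied to the superadditive sequence $f(\tau)$, we get $f(\tau)/\tau\to\mu^*$ as $\tau\to\infty$; equivalently, $\tau_{\MCF}(G,\cK,B)/B\to 1/\mu^*$ as $B\to\infty$.

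Two easy observations then complete the proof. First, for \emph{every} $B\ge 0$, letting $\tau^{*}=\tau_{\MCF}(G,\cK,B)$ we have $B\le f(\tau^{*})\le\mu^{*}\tau^{*}$, so
\[\tau_{\MCF}(G,\cK,B)\ge B/\mu^{*}.\]
Second, by the convergence above I can pick $B_0$ large enough that $\tau_{\MCF}(G,\cK,B_0)\le 2B_0/\mu^{*}$. Combining these, for every $B\ge B_0$,
\[\tau_{\MCF}(G,\cK,B)\;\ge\;\tfrac{B}{\mu^{*}}\;=\;\tfrac{B}{B_0}\cdot\tfrac{B_0}{\mu^{*}}\;\ge\;\tfrac{1}{2}\cdot\tfrac{B}{B_0}\cdot\tau_{\MCF}(G,\cK,B_0),\]
so the lemma holds with the universal constant $c=1/2$ and the (topology-dependent) threshold $B_0$ chosen above. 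The main obstacle, as noted, is verifying superadditivity cleanly in the time-graph formulation; once that is in hand, Fekete's lemma and a one-line cut estimate do the rest.
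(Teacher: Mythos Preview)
The paper actually states Lemma~\ref{lem:MCF-large-dem} without proof; it only remarks that the statement is not a consequence of Claim~\ref{claim:tau-sub-additive} and then uses the lemma as a black box in the proof of Theorem~\ref{thm:connectivity-ub}. So there is no ``paper's own proof'' to compare against.

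That said, your argument is correct and clean. The key idea --- defining $f(\tau)$ as the largest $B$ routable in $G^{(\tau)}$, showing $f$ is superadditive by time-concatenating two optimal routings on disjoint blocks of non-memory edges (with memory edges absorbing the waiting), and bounding $f(\tau)/\tau$ via the degree cut around a single terminal --- is exactly the right abstraction for this kind of statement. Fekete's lemma then gives $f(\tau)/\tau \to \mu^* := \sup_\tau f(\tau)/\tau$, and your two concluding inequalities, $\tau_{\MCF}(G,\cK,B)\ge B/\mu^*$ for all $B$ and $\tau_{\MCF}(G,\cK,B_0)\le 2B_0/\mu^*$ for sufficiently large $B_0$, combine to give the lemma with $c=1/2$.

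A couple of minor touch-ups you should make when writing it out in full: (i) in the cut argument, take the set $\{(u,i):0\le i\le \tau\}$ (including $i=\tau$) so that no memory edges cross, which makes the bound $\deg_G(u)\cdot\tau$ on outgoing non-memory edges immediate; (ii) since the lemma asks for an \emph{integer} $B_0$, note that the convergence $\tau_{\MCF}(G,\cK,B)/B\to 1/\mu^*$ (which follows from $f(\tau)/\tau\to\mu^*$ via the sandwich $f(\tau^*-1)<B\le f(\tau^*)$) lets you choose $B_0$ to be any sufficiently large integer; and (iii) you are implicitly assuming $\mu^*>0$, which holds whenever $G$ is connected (otherwise $\tau_{\MCF}$ is infinite and the lemma is vacuous). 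None of these affect the substance of the argument.
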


We remark that the above is not implied by Claim~\ref{claim:tau-sub-additive}. In particular, note that for $n''\le n'$, Claim~\ref{claim:tau-sub-additive} only shows that $\tau_{\MCF}(G,\cK,n'')\le \tau_{\MCF}(G,\cK,n')$, which is not enough to prove Lemma~\ref{lem:MCF-large-dem}.



\begin{proof}[Proof Sketch of Theorem~\ref{thm:connectivity-ub}]
The basic idea is to run BFS with an arbitrary starting vertex in $H$. A player $s\in \cK$ is determined as the start player and $s$ picks an arbitrary node in $H_s$ as the start vertex for the BFS.\footnote{To be completely correct, we have to make sure that $H_s$ is not empty. But this can be done by a simple leader election algorithm via  a Steiner tree style protocol (where each internal node passes on one of the incoming IDs to its parent and the ID picked by the root is declared the leader), which would be smaller than the bound we are after and hence, we should be able to ignore this.}

The idea is to simulate the BFS on $H$ in our framework. Let $D$ denote the diameter of $H$. 
We will use the flooding version of BFS. In particular, $s$ sends a {\em token} to all the neighbors of its chosen vertex in $H_s$. In future phases, every node in $H$ when it first receives a token, it sends the token to all of its neighbors. If the node has already received the token, then it just ignores the future receipt of the token.\footnote{The protocol needs to figure out a termination condition. By a simple Steiner tree type protocol one can count the number of nodes that have the token and we can stop if this number does not increase. To prevent overuse of this check, we can perform this in geometrically increasing round numbers.}

Consider the layered graph corresponding to the above run of the BFS on $H$. For layer $0\le i< D$, let $H_i$ denote the subgraph of $H$ that is involved in transfer of token when building layer $(i+1)$ from layer $i$. For notational simplicity let $n_i$ be the number of nodes in layer $i$, $m_i=|E(H_i)|$ and $\Delta_i$ denote the maximum degree of any node in layer $i$. Note since $H$ is randomly node distributed, then so is $H_i$.\footnote{All the bounds used in this proof hold with high enough probability so that we can apply union bound.}

Consider the case when we are building layer $(i+1)$ from layer $i$. Then the concentration bound proved in~\cite[Lemma 4.1]{klauck} implies that the corresponding multicommodity flow problem is for $\tO(m_i/k+\Delta_i)$-bounded demands.
This implies that we can simulate the BFS with
\begin{equation}
\label{eq:bfs-ub}
\sum_{i=0}^{D} \tO\left(\tau_{\MCF}\left(G,\cK, \frac{m_i}{k}+\Delta_i\right)\right)
\end{equation}

many rounds, where we have that
\begin{equation}
\label{eq:M-H-sum}
\sum_{i=0}^{D} m_i =\Theta(M_H),
\end{equation}
and
\begin{equation}
\label{eq:Delta-H-sum}
\sum_{i=0}^{D} \Delta_i =O(N_H).
\end{equation}

Now we assume that $H$ is large enough so that
\[\frac{M_H}{D\cdot k} \ge B_0,\]
where $B_0$ is as defined in Lemma~\ref{lem:MCF-large-dem}. Now note that for every $0\le i\le D$ such that $m_i < \frac{M_H}{D}$, we have that
\[\tau_{\MCF}\left(G,\cK, \frac{m_i}{k}+\Delta_i\right) \le \tau_{\MCF}\left(G,\cK, \frac{M_H}{D\cdot k}+\Delta_i\right).\]
Thus the total contribution of all such $i$ to the bound in~\eqref{eq:bfs-ub} is at most
\[\sum_{i=0}^D \tau_{\MCF}\left(G,\cK, \frac{M_H}{D\cdot k}+\Delta_i\right)   \le O\left(\tau_{\MCF}\left(G,\cK, \frac{M_H}{k}+N_H\right)\right),\]
where the inequality follows from Lemma~\ref{lem:MCF-large-dem} and~\eqref{eq:Delta-H-sum}.
Now for all $0\le i\le D$ such that $m_i\ge \frac{M_H}{D}$, from Claim~\ref{claim:tau-sub-additive}, we have that their contribution to~\eqref{eq:bfs-ub} is $O\left(\frac{m_i}{M_H}\cdot \tau_{\MCF}\left(G,\cK, \frac{M_H}{k}+\Delta_i\right)\right)$. Then by~\eqref{eq:M-H-sum} and~\eqref{eq:Delta-H-sum}, we have that the total contribution over all such $i$ is also $\tO\left(\tau_{\MCF}\left(G,\cK, \frac{M_H}{k}+N_H\right)\right)$. 

Thus, we have argued that~\eqref{eq:bfs-ub} is upper bounded by $\tO\left(\tau_{\MCF}\left(G,\cK, \frac{M_H}{k}+N_H\right)\right)$. If $H$ is large enough, then $M_H/k\ge N_H$, which would imply the claimed upper bound.
\end{proof}

Next we briefly state how we can use the standard extensions to BFS to extend the protocol in the proof above to work for other problems. To compute the connected components, change the above protocol so that when no more vertices are added to the current component, we check by the Steiner tree based leader election protocol to pick the next starting terminal $s$ and continue till we cannot. For the acyclicity problem, the above protocol should halt whenever a node receives the token more than once. Finally for bipartiteness, we pass two kinds of tokens: one for the odd rounds and one for the even rounds of the protocol and the graph is not bipartite if and only if a node received two different kinds of tokens. (For both the latter two modifications, we might also have to go through all connected components of $H$.) All this discussion implies that

\begin{thm}
\label{thm:other-graph-ub}
Let $H$ be a random node distributed graph. Then if $H$ is large enough compared to $G$, we can solve the connected components, acyclicity and bipartiteness problems on $H$ with $\tO\left(\tau_{\MCF}\left(G,\cK,\frac{M_H+ N_H}{k}\right)\right)$ randomized rounds of communication.
\end{thm}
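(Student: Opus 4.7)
The plan is to prove Theorem~\ref{thm:other-graph-ub} by adapting the flood-BFS protocol of Theorem~\ref{thm:connectivity-ub}, modifying only the token semantics and the termination/output logic for each of the three problems. The correctness reductions are classical: for connected components, I would iterate BFS starting from a fresh unvisited vertex until all of $H$ is covered, and count the number of iterations; for acyclicity, I would flag a vertex as ``cyclic'' whenever it receives a BFS token from more than one neighbor; for bipartiteness, I would propagate two distinguishable token types whose labels track the parity of the BFS distance, and flag a vertex as a witness of non-bipartiteness if it ever receives tokens of both parities. Each BFS invocation would be bracketed by a Steiner-tree-based leader election (to select the next unvisited starting terminal) and a Steiner-tree aggregation (for termination detection and for the global AND/OR that combines the per-vertex flags); each of these auxiliary subroutines costs at most $\tO(|V|)$ rounds per invocation and is absorbed by the main bound under the ``$H$ is large enough'' hypothesis.

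For the round-complexity analysis, the key observation is that across all BFS invocations needed to cover $H$, each edge and each vertex is processed at most a constant number of times. So if $m_i$ and $\Delta_i$ denote the number of edges and the maximum degree involved at the $i$-th overall BFS layer (concatenated across all connected components), we still have $\sum_i m_i = O(M_H)$ and $\sum_i \Delta_i = O(N_H)$. Since the random node distribution is preserved, the per-layer concentration bound of~\cite[Lemma~4.1]{klauck} continues to apply, and each layer can be executed in $\tO(\tau_{\MCF}(G,\cK,m_i/k + \Delta_i))$ rounds, matching the structure of~\eqref{eq:bfs-ub}.

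The main obstacle is then the telescoping argument of Theorem~\ref{thm:connectivity-ub} carried out with these re-indexed sums. I would split the layers into those with $m_i < M_H/D$ and those with $m_i \geq M_H/D$, where $D$ is now the total number of BFS layers summed across all components (i.e., essentially the sum of the per-component diameters). The first group is controlled by Lemma~\ref{lem:MCF-large-dem} together with the hypothesis $M_H/(Dk)\geq B_0$, which is subsumed by ``$H$ large enough''; the second group is controlled by Claim~\ref{claim:tau-sub-additive} combined with $\sum_i m_i = O(M_H)$. Both contribute $\tO(\tau_{\MCF}(G,\cK,(M_H+N_H)/k))$, yielding the claimed bound and absorbing the $\tO(|V|)$ per-iteration overhead of the auxiliary Steiner-tree subroutines. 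The only subtlety beyond the single-component case is confirming that the iterated leader election does not introduce a multiplicative blowup proportional to the number of components, which follows because each leader-election succeeds in $\tO(|V|)$ rounds and the number of components is at most $N_H$, so their total cost is likewise absorbed.
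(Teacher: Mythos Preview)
Your proposal is correct and follows essentially the same approach as the paper: iterate the flood-BFS of Theorem~\ref{thm:connectivity-ub} across components via Steiner-tree leader election, detect cycles by a vertex receiving the token more than once, and detect odd cycles by propagating parity-tagged tokens. The paper's own argument is in fact terser than yours---it simply asserts these three modifications without re-deriving the telescoping round-complexity bound across components---so your more explicit layer-by-layer accounting with the re-indexed sums $\sum_i m_i = O(M_H)$ and $\sum_i \Delta_i = O(N_H)$ is a faithful elaboration of what the paper leaves implicit.
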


\subsection*{Acknowledgments}

We would like to thank Jaikumar Radhakrishnan for helpful discussions at the early stage of this paper and to \href{https://simons.berkeley.edu/programs/inftheory2015}{Simons institute's Information theory program} for providing the venue for these discussions.


\newcommand{\etalchar}[1]{$^{#1}$}

\appendix

\section{A useful 2-party communication complexity result}
Consider the following 2-party problem. Alice (Bob) gets $k$ ($k'$) strings $\vx_1,\ldots, \vx_k$ ($\vy_1,\ldots,\vy_{k'}$), each $n$-bit long. They have to determine if one of Alice's strings is the same as that of one of Bob's, i.e. does there exist a pair $(i,j)$, such that $\vx_i = \vy_j$. (Note that this is same as checking whether the sets $\{\vx_1,\dots,\vx_k\}$ and $\{\vy_1,\dots,\vy_{k'}\}$ are disjoint.) Let us denote this problem as $\SD_n^{k , k'}$. 

\begin{thm}  \label{thm:EQ-k-by-k'}
The deterministic 2-party communication complexity of $\SD_n^{k, k'}$ is $\Omega\big(\min\set{k,k'}\cdot n\big)$ for $k,k' = 2^{o(n)}$.
\end{thm}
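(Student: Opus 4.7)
The plan is to exhibit a structured subfamily of inputs on which $\SD_n^{k,k'}$ collapses to a $k$-fold disjunction of equalities, and then apply the log-rank lower bound to that subfunction. WLOG assume $k\le k'$, so we aim for $\Omega(kn)$.

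First I would build the restriction. Partition $\{0,1\}^n$ into pairwise disjoint sets $V_0,V_1,\dots,V_k$ with $|V_0|=k'-k$ and $|V_i|=M:=\lfloor(2^n-(k'-k))/k\rfloor$ for $i\in[k]$. Since $k,k'=2^{o(n)}$, we have $M\ge 2^n/(2k)$, hence $m:=\lfloor\log_2 M\rfloor \ge n-\log_2 k - O(1)=(1-o(1))n$. Restrict Alice to inputs with $\vx_i\in V_i$ for each $i\in[k]$ and restrict Bob to inputs with $\vy_i\in V_i$ for $i\in[k]$ and $\vy_{k+1},\dots,\vy_{k'}$ fixed to $k'-k$ distinct elements of $V_0$. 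Because the $V_i$'s are pairwise disjoint, any collision $\vx_i=\vy_j$ forces $j\le k$ and $j=i$, so under this restriction $\SD_n^{k,k'}(\vx,\vy)=1$ iff there is some $i\in[k]$ with $\vx_i=\vy_i$. That is, the restricted subfunction is $g(\vx,\vy):=\bigvee_{i=1}^k\EQ_m(\vx_i,\vy_i)$ on pairs of $k$-tuples of $m$-bit strings (identifying each $V_i$ with a subset of $\{0,1\}^m$). A deterministic protocol for $\SD_n^{k,k'}$ immediately yields one for $g$ with no additional communication, so $D(\SD_n^{k,k'})\ge D(g)$.

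Next I would lower-bound $D(g)$ via the log-rank method. Complementing the output changes deterministic CC by at most $1$, so it suffices to lower-bound $D(\bar g)$, where $\bar g(\vx,\vy)=\bigwedge_{i=1}^k\mathbf{1}[\vx_i\ne\vy_i]$. Its real communication matrix factors as a Kronecker product
\[
 M_{\bar g} \;=\; \bigotimes_{i=1}^k (J-I),
\]
where $J$ and $I$ are the $2^m\times 2^m$ all-ones and identity matrices. The matrix $J$ has eigenvalues $2^m$ and $0$ with multiplicities $1$ and $2^m-1$, so $J-I$ has eigenvalues $2^m-1$ and $-1$ and is non-singular of rank $2^m$. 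Since rank is multiplicative under Kronecker products, $\mathrm{rank}(M_{\bar g})=(2^m)^k=2^{km}$, and the log-rank bound gives $D(\bar g)\ge\log_2\mathrm{rank}(M_{\bar g}) = km$.

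Combining the two steps, $D(\SD_n^{k,k'})\ge D(g)\ge D(\bar g)-1\ge km-1\ge k(n-\log_2 k-O(1))=\Omega(kn)$, using $\log_2 k=o(n)$. I do not anticipate any real obstacle: the only calculation needing any care is the rank of $J-I$ (and hence of the Kronecker product) in Step~2, and that is completely standard; everything else is a routine restriction/reduction.
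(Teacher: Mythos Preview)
Your proposal is correct and follows essentially the same route as the paper: both reduce (after setting aside $k'-k$ strings for Bob's extra slots) to inputs where $\vx_i,\vy_i$ lie in disjoint blocks, yielding the subfunction $\bigwedge_i \mathrm{NEQ}(\vx_i,\vy_i)$, and then apply the log-rank bound via the Kronecker-product factorization of the communication matrix. The only cosmetic differences are that you round the block sizes down to a power of two and make the complementation step explicit, whereas the paper works directly with blocks of size $(2^n-(k'-k))/k$ and phrases the subproblem as $\text{AND}\circ\text{NEQ}$ from the start.
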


\begin{proof}
WLOG assume $k \le k'$. Pick some $t=k'-k$ strings from $\{0,1\}^n$. Let the set of remaining $2^n -t$  strings be called $T$. Alice and Bob each get $k$ strings from $T$ in the following way: partition $T$ into $k$ equal disjoint chunks, $T_1,\ldots,T_k$. Consider the problem where Alice and Bob each get $k$ strings, $\vx_1,\ldots,\vx_k$ and $\vy_1,\ldots,\vy_k$ respectively, with $\vx_i,\vy_i \in T_i$. They have to determine if for all $i$, $\vx_i \neq \vy_i$. Clearly if Alice and
Bob had a deterministic protocol of cost $c$ for solving $\SD_n^{k , k'}$, then they would also be able to solve this new problem $P$ with cost $c$ just as a special case. Note that $P$ is essentially $\text{AND}\circ \text{NEQ}$. The $i$-th NEQ instance has a Boolean matrix of dimension $|T_i| \times |T_i|$ whose rank is $|T_i| = \frac{2^n-t}{k}$. Then, $\text{AND} \circ \text{NEQ}$ matrix is the tensor product of these $k$ matrices. So its rank is $\left(\frac{2^n-t}{k}\right)^k$. Using the fact that communication
is lower bounded by the log of rank, we get that $c \ge k(\log(2^n- t) - \log k)$. The claim follows.
\end{proof} 

\end{document}